\documentclass[11pt,a4paper]{article}
\usepackage[T1]{fontenc}
\usepackage[utf8]{inputenc}
\usepackage[british]{babel}
\usepackage{amsmath}
\usepackage{amsfonts}
\usepackage{amssymb}
\usepackage{amsthm}
\usepackage[margin=3.3cm]{geometry}
\usepackage[tt=false, type1=true]{libertine}
\usepackage[varqu]{zi4}
\usepackage[libertine]{newtxmath}
\usepackage[cal=euler,scr=boondox]{mathalfa}
\usepackage[square,numbers]{natbib}
\usepackage{bm}
\usepackage{dsfont}
\usepackage{enumitem}
\usepackage{authblk}
\usepackage{hyperref}
\usepackage{doi}

\theoremstyle{plain}
	\newtheorem{theorem}{Theorem}[section]
	\newtheorem{lemma}[theorem]{Lemma}
	\newtheorem{proposition}[theorem]{Proposition}
	\newtheorem{corollary}[theorem]{Corollary}
\theoremstyle{definition}
	\newtheorem{definition}[theorem]{Definition}	
\theoremstyle{remark}
	\newtheorem{remark}[theorem]{Remark}
	\newtheorem{example}[theorem]{Example}

\DeclareMathAlphabet{\mathbbold}{U}{bbold}{m}{n}
\newcommand*{\1}{\mathbbold{1}}	
\newcommand*{\0}{\mathbbold{0}}

\let\citep\cite
\newcommand{\topo}{\mathcal{T}}
\newcommand{\ann}{\mathcal{A}}
\newcommand{\base}{\mathcal{B}}
\newcommand{\subbase}{\mathcal{S}}
\newcommand{\powset}{\mathcal{P}}
\newcommand{\val}{\mathcal{V}}
\newcommand{\evid}{\mathcal{E}}
\newcommand{\Int}{\mathit{Int}}
\newcommand{\For}{\mathit{For}}
\newcommand{\Prop}{\mathit{Prop}}
\newcommand{\F}{\mathrm{F}}
\newcommand{\M}{\mathrm{M}}
\newcommand{\A}{\mathrm{A}}
\newcommand{\E}{\mathrm{E}}
\newcommand{\Bel}{\mathrm{B}}
\newcommand{\Belop}{\mathit{Bel}}
\newcommand{\Kn}{\mathrm{K}}
\newcommand{\Knop}{\mathit{Kn}}

\newcommand{\f}{\varphi}
\newcommand{\ff}{\psi}
\newcommand{\xto}{\xrightarrow}
\newcommand{\tot}{\leftrightarrow}
\newcommand{\sem}[1]{\llbracket #1 \rrbracket}
\newcommand{\semM}[1]{\llbracket #1 \rrbracket_{\bm{M}}}
\newcommand{\proves}{\mathrel{|}\joinrel\mkern-.5mu\mathrel{-}}		

\title{\bfseries Knowledge on a Budget}

\author[1]{Ondrej Majer}
\author[2]{Krishna Manoorkar}
\author[2]{Wolfgang Poiger}
\author[2]{Igor Sedl\'ar}
\affil[1]{Institute of Philosophy, Czech Academy of Sciences,\authorcr Prague, Czech Republic}
\affil[2]{Institute of Computer Science, Czech Academy of Sciences,\authorcr Prague, Czech Republic}

\date{ }

\begin{document}

\maketitle

\begin{abstract}
In various computational systems, accessing information incurs time, memory or energy costs. However, standard epistemic logics usually model the acquisition of evidence as a cost-free process, which restricts their applicability in environments with limited resources.  
In this paper, we bridge the gap between qualitative epistemic reasoning and quantitative resource constraints by introducing semiring-annotated topological spaces (seats). Building on Topological Evidence Logic (TEL), we extend the representation of evidence as open sets, adding an annotation function that maps evidence to semiring ideals, representing the resource budgets sufficient for observation. This framework allows us to reason not only about what is observable in principle, but also about what is affordable given a specific budget. 
We develop a family of seat-based epistemic logics with resource-indexed modalities and provide sound, strongly complete axiomatisations for these logics. Furthermore, we introduce suitable notions of bisimulation and disjoint union to delineate the expressive power of our framework.
\end{abstract}


\maketitle






\section{Introduction}

In typical computational systems, \emph{knowledge rarely comes for free}. Whether it is a database query engine, a security protocol or a robotic agent, accessing information consumes resources, whether that be time, memory, energy or money. However, standard epistemic logics usually treat the acquisition of evidence as a cost-free, idealized process. This discrepancy limits the applicability of logical models to real-world scenarios in which agents must operate within strict resource constraints. 
The central issue addressed in this article is how to integrate quantitative resource constraints into a qualitative logic of evidence and justification.

We build upon \emph{Topological Evidence Logic} (TEL), a robust framework that models epistemic concepts using the tools of topology; see \citep{BaltagEtAl2016,BaltagEtAl2019,
BaltagEtAl2022,BaltagEtAl2022a,BaltagEtAl2025a,BaltagEtAl2025,BjorndahlOzgun2019,
FernandezGonzalez2018,OzgunEtAl2025} for example. TEL represents evidence by open sets in a topological space. This approach is rooted in the topological semantics for modal logic \cite{McKinseyTarski1944} and is consistent with the use of topology to model observable properties in the domain-theoretic foundations of programming semantics \cite{Abramsky1991,Smyth1992,Vickers1989}. 
TEL extends topological modal logic by providing tools for reasoning about what an agent can justify and know, given the available evidence. Given a topological space $\langle X, \topo \rangle$, a hypothesis $P \subseteq X$ is \emph{justified} if $U \subseteq P$ for a dense open set $U$ (evidence for $P$ that is consistent with all available evidence), and \emph{known} at a state $x \in X$ if the evidence $U$ is truthful ($x \in U$). 
However, classical TEL remains `\emph{resource-blind}', assuming any open set is accessible regardless of cost.

We bridge this gap by introducing \emph{semiring-annotated topological spaces} (seats) as a representation of resource use within topological models of evidence. Seats extend topological spaces $\langle X, \topo\rangle$ with an annotation function $\ann_K\colon \topo \times X \to \powset (K)$ for a semiring of resources $K$; the intuition is that $\ann_K(U, x)$ is the collection of resources $a \in K$ that are sufficient to access evidence $U \in \topo$ at state $x \in X$. Intuitively, this allows us to reason not just about what is \emph{observable} in principle, but what is \emph{affordable} given a budget. 
We demonstrate that this framework is not only a natural extension of TEL, but also unifies structures from diverse areas of computer science, such as programming semantics, database security, robotics and distributed systems. 
Our main contribution is the development of \emph{seat-based epistemic logics}, with formulas $\F_a\f$ which intuitively express the availability of evidence for $\f$ given the resource $a$, as well as with formulas expressing TEL-style epistemic justification and knowledge in the resource-constrained setting. 

Our main technical results include strong completeness theorems for the minimal seat-based epistemic logic as well as for various extensions characterizing naturally defined classes of seats. That is, we provide sound and strongly complete axiomatizations of logics capable of reasoning about the trade-offs between the precision of knowledge and the cost of the underlying observations.
Furthermore, we define suitable notions of bisimulation and disjoint unions in our framework and use them to delineate the expressive capabilities of these logics.

Although several frameworks incorporate resource constraints into epistemic logic, typically by annotating Kripke or neighbourhood models with numerical costs, our approach offers two advantages. Firstly, by using arbitrary semirings instead of specific numerical scales, we obtain a general framework capable of modelling non-linear resource structures, such as security clearances or multi-dimensional budgets. Secondly, by situating our framework within TEL, we connect resource-aware epistemic logics and existing work on the topology of observable properties. A more detailed discussion of related work is provided in Section~\ref{sec:RelatedWork}.

The paper is structured as follows. In Section~\ref{sec:motivation}, we recall the TEL framework, present some examples of its use, and motivate the need for a representation of resources. In Section~\ref{sec:seats}, we introduce semiring-annotated topological spaces (seats), which provide a rigorous mathematical foundation for our intuitive considerations. Section~\ref{sec:Logics and completeness I} introduces seat-based epistemic logics and establishes soundness and strong completeness for logics based on various natural classes of seats. In Section~\ref{sec:Logics and completeness II}, we extend the logical language of Section~\ref{sec:Logics and completeness I} with the  global modality and show how the resulting framework gives rise to resource-aware generalizations of the core epistemic operators of \cite{BaltagEtAl2022}. We also establish soundness and strong completeness for our logics extended with the global modality. In Section~\ref{sec:Undefinability}, we prove undefinability results using appropriately defined notions of disjoint unions and bisimulations. Section~\ref{sec:RelatedWork} discusses related work in more detail. We conclude in Section~\ref{sec:Conclusion}, providing a summary and an outlook for future work. Full proofs and further details are provided in Appendix~\ref{app}.

\section{Motivation}\label{sec:motivation}

In this section, we outline the TEL framework (Section \ref{sec:motivation1}) and argue for extending it to include a representation of the resources used to obtain evidence (Section \ref{sec:motivation2}). For more details about TEL, we refer the reader to \cite{BaltagEtAl2022}.

\subsection{Topology of evidence}\label{sec:motivation1}
We assume that the reader is familiar with basic topological notions such as topological space, open set, interior, closure, density, basis, subbasis, etc.\ (see \cite{Munkres2000}, for example). 

Let $\langle X, \topo\rangle$ be a topological space where $\topo$ is generated by a basis $\base$ with a subbasis $\subbase$. It is suggested in \cite{BaltagEtAl2016,BaltagEtAl2022} to view elements of $\subbase$ as pieces of \emph{direct evidence} and elements of $\base$ as \emph{combined evidence}. Open sets $U \in \topo$ correspond to \emph{arguments}, or collections of evidence that can be used to support a conclusion. 
A proposition $P \subseteq X$ is \emph{supported} by $U \in \topo$ if $U \subseteq P$. Thus, the interior $\Int (P)$, is the weakest argument supporting $P$. Equivalently, $\Int (P)$ can be seen as the proposition stating that there is truthful evidence supporting $P$. 
On the evidential reading, an argument $U \in \topo$ is \emph{dense} in $\langle X, \topo\rangle$ if it is consistent with all non-empty (i.e.\ consistent) arguments $V \in \topo$; a dense $U$ cannot be contradicted by any consistent evidence. 
Baltag et al.\ \cite{BaltagEtAl2016,BaltagEtAl2022} link the topological notion of density to the notions of \emph{epistemic justification} and \emph{knowledge}. 
A hypothesis $P \subseteq X$ is \emph{justified} if there is a dense open $U \in \topo$ such that $U \subseteq P$ (i.e.\ $U$ supports $P$ and cannot be refuted by any consistent evidence); and $P$ is \emph{known} at $x \in X$ if there is a dense open neighborhood of $x$ that supports $P$, i.e.\ a dense $U \in \topo$ such that $U \subseteq P$ and $x \in U$. 

Recall that if $\topo$ is generated by a basis $\base$, then $U \in \topo$ is dense iff $B \cap U \neq \varnothing$ for all $B \in \base {\setminus} \{ \varnothing \}$. Equivalently, $U = \bigcup_{i \in I} B_i$ for $\{ B_i \}_{i \in I} \subseteq \base$ is dense iff for all $B \in \base {\setminus} \{ \varnothing \}$ there is $B_i$ such that $B \cap B_i \neq \varnothing$. Hence, $P$ is justified if there is a collection  $\{ B_i \}_{i \in I} \subseteq \base$ such that all $B_i$ support $P$, and for each `objection' $B \in \base {\setminus} \{ \varnothing \}$ there is a `response' $B_i$ consistent with $B$.

The following examples illustrate the TEL approach.

\begin{example}[Observing binary streams \cite{Smyth1992,Vickers1989}]\label{exam:streams}
    Consider a device that outputs a binary sequence, such as a server or a sensor. Countable (finite or infinite) words $w \in \{ 0, 1 \}^{\infty}$ represent the outputs the device would yield if given infinite time; finite words $w \in \{ 0, 1 \}^{*}$ represent finite observations of these outputs. Recall that $w \in \{ 0, 1 \}^{\infty}$ is a directed-complete partially ordered set under the prefix order $\sqsubseteq$, where $w \sqsubseteq u$ means that $w$ is a prefix of $u$. 
    For any set $O \subseteq \{ 0, 1 \}^{*}$ of possible observations such that $\epsilon \in O$, the collection of ${\uparrow}w = \{ u \in \{ 0, 1 \}^{\infty} \mid w \sqsubseteq u \}$ for $w \in O$ forms a basis for a topology $\topo_O$ on $\{ 0, 1 \}^{\infty}$, which is typically coarser than the Scott topology on $\{ 0, 1 \}^{\infty}$ (a special case where $O = \{ 0, 1 \}^{*}$). This is the topology of observable properties of binary words, given the set $O$ of possible observations. 
    An open set is dense if it is consistent with every possible observation. The set of possible observations is given by the context and may depend on the `actual computation' $w \in \{ 0, 1\}^{\infty}$. For instance, we can have $O(w) = \{ v \in \{ 0, 1 \}^{*} \mid v \sqsubseteq w \}$. In this case, $P \subseteq \{ 0, 1\}^{\infty}$ is known at $w \in \{ 0, 1\}^{\infty}$ if ${\uparrow} w \subseteq P$.
\end{example}

\begin{example}[Role-Based Access Control in databases \cite{SandhuEtAl1996,BonattiEtAl2002}] \label{exam:databases}
    Fix a set $\mathit{DB}$ of databases and a finite set $R$ of \emph{user roles}. Let $d\colon R \to \powset (\mathit{DB})$ be the permission assignment, mapping each role to its accessible databases. 
    Let $\Omega$ be the state space and $p\colon \powset (\mathit{DB}) \to \powset (\Omega)$ an anti-monotone function mapping a set of databases to the states consistent with their content. The composition $pd = p \circ d \colon R \to \powset (\Omega)$ represents the view of the system available to a specific role. 
    We model \emph{qualifications} as subsets of roles. Let $\mathit{QL} \subseteq \powset(R)$ be a collection of role sets closed under union and intersection. A qualification $a \in \mathit{QL}$ represents a requirement, for instance `the user must possess one of the roles in $a$'. 
    A proposition $P \subseteq \Omega$ is \emph{qualification-observable} if there exists a qualification $a \in \mathit{QL}$ such that $pd(r) \subseteq P$ for all $r \in a$. 
    Intuitively, if a user satisfies the qualification $a$ (possesses \emph{some} role in $a$), they can verify $P$ regardless of which specific role in $a$ they hold. This captures a notion of robust access. 
    The collection of qualification-observable propositions forms the basis of a topology $\topo_\mathit{QL}$ on $\Omega$. 
    Within this topology, a proposition $P$ is \emph{justified} if it is qualification-observable (since $\topo_{\mathit{QL}}$ is closed under supersets) and its complement cannot be observed by any non-empty qualification. This corresponds to a property that is verifiable by some group of users and cannot be refuted by any other group.
\end{example}

\begin{example}[Exploring a graph \cite{DudekEtAl1991}] \label{exam:graphs}
   Let $\langle V, E \rangle$ be a connected graph and $\Omega$ a state space, e.g.\ the set of all graphs on $V$. Assume that every $v \in V$ provides some information or `local perspective' on the `global state'. This may be represented by a function $f \colon V \to \powset (\Omega)$, where $f(v)$ is the set of global states consistent with the information available in $v$ (e.g.\ the number of its neighbors). 
   The function $f$ can be lifted to paths $t \in V^{*}$ over $\langle V, E \rangle$ by defining $f(\langle v_1, \ldots, v_n \rangle) = \bigcap_{i = 1}^{n} f(v_i)$. Intuitively, this corresponds to a robot traversing the path $t = \langle v_1, \ldots, v_n \rangle$, observing the vertices along the path and the information they provide. 
   The function $f$ can also be lifted to sets of paths $L \subseteq V^{*}$ by defining $f(L) = \bigcap_{t \in L} f(t)$. Intuitively, $f(L)$ is the information obtained by traversing all paths in $L$, e.g.\ by a group of robots in a parallel exploration of the graph. Note that $f(\varnothing) = \Omega$.  
   The set $\{ f(L) \mid L \text{ is a finite set of paths in } \langle V, E \rangle \}$ is the basis of a topology $\topo_f$ on $\Omega$. Intuitively, this is the topology of properties of the global state that can be observed locally by a group of robots exploring the graph. We may require that all `legal' paths start in some fixed `initial vertex' $v_0$. 
   Now assume that $\langle V, E \rangle$ is the `actual state'  and $v_0 \in V$ is the  initial vertex. A property $P \subseteq \Omega$ is known (or, better, `knowable') if $\langle V, E \rangle \in P$ (the actual state has the property), there is a collection $\{ L_i \}_{i \in I}$ of finite sets of paths $L_i$ on $\langle V, E \rangle$ starting in $v_0$ such that $f(L_i) \subseteq P$ (the property is verifiable by a number of finite explorations of the graph) and $f(L) \cap f(L_i) \neq \varnothing$ for each finite non-empty set of paths $L$ (the findings of each possible `counter-exploration' $L$ are consistent with the finding of some of the options in $\{ L_i \}_{i \in I}$). 
\end{example}

\begin{example}[Agents in distributed systems \cite{FaginEtAl1995}]\label{exam:agents}
    Let $A$ be a set of agents in a distributed system and  $\Omega$ be the set of all global states of the system (both sets may be infinite). We assume the standard partition model: for each agent $n \in A$, let $\sim_n$ be an equivalence relation on $\Omega$ where $s \sim_n s'$ indicates that agent $n$ cannot distinguish state $s$ from $s'$. 
    For a finite group of agents $G \subseteq A$, we define $[s]_G = \bigcap_{n \in G} [s]_n$, representing the \emph{distributed knowledge} of the group $G$ in $s$, i.e.\ the combined information of the group. 
    The collection of all $[s]_G$ for $s \in \Omega$ and finite $G \subseteq A$ forms a basis for a topology $\topo_A$ on $\Omega$. 
    This topology captures the properties of the system that are observable by an external observer $N$, who can query the distributed knowledge of any group. 
    A proposition $P \subseteq \Omega$ is justifiable if there is a collection $\{ [s_i]_{G_i} \}_{i \in I}$ such that $\bigcup_{i \in I} [s_i]_{G_i} \subseteq P$ ($N$ considers it possible that $P$ is distributed knowledge in groups $G_i$) such that for all $[t]_H \neq \varnothing$ there is $i \in I$ with $[s_i]_{G_i} \cap [t]_H \neq \varnothing$ -- no matter which group $H$ an adversary consults or which potential state $t$ they propose, their distributed knowledge is consistent with at least one piece of evidence for $P$. 
    The proposition $P$ is known at $s$ if the above holds and $s \in [s_i]_{G_i}$ for some $i \in I$.
\end{example}

\subsection{Resources}\label{sec:motivation2}
In practice, accessing evidence requires \emph{resources}. Real-life agents operate within resource \emph{budgets}, meaning that the amount of resources they can spend on obtaining evidence is limited. Consequently, a finer-grained, resource-aware notion of justification comes to the forefront.

\smallskip

\emph{Example~\ref{exam:streams}, continued.}
Observable properties of words are established by observing finite words $w \in O$; the resource spent is the \emph{time} needed to observe a given finite word. We may represent the time needed to observe $w \in O$ by its \emph{length} $|w| \in \mathbb{N}$. 
We say that time $n$ is sufficient to obtain evidence $U \in \topo_O$ if there is $w \in O$ such that ${\uparrow}w \subseteq U$ and $|w| \leq n$. We express this by writing $n \to U$. 
Note that `$\to$' has a number of general properties, for instance: (i) \emph{Resource strengthening:} if $n \to U$ and $m \in \mathbb{N}$, then $\max (n, m) \to U$ (`if a resource is sufficient for $U$, then any stronger resource is also sufficient for $U$'); and (ii) \emph{Evidence weakening:} if $n \to U$ and $U \subseteq V$ for an open set $V$, then $n \to V$ (`if a resource is sufficient for $U$, then it it sufficient for any weaker evidence'). 

\smallskip

\emph{Example \ref{exam:databases}, continued. }      
Evidence is obtained by fulfilling a \emph{qualification}, so we may think of $\mathit{QL}$ as the collection of resources. We say that $a \in \mathit{QL}$ is sufficient to obtain $U \in \topo_\mathit{QL}$ if $pd(r) \subseteq U$ for all $r \in a$. As before, we indicate this by $a \to U$ and observe that analogues of properties (i) and (ii) hold here : (i) if $a \to U$, then $a \cap b \to U$ for all $b \in \mathit{QL}$ (`if satisfying $a$ gives access to information that supports $U$, then satisfying ``$a$ and $b$'' does so as well'); and (ii) if $a \to U$ and $U \subseteq V$ for an open set $V$, then $a \to V$. 
Additional properties are discernible: (iii) \emph{Resource choice:} if $a \to U$ and $b \to U$, then $a \cup b \to U$ (`if satisfying $a$ and satisfying $b$ are both sufficient for $U$, then satisfying ``$a$ or $b$'' is sufficient'); and (iv) \emph{Resource combination:} if $a \to U$ and $b \to V$, then $a \cap b \to (U \cap V)$ (`if $a$ and $b$ are sufficient for $U$ and $V$, respectively, then their combination, ``$a$ and $b$'', is sufficient for the combined evidence $U \cap V$'). 
Note that analogues of (iii) and (iv) hold in Example \ref{exam:streams} as well if `$n$ or $m$' is represented by $\min (n,m)$ and `$n$ together with $m$' is represented by either $\max(n,m)$ or $n + m$ (the former is sufficient: if $v \in {\uparrow}w$ and $v \in {\uparrow}u$ for $|w| \leq n$ and $|u| \leq m$, then both $w$ and $u$ are prefixes of $v$).

\smallskip

\emph{Example \ref{exam:graphs}, continued. } 
Assume that $\langle V, E \rangle$ is a \emph{weighted} graph, say with positive rational edge weights. The weight $E(v,w)$ could represent physical distance, battery requirements, etc.\ 
Let $E(\langle v_1, \ldots, v_n \rangle) = \sum_{i = 1}^{n-1} E(v_i, v_{i+1})$ for a path and let $E(L) = \sum_{t \in L} E(t)$ for a finite set of paths $L$ (an `exploration'). 
We say that $q \in \mathbb{Q}_{>0}$ is sufficient to obtain the observable $U$ iff there is an exploration $L$ such that $f(L) \subseteq U$ and $E(L) \leq q$. We write $q \to U$ as before. We leave it to the reader to verify that the properties (i) -- (iv) identified in the first two examples also hold here, if `$q_1$ or $q_2$' is $\min(q_1, q_2)$ and `$q_1$ together with $q_2$' is $q_1 + q_2$. 
We add a fifth property: (v) $0 \to \Omega$, meaning that the tautologous open set $\Omega$ is available `for free' (note that if $L = \varnothing$, then $f(L) = \Omega$ and $E(L) = 0$; it is known that $\Omega$ holds without exploring the graph at all). Analogues of this property hold in the previous examples: $0 \to \{ 0, 1 \}^{\infty} = {\uparrow}\epsilon$ in Example \ref{exam:streams}, where $|\epsilon| = 0$ and $R \to \Omega$ in Example \ref{exam:databases}, since $pd(r) \subseteq \Omega$ for all $r \in R$.

\smallskip

\emph{Example \ref{exam:agents}, continued. } 
Resources used by the external observer $N$ to observe the system are \emph{groups of agents} $G \subseteq A$. We must distinguish between local and global sufficiency of a group $G$ for an open set $U$. We say that $G$ is sufficient for $U$ at state $s$, denoted by $G \xto{s} U$, if $[s]_G \subseteq U$; on the other hand, $G$ is sufficient for $U$ globally, $G \to U$, if $G \xto{s} U$ for some $s$.  
As before, the analogues of the general properties (i) -- (v) hold for $\xto{s}$, however some do not hold for $\to$. In particular, if $G \xto{s} U$ and $H \xto{s} V$, then $G \cup H \xto{s} (U \cap V)$, but the variant of this property with $\to$ instead of $\xto{s}$ fails since we can have $[s]_G \cap [t]_H = \varnothing$ and $\varnothing \neq [u]_D$ for all $u \in \Omega$ and $D \subseteq A$. Note that (v) holds since $\Omega = [s]_\varnothing$ for all $s \in \Omega$.

\smallskip

The following framework emerges. Firstly, in each example the collection of resources forms an algebra, namely a (zero-free) \emph{semiring} $\langle K, \oplus, \odot, \1 \rangle$. The operation $\odot$ represents resource \emph{combination} (resource $a\odot b$ can be read as `\emph{$a$ together with $b$}' or `\emph{$a$ and $b$}') and the operation $\oplus$ represents resource \emph{choice} (resource $a \oplus b$ can be read as `\emph{$a$ or $b$}'). As usual, we use $ab$ instead of $a \odot b$. The multiplicative unit $\1$ represents using \emph{no resources} (note that using the combination $a\1$ means using $a$). In Example~\ref{exam:streams}, we use $\langle \mathbb{N}, \min, \max, 0 \rangle$, in Example~\ref{exam:databases} we use the distributive lattice $\langle \mathit{QL}, \cup, \cap, R \rangle$, in Example~\ref{exam:graphs} we use $\langle \mathbb{Q}_{\geq 0}, \min, +, 0 \rangle$ and in Example~\ref{exam:agents} we use the unusual $\langle \powset (A), \cup, \cup, \varnothing \rangle$. 
The use of semirings to provide an abstract representation of resources is widespread in computer science, ranging from constraint satisfaction problems \cite{BistarelliEtAl1997} to automata theory \cite{KuichSalomaa1986} and databases \cite{GreenEtAl2007}.

Secondly, each example includes a topological space $\langle X, \topo \rangle$ where the open sets correspond to observable properties or propositions. Open sets $U \in \topo$ are \emph{annotated with subsets of the respective semirings $K$}; sometimes it is natural to parametrize the annotation by a state $x \in X$. Intuitively, $a \xto{x} U$ (meaning that $a \in K$ belongs to the annotation of $U \in \topo$ relative to $x \in X$) expresses the idea that resource $a$ is sufficient to obtain evidence $U$ at state $x$. The annotations satisfy a number of properties (for all $a \in K$, $U, V \in \topo$ and $x \in X$):
\begin{itemize}
    \item[(i)] \emph{Resource strengthening:} if $a \xto{x} U$, then $ab \xto{x} U$ and $ba \xto{x} U$ for all $b \in K$ (if $a$ is sufficient for $U$, then `$a$ together with $b$' is sufficient, in any order); 
    \item[(ii)] \emph{Evidence weakening:} if $a \xto{x} U$ and $U \subseteq V$, then $a \xto{x} V$ (if $a$ is sufficient for $U$, then $a$ is sufficient for any weaker $V$);
    \item[(iii)] \emph{Resource choice:} if $a \xto{x} U$ and $b \xto{x} U$, then $a \oplus b \xto{x} U$ (if both $a$ and $b$ are sufficient for $U$, then whichever is chosen will be sufficient);
    \item[(iv)] \emph{Resource combination:} if $a \xto{x} U$ and $b \xto{x} V$, then $ab \xto{x} (U \cap V)$ and $ba \xto{x} (U \cap V)$ (if $a$ is sufficient for $U$ and $b$ is sufficient for $V$, then `$a$ together with $b$' is sufficient for $U \cap V$);
    \item[(v)] \emph{Available tautologies:} $a \xto{x} X$ for some $a \in K$ (tautologous evidence can be obtained).
\end{itemize}

Let $\ann (U, x) = \{ a \mid a \xto{x} U \}$. Note that conditions (i) and (iii) imply that $\ann (U, x)$ is an \emph{ideal} of $K$ (possibly empty or improper). Conditions (iv) and (v) imply that, for all $x \in X$, the set $\{ U \mid \ann (U, x) \neq \varnothing \}$ is a basis for a topology $\topo(x)$ on $X$, which is coarser than $\topo$.

Our examples satisfy a stronger variant of condition (v), namely $\1 \xto{x} X$, which states that `tautologous evidence is free'. However, we do not generally assume this stronger form, since it is easy to imagine situations where substantial resources need to be spent on obtaining tautologous evidence (consider hard mathematical proofs, for example).

It is usually assumed that semirings contain an additive identity that is also a multiplicative annihilator, that is, an element $\0 \in K$ such that $a \oplus \0 = a$ and $a\0 = \0a = \0$ for all $a \in K$. Intuitively speaking, $\0$ represents the strongest resource (using `$a$ or $\0$' is equivalent to using $a$ since $a$ is contained in $\0$; using `$a$ together with $\0$' is equivalent to using $\0$). We note that in the presence of condition (i) above, condition (v) can be equivalently formulated as
\begin{itemize}
    \item[(v')] $\0 \xto{x} X$. 
\end{itemize}
This turns out to be a more practical formulation of the underlying assumption, hence we use the standard definition of semiring in computer science (i.e.\ with zero) in a manner that is consistent with our resource-based interpretation.

Turning to justification, assume that an agent has a limited resource budget. The agent may only be able to observe words up to a certain length (Example \ref{exam:streams}), satisfy a specific set of qualifications (Example \ref{exam:databases}), have only limited resources for exploring a graph (Example \ref{exam:graphs}) and consult specific groups of agents (Example \ref{exam:agents}). \emph{What can the agent justify and know?} 
It is reasonable to assume that if the agent can justify a proposition $P \subseteq X$, then there is some $P$-supporting evidence $U \in \topo$ that \emph{can be obtained within the agent's budget}. However, as in TEL, we should also require the agent to be able to \emph{defend their evidence against relevant counterexamples}. The set of these counterexamples itself is typically limited by a resource budget, in general belonging to an `opponent' (e.g.\ other agents, `nature' or the justifying agent itself).

We arrive at the notions of \emph{justification and knowledge on a budget}. These notions are relevant in situations where evidence must be both proposed and opposed, but both processes are limited by resource restrictions. 

In Example~\ref{exam:streams}, one can imagine the set of relevant observations $O$ being generated by a group of `verifiers' that need to be mutually consistent. In Example~\ref{exam:databases}, one may be interested in what information can be justified by users satisfying specific qualifications (e.g.\ to prevent them from observing certain security-sensitive facts). In Example~\ref{exam:graphs}, justification on a budget corresponds to asking which hypotheses about the structure of the graph are `stable' under specific limitations on explorations (e.g.\ time or battery consumption). In Example~\ref{exam:agents}, budget restrictions pertain to limits of group communication, as only certain groups of agents are assumed to be able to provide evidence (i.e.\ to pool their information to arrive at distributed knowledge).

\section{Semiring-annotated topological spaces}\label{sec:seats}

In this section, we introduce semiring-annotated topological spaces formally (Section~\ref{sec:seats1}), look at some special classes of these structures (Section~\ref{sec:seats2}), and investigate epistemic propositional operators based on them (Section~\ref{sec:seats3}). These operators will play a crucial role in the semantics of the epistemic logics introduced in Sections~\ref{sec:Logics and completeness I} and~\ref{sec:Logics and completeness II}.

\subsection{Seats}\label{sec:seats1}

We now introduce structures in which open sets of a topological space are annotated by elements of a semiring. While all examples in Section~\ref{sec:motivation} use idempotent and commutative semirings, there is no mathematical need to limit our framework to this specific class.

\begin{definition}[Seats]\label{def:seat}
A \emph{semiring-annotated topological space} (\emph{seat}) is a tuple $\bm{S}=\langle X, \topo, \ann_K\rangle$ where $\langle X, \topo\rangle$ is a topological space, $K= \langle K, \oplus, \odot, \0,\1 \rangle$ is a semiring and \[ \ann_K \colon \topo \times X \to \powset (K) \, \] satisfies the following (for all $a \in K$, $U, V \in \topo$, and $x \in X$):
\begin{eqnarray}
& a \in \ann_K(U,x) \!\And\! b \in K \,\Rightarrow\, ab, ba \in \ann_K(U,x) \label{A(i)}\\
& U \subseteq V \,\Rightarrow\, \ann_K(U,x) \subseteq \ann_K(V, x) \label{A(ii)}\\
& a, b \in \ann_K(U, x) \,\Rightarrow\, a \oplus b \in \ann_K (U,x) \label{A(iii)}\\
& a \in \ann_K(U,x) \!\!\And\!\! b \in \ann_K(V, x) \Rightarrow ab \in \ann_K(U \cap V, x) \label{A(iv)}\\
& \0 \in \ann_K(X, x) \label{A(v)}
\end{eqnarray}
We define $\evid_a(x) = \{ U \in \topo \mid a \in \ann_K(U, x) \}$ and $\base(x) = \{ U \in \topo \mid \ann_K(U,x) \neq \varnothing \}$.
\end{definition}

Intuitively, $\ann_K(U, x)$ is the set of resources that are sufficient for obtaining evidence $U$ at state $x$. As noted above, $\ann_K(U,x)$ is an ideal of $K$ (possibly empty or improper). If $\ann_K(U, x) = \varnothing$, then $U$ cannot be obtained in $x$. We will write $\ann_x(U)$ instead of $\ann_K(U, x)$ when $K$ is clear from the context or immaterial. 
The set $\evid_a(x)$ comprises pieces of evidence that can be obtained using resource $a$ at state $x$ and $\base(x)$ contains evidence that can be obtained using \emph{some} resource in $x$. As noted above, $\base(x)$ is a basis for a topology on $X$ which we will denote by $\topo(x)$. 
Note that $\base(x) = \evid_\0(x)$.

\begin{remark}
The term `semiring-annotated topological space' is an analogy to semiring-annotated databases \cite{GreenEtAl2007} and the interpretation of $\oplus$ in terms of resource choice is directly inspired by the corresponding usage therein. 
\end{remark}

We now specify the seat used to formalize Example~\ref{exam:streams}; seats related to the remaining examples of Section~\ref{sec:motivation} are discussed later on (Examples~\ref{exam:3.6}--\ref{exam:3.8}).

\begin{example}
    Let $\langle \{ 0, 1 \}^{\infty}, \topo \rangle$ be $\{ 0, 1 \}^{\infty}$ with the Scott topology. We assume that for each $w \in \{ 0, 1 \}^{\infty}$ we have a set $O(w) \subseteq \{ 0, 1 \}^{*}$ of possible observations of $w$ such that $\{ u \sqsubseteq w \mid u \in \{ 0, 1 \}^{*} \} \subseteq O(w) \subseteq \{ 0, 1 \}^{*}$. That is, $O(w)$ contains all `correct' finite observations but may also include additional observations representing `noise'. Let $K = \langle \mathbb{N} \cup \{ \infty\}, \min, \max, \infty, 0 \rangle$ and define \[ \ann_K (U, w) = \{ a \in K \mid \exists u\in O(w) \colon  {\uparrow}u \subseteq U \!\And\! |u| \leq a \}. \] 
    It is easily verified that this is a seat.
\end{example}

\begin{remark}[Strength preorders]\label{rem:strength}
`Resource strengthening' can be seen as monotonicity with respect to a `strength preorder', though in arbitrary semirings this can have different interpretations. For one, the \emph{right} (resp.\ \emph{left}) \emph{multiplicative preorder}, where $a \le_R b$ iff $\exists c\colon b = ac$ (resp.\ $a \le_L b$ iff $\exists c\colon b = ca$), meaning `$b$ can be obtained as $a$ together with some resource $c$'. Monotonicity under this preorder corresponds to property \eqref{A(i)}.

Alternatively, we may consider the \emph{additive preorder} $a \sqsubseteq b$ iff $\exists c\colon b = a \oplus c$. This is the `standard' preorder on semirings (which is a partial order if $\oplus$ is idempotent). We can take $\sqsubseteq^{-1}$ as a strength preorder (note that $a \sqsubseteq^{-1} \0$ for all $a$) where $b \sqsubseteq^{-1} a$ read as `using $a$ is equivalent to using $b$ or some other resource $c$'. If $\oplus$ is idempotent, this takes the (more natural) meaning $b \sqsubseteq^{-1} a$ iff $a \oplus b = b$ (`using $a$ or $b$ is equivalent to using $b$'). The latter implies that $b$ is `contained in' $a$ (since even by using $a$ one uses $b$ as well). 
Since $\sqsubseteq^{-1}$ is natural only in a restricted class of semirings (idempotent ones) and the associated monotonicity condition defines a restricted kind of semiring ideal (strong ideal), we do not assume $\sqsubseteq^{-1}$-monotonicity as a basic condition. 
However, in many semirings the multiplicative strength preorder coincides with the additive one, e.g.\ in Examples~\ref{exam:streams}--\ref{exam:graphs}.
\end{remark}

\subsection{Special seats}\label{sec:seats2}

Some of the following seat properties are reasonable in specific contexts. 

\begin{definition}\label{def:SpecialSeats}
A seat $\langle X, \topo, \ann_K\rangle$ is 
\begin{itemize}
\item \emph{strong} if $\ann_x(U)$ is a strong ideal for all $U \in \topo$, that is, $a \oplus b \in \ann_x(U) \,\Rightarrow\, a,b \in \ann_x(U)$;
\item \emph{$\1$-bounded} if $\1 \in \ann_x(X)$ for all $x \in X$ (consequently, $\ann_x(X) = K$ for all $x \in X$);
\item \emph{$\0$-bounded} if $\0 \in \ann_x(\varnothing)$ for all $x \in X$ (consequently, $\0 \in \ann_x(U)$ for all $U \in \topo$);
\item \emph{uniform} if $\ann_x(U) = \ann_y(U)$ for all $x,y \in X, U\in \topo$.
\end{itemize}
A seat is \emph{bounded} if it is both $\0$-bounded and $\1$-bounded.
\end{definition}
   
Strong seats represent the idea that $\ann_K$ is monotonic under the additive strength preorder $\sqsubseteq^{-1}$ (see Remark \ref{rem:strength}).
In $\1$-bounded seats, tautologous evidence $X$ can always be obtained using `no resource' $\1$ (i.e.\ $X$ is `for free') and so $\ann_x(X)$ is the improper ideal $K$. In $\0$-bounded seats, contradictory evidence $\varnothing$ can be accessed using the strongest resource $\0$; in many settings, this reflects the idea that $\0$ represents an inaccessible (`infinite') resource. Using \eqref{A(ii)}, in $\0$-bounded seats every piece of evidence can be obtained using \emph{some} resource and, consequently, $\topo(x) = \topo$ for all $x$. 
In uniform seats, the resources needed to obtain evidence do not depend on the state. Alternatively, uniform seats can be construed as containing an `actual state' $x_0 \in X$ and parametrizing the annotation function \emph{according to that state}. For uniform seats, we will write $\ann(U)$ instead of $\ann_x(U)$.

\begin{example}\label{exam:3.6}
    An example of a strong, uniform and bounded seat 
    is derived from Example~\ref{exam:databases}, where $\mathit{QL} \subseteq \mathcal{P}(R)$ forms a bounded distributive lattice, the topological space is $\langle \Omega, \topo_{\mathit{QL}} \rangle$ and $\ann(U) = \{ a \in \mathit{QL} \mid \forall r\in a \colon pd(r) \subseteq U \}$. 
\end{example}

\begin{example}\label{exam:3.7}
    To obtain a seat from Example \ref{exam:graphs}, fix a non-weighted graph $\langle V, D \rangle$, a weight function $E \colon D \to \mathbb{Q}_{\geq 0}$ and a starting point $v \in V$. Let $\Omega$ be the set of pairs $\langle G(V), u \rangle$, where $G(V)$ is a non-weighted graph on $V$ and $u \in V$ is a designated `starting state' of $G(V)$. The semiring of weights is $K = \langle \mathbb{Q}_{\geq 0}^{\infty}, \min, +, \infty, 0 \rangle$. 
    Recall from Example \ref{exam:graphs} and its continuation in Section \ref{sec:motivation2} that, given a local information function $f\colon V \to \powset (\Omega)$, the topology $\topo_f$ on $\Omega$ is generated by the set of $f(L)$ for $L$ a finite set of paths on $\langle V, D \rangle$. We define $\ann_x (U)$ as comprising $a \in K$ such that there is $L$ that starts in $v$ such that $f(L) \subseteq U$ and $E(L) \leq a$, where $v$ is the starting state of the graph $x \in \Omega$. This seat is strong and bounded, but it is not uniform.
\end{example}

\begin{example}\label{exam:3.8}
    The seat in Example \ref{exam:agents} (its exact formulation is left to the reader) is neither uniform nor strong. The latter holds by our choice of semiring: $U$ can be distributed knowledge in group $G \cup H$ without being distributed knowledge in $G$ or $H$. The seat is also not $\0$-bounded, since $\0 = \varnothing$. 
\end{example}

It is natural to define the \emph{cost} of a piece of evidence as the weakest resource that provides that piece of evidence. However, such  a  resource may not always exist.

\begin{definition}\label{def:CostSeat}
A seat $\langle X, \topo, \ann_K\rangle$ is a \emph{cost seat} if $K$ is idempotent and complete\footnote{That is, $\bigsqcup S$ exists for all $S \subseteq K$ and $\odot$ distributes over $\bigsqcup$ from both sides. In idempotent semirings, we write $\sqcup$ instead of $\oplus$.} and $\bigsqcup \ann_x(U) \in \ann_x(U)$, for all $U \in \topo$ and $x \in X$. 
The \emph{cost of $U$ in $x$} is then given by $\bigsqcup \ann_x(U)$. 
\end{definition}

Note that in a cost seat $\ann_x(U) \neq \varnothing$ since $\bigsqcup \varnothing = \0$. Given the intuitive reading of $\sqsubseteq^{-1}$ as an additive strength preorder, $\bigsqcup \ann_x(U)$ can be seen as the `weakest resource' that is contained in all $a \in \ann_x(U)$.

\begin{example}
    The seat corresponding to Example \ref{exam:streams} is not a cost seat since we can have $\ann_w(U) = \varnothing$. However, it can be turned into a cost seat by closing each $\ann_w(U)$ under suprema (the semiring of costs is complete). Given $O(w)$, the cost of $U$ in $w$ is the length of the shortest $u \in O(w)$ such that ${\uparrow}u \subseteq U$. This is $\infty$ if there is no such $u$. 
    The seat corresponding to Example \ref{exam:databases} is a cost seat if $K$ is finite; the cost of $U$ is the join of $\ann(U)$, that is, the weakest qualification that allows to observe $U$. 
    The seat corresponding to Example \ref{exam:graphs} is not a cost seat since the semiring of weights $\mathbb{Q}^{\infty}_{\geq 0}$ is not complete. 
    Finally, the seat in Example \ref{exam:agents} is a cost seat, but a peculiar one: the cost of $U$ in $s$ is the set of all agents.
\end{example}

\begin{example}[Cost seats from Borel measures]\label{exam:BorelCost}
Take a structure $ \langle X, \topo, \Sigma, \{ \mu_x \}_{x \in X} \rangle $ where $\langle X, \topo \rangle$ is a topological space, $\langle X, \Sigma \rangle$ is a measurable space such that $\topo \subseteq \Sigma$, and $\mu_x \colon \Sigma \to \mathbb{R}_{\geq 0}$ is a measure for all $x \in X$. Let $K$ be the complete tropical semiring of extended non-negative real numbers $\langle \mathbb{R}_{\geq 0} \cup \{ \infty \}, \inf, +, \infty, 0 \rangle$. Define $\ann_K (U, x) = \{ a \in \mathbb{R}_{\geq 0} \cup \{ \infty \} \mid \mu_x(X {\setminus} U) \leq a \}$. Then $\langle X, \topo, \ann_K \rangle$ is a cost seat where $\mu_x(X {\setminus} U)$ is the cost of $U$ in $x$. Intuitively, the cost of $U \in \topo$ in $x$ is the $x$-relative measure of the complement of $U$ (how easy it is to `miss' $U$). As a particular example, one may consider $\mu_x$ to be probability measures  -- for instance, the probability of a transition from $x$ ending up in a particular $P \in \Sigma$. In this example, cost is the amount of `risk' one is able to tolerate. This example provides a link to \emph{Markovian logics} \cite{HeifetzMongin2001,Zhou2007,Zhou2014,KozenEtAl2013}, discussed in Section \ref{sec:RelatedWork}. 
\end{example}

It can also be shown that cost seats arise from continuous functions between topological spaces: if $f \colon X \to X'$ is continuous, then $f^{-1}(U)$ is a cost of $U$ in the sense of our definition (the collection of open sets in $X$ is a frame and hence a complete semiring). However, the intuitive interpretation of arbitrary topologies as `costs' is more problematic. (In the context of dynamical topological logic \cite{KremerMints2005}, `next $U$' can possibly be interpreted  as the cost of $U$.)

\subsection{Epistemic operators}\label{sec:seats3}

In this section, we define several epistemic propositional operators on seats. These will be used in seat-based epistemic logics, which are studied in the following sections.  

\begin{definition}[The $\For$ operator]\label{def:For}
For a seat $\langle X, \topo, \ann_K\rangle$ and $a \in K$, we define $\For_a \colon \powset (X) \to \powset (X)$ as follows: 
$$
\For_a(P) = \{ x \mid \exists U \in \topo\colon U \subseteq P \And U \in \evid_a (x) \}.
$$
\end{definition}

Intuitively, $\For_a(P)$ is the set of states in which some evidence supporting $P$ is accessible using resource $a$ (`one can obtain $P$ \emph{for} $a$').

\begin{definition}[Weighted interior]
For a seat $\langle X, \topo, \ann_K\rangle$ and $a \in K$, we define $\Int_a \colon \powset (X) \to \powset (X)$ as follows: 
\[ \Int_a(P) = \{ x \mid \exists U \in \topo\colon x \in U \subseteq P \And U \in \evid_a(x)\}.\]
\end{definition}

$\Int_a(P)$ is the set of states in which some \emph{factive} evidence supporting $P$ is accessible using $a$ (`one can verify $P$ for $a$'). 

\begin{lemma}\label{lem:SoundnessHelp}
The following hold in all seats $\langle X, \topo, \ann_K\rangle$:
\begin{enumerate}
\item $P \subseteq Q \,\Rightarrow\, \For_a(P) \subseteq \For_a(Q)$;
\item $\Int_a (P) = \For_a (P) \cap \Int(P)$;
\item $\For_a \Int(P) = \For_a (P)$;
\item $\For_a(P) \cap \For_b(Q) \subseteq \For_{ab}(P \cap Q)$.
\end{enumerate}
\end{lemma}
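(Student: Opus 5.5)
All four items follow by unfolding Definition~\ref{def:For} and the definition of $\Int_a$, together with the seat axioms of Definition~\ref{def:seat}. I take them in order, since item~3 uses the openness fact behind item~2.

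For item~1, let $x \in \For_a(P)$, witnessed by $U \in \topo$ with $U \subseteq P$ and $U \in \evid_a(x)$. Then $U \subseteq Q$, so the same $U$ witnesses $x \in \For_a(Q)$. No seat axiom is needed here.

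For item~2, the inclusion from left to right is immediate. A witness $U$ with $x \in U \subseteq P$ and $U \in \evid_a(x)$ shows $x \in \For_a(P)$, and since $U$ is an open neighbourhood of $x$ inside $P$, also $x \in \Int(P)$.

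The converse is the only step needing care. Suppose $x \in \For_a(P) \cap \Int(P)$. Take a witness $U \in \evid_a(x)$ with $U \subseteq P$. Separately, $x \in \Int(P)$, but this gives no resource information, so the open set $U \cup \Int(P)$ is the natural candidate. Indeed $x \in U \cup \Int(P)$, and $U \cup \Int(P) \subseteq P$. Since $U \subseteq U \cup \Int(P)$, evidence weakening~\eqref{A(ii)} gives $a \in \ann_x(U \cup \Int(P))$. Hence $U \cup \Int(P)$ witnesses $x \in \Int_a(P)$. The point is to enlarge $U$ by $\Int(P)$ so that it contains $x$, which \eqref{A(ii)} permits at no extra cost.

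For item~3, the inclusion $\For_a\Int(P) \subseteq \For_a(P)$ follows from item~1, since $\Int(P) \subseteq P$. Conversely, any witness $U$ for $x \in \For_a(P)$ is open with $U \subseteq P$, so $U \subseteq \Int(P)$. Thus the same $U$ witnesses $x \in \For_a\Int(P)$.

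For item~4, let $x \in \For_a(P) \cap \For_b(Q)$, with witnesses $U \subseteq P$, $U \in \evid_a(x)$ and $V \subseteq Q$, $V \in \evid_b(x)$. Then $U \cap V$ is open and $U \cap V \subseteq P \cap Q$. By resource combination~\eqref{A(iv)}, $ab \in \ann_x(U \cap V)$. Hence $U \cap V$ witnesses $x \in \For_{ab}(P \cap Q)$.

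I expect no real obstacle. The only non-mechanical step is the converse in item~2, which is handled by the union trick above.
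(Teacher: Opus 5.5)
Your proof is correct and follows the paper's argument essentially step for step. In item~2 you enlarge the $a$-witness $U$ by $\Int(P)$, where the paper uses an arbitrary open $V$ with $x \in V \subseteq P$, and then apply \eqref{A(ii)}. Item~3 rests on the fact that open subsets of $P$ lie in $\Int(P)$, and item~4 is exactly \eqref{A(iv)}.
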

\begin{proof}
1.\ Follows immediately from Definition~\ref{def:For}.

2. If $x \in \Int_a (P)$, then there is $U \in \topo$ such that $x \in U \subseteq P$ and $a \in \ann_x (U)$. Then clearly $x \in \For_a(P)$ and $x \in \Int(P)$. Conversely, if $x \in \For_a (P) \cap \Int (P)$, then there is $U \in \topo$ such that $U \subseteq P$ and $a \in \ann_x(U)$ and there is $V \in \topo$ such that $x \in V \subseteq P$. Then $a \in \ann_x(U \cup V)$ by \eqref{A(ii)} and $x \in U \cup V$; it follows that $x \in \Int_a(P)$.

3. Follows from the fact that $U \subseteq \Int (P)$ iff $U \subseteq P$, for all $U \in \topo$ and $P \subseteq X$. 

4. If $x \in \For_a(P) \cap \For_b(Q)$, then there are $U \in \evid_a(x)$ and $V \in \evid_b(x)$ with $U \cap V \subseteq P \cap Q$. By \eqref{A(iv)}, $U \cap V \in \evid_{ab}(x)$.
\end{proof}

It follows from the previous lemma that $\Int_a(P) \subseteq P$ and $\Int_a(P) \subseteq \Int_a(Q)$ if $P \subseteq Q$. However, $\Int_a$ is not necessarily a Kuratowski interior operator. The properties that fail are $\Int_a(X) = X$, $\Int_a(P) \cap \Int_a(Q) \subseteq \Int_a(P \cap Q)$ and $Int_a(P) \subseteq \Int_a(\Int_a(P))$; details are discussed in Appendix \ref{app:Inta}.

As discussed in Section~\ref{sec:motivation}, density is a crucial concept in TEL since it formalizes the notion of epistemic justification. In our resource-sensitive setting, these notions have compelling generalizations.

\begin{definition}
Let $\langle X, \topo, \ann_K\rangle$ be a seat. Given $a \in K$ and $x \in X$, we say a subset $S \subseteq X$ is \emph{$a$-dense in $x$} iff
\[ \forall U \in \evid_a(x) \colon U \neq \varnothing \implies S \cap U \neq \varnothing.\]
\end{definition}

Intuitively, $S$ is $a$-dense in $x$ iff it is consistent with all consistent pieces of evidence $U \in \topo$ that can be obtained using resource $a$ in $x$ (all $U$ within budget $a$). Note that $S$ is $\0$-dense in $x$ iff it is dense in the topology $\topo(x)$. 
In $\0$-bounded seats, $S$ is dense in $\topo$ in the standard sense iff $S$ is dense in $\topo(x)$ for arbitrary $x$ iff $S$ is $\0$-dense in $x$ for arbitrary $x$. 

\begin{definition}\label{def:BelAndKn}
For a seat $\langle X, \topo, \ann_K\rangle$ and any $a,b \in K$, we define $\Belop^a_b, \Knop^a_b \colon \powset (X) \to \powset (X)$ as follows:
\begin{align*}
x \in \Belop^a_b(P) &\text{ iff } \exists U \in \evid_a(x)\colon U \subseteq P {\And} U \text{ is $b$-dense in } x\\
x \in \Knop^a_b (P) &\text{ iff } \exists U \in \evid_a(x)\colon x \in U \subseteq P {\And} U \text{ is $b$-dense in } x
\end{align*}
\end{definition}

Intuitively, $x \in \Belop^a_b(P)$ if there is evidence $U$ supporting $P$ that can be obtained with budget $a$ in state $x$ and that is consistent with all consistent evidence that can be obtained with budget $b$. Thus, $P$ is justifiable given a budget $a$ for {\em finding} supporting evidence and a budget $b$ for finding {\em counterarguments} against it. Likewise, $x \in \Knop^a_b(P)$ if $P$ can be known given a budget $a$ for finding {\em truthful} supporting evidence and a budget $b$ for finding counterarguments against it.

In $\0$-bounded seats, these operators generalize the belief and knowledge operators of TEL \cite{BaltagEtAl2022}, which correspond to the cases $\Belop^\0_\0$ and $\Knop^\0_\0$, respectively. However, our setting is much more general and able to express more fine-grained notions of belief and knowledge.

\begin{example}
    For a `small' $\epsilon \in K$, $x \in \Belop^\epsilon_\epsilon (P)$ means that at $x$ there is `cheap' evidence for $P$ that is consistent with all (consistent) `cheap' evidence. This expresses the notion of \emph{superficial justification} by agents that are in a `rush' or whose justificatory budget is limited for another reason (e.g.\ think of the spread of unsubstantiated claims on social networks). 
    Note that $x \in \Belop^\epsilon_\epsilon (P) \cap X {\setminus} \Int(P)$ means that $P$ can be justified superficially but there is no \emph{factive} evidence to support it. This suggests a relation to the notion of \emph{disinformation}; $P$ can then be seen as a piece of misleading (false) information that some agents tend to accept without hesitation.

    One can interpret $x \in \Belop^\epsilon_b (P)$ as saying that there is `cheap' evidence for $P$ which however survives attacks by `more expensive' evidence. This situation arises when an agent is \emph{biased} towards $P$ (the agent does not have to exert too much effort to access evidence that supports $P$) but $P$ can be justified nevertheless.
\end{example}

\section{Logics and strong completeness}\label{sec:Logics and completeness I}

In this section, we introduce epistemic logics based on various classes of seats (we assume that the reader is familiar with standard modal logic \cite{BlackburnEtAl2001}). Our logics extend the modal logic $\mathbf{S4}$ with its topological semantics \cite{AielloEtAl2003,McKinseyTarski1944}, by adding operators that express the availability of evidence given a certain resource. The main results are strong completeness for the logic of all $K$-seats $\mathbf{S4}_K$ (Theorem~\ref{thm:CompletS4K}), the logic of all strong bounded seats $\mathbf{S4sb}_K$ (Theorem~\ref{thm:CompletS4sbK}) and the logic of all strong uniform bounded seats $\mathbf{S4sub}_K$ (Theorem~\ref{thm:CompletS4uK}).  

\begin{definition}
Let $\Prop$ be a countably infinite set of propositional variables. For a countable semiring $K$, we define the set of formulas of the language $\mathfrak{L}_K$ using the grammar
\[\f \Coloneqq p \mid \neg \f \mid \f \land \f \mid \F_a \f \mid \Box \f\]
where $p \in \Prop$ and $a \in K$. We define $\Box_a \f \coloneq \F_a \f \land \Box \f$. Other propositional operators and $\Diamond$ are defined as usual, furthermore we define $\Diamond_a \f \coloneq \neg \Box_a \neg \f$.
\end{definition}

Intuitively, the formula $\Box \f$ expresses that `there is truthful evidence supporting $\f$' and $\F_a \f$ expresses that `there is evidence for $\f$ which can be obtained using $a$'. Accordingly, the defined formula $\Box_a \f$ can be read as `there is truthful evidence for $\f$ which can be obtained using $a$'.    

\begin{definition}
A \emph{$K$-model} is a tuple $\bm{M} = \langle X, \topo, \ann_K, \val \rangle$ where $\langle X, \topo, \ann_K\rangle$ is a $K$-annotated topological space ($K$-seat) and $\val \colon \Prop \to \powset (X)$ is a valuation function. The satisfaction relation $\models$ is defined as follows:
\begin{itemize}
\item $\bm{M}, x \models p$ iff $x \in \val(p)$
\item $\bm{M}, x \models \neg \f$ iff $\bm{M}, x \not\models \f$
\item $\bm{M}, x \models \f \land \ff$ iff $\bm{M}, x \models \f$ and $\bm{M}, x \models \ff$
\item $\bm{M}, x \models \Box \f$ iff $\exists U\colon x \in U \in \topo \And U \subseteq \llbracket \f\rrbracket_{\bm{M}}$
\item $\bm{M}, x \models \F_a \f$ iff $\exists U\colon U \in \evid_a(x) \And U \subseteq \llbracket \f\rrbracket_{\bm{M}}$
\end{itemize}
where $\llbracket \f\rrbracket_{\bm{M}} = \{ x \mid \bm{M}, x \models \f \}$. Validity in $K$-models and $K$-seats is defined as expected. For $\mathsf{C}$ a class of $K$-seats we define the (local) semantic consequence relation $\models_{\mathsf{C}}$ as usual: 
$\Delta \models_\mathsf{C} \f$ iff $\bm{M},x \models \Delta \,\Rightarrow\, \bm{M},x \models \f$ for every state $x$ of every $K$-model $\bm{M}$ based on a seat in $\mathsf{C}$ (here, $\Delta \cup \{\f\}\subseteq \mathfrak{L}_K$). If $\mathsf{C}$ is the class of \emph{all} $K$-seats, we simply write $\Delta\models\f$.  
\end{definition}

Note that $\llbracket \Box \f\rrbracket_{\bm{M}} = \Int \llbracket \f\rrbracket_{\bm{M}}$ and $\llbracket \F_a \f\rrbracket_{\bm{M}} = \For_a \llbracket \f\rrbracket_{\bm{M}}$. By Lemma \ref{lem:SoundnessHelp},  it follows that $\llbracket \Box_a \f\rrbracket_{\bm{M}} = \Int_a \llbracket \f\rrbracket_{\bm{M}}$. 

\begin{definition}
Let $\mathbf{S4}_K$ be the extension of $\mathbf{S4}$ with 
\begin{gather}
    \F_a \f \to (\F_{ab} \f \land \F_{ba} \f)\label{a:Fmult}\\
    \F_a \f \land \F_b \ff \to \F_{a\oplus b}(\Box \f \lor \Box \ff)\label{a:Fplus}\\
	\F_a \f \land \F_b \ff \to \F_{ab}(\f \land \ff)\label{a:Fab}\\
    \F_\0 \top \label{a:Top}\\
	\F_a \f \to \F_a \Box \f \label{a:FBox}\\
	\dfrac{\f \to \ff}{\F_a \f \to \F_a \ff}\label{a:Fmono}
\end{gather}
The derivability relation $\proves_{\mathbf{S4}_K}$ is defined as usual. 
\end{definition}

The following lemma shows that $\mathbf{S4}_K$ is \emph{sound} with respect to the class of all $K$-seats (the proof is found in Appendix~\ref{app:SoudnessS4K}).

\begin{lemma}[Soundness $\mathbf{S4}_K$]\label{lem:SoudnessS4K}
Let $\Delta \cup\{ \f \} \subseteq \mathfrak{L}_K$. Then $\Delta\proves_{\mathbf{S4}_K}\f\,\Rightarrow\,\Delta \models \f$. 
\end{lemma}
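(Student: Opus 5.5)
The plan is the standard one: show that every axiom of $\mathbf{S4}_K$ is valid on all $K$-seats and that the rules preserve validity. Then use the local consequence relation to lift this to derivations from premises $\Delta$. Since $\Delta\proves_{\mathbf{S4}_K}\f$ means that $\bigwedge\Delta_0\to\f$ is a theorem for some finite $\Delta_0\subseteq\Delta$, it suffices to prove that every theorem is valid in every $K$-model. Validity then gives $\bm{M},x\models\Delta_0 \Rightarrow \bm{M},x\models\f$ pointwise. I will proceed by induction on the length of derivations of theorems.

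\textbf{The $\mathbf{S4}$ part.} The axioms of $\mathbf{S4}$ and the rules modus ponens and necessitation are sound, because $\semM{\Box\f}=\Int\semM{\f}$ and the topological semantics of $\mathbf{S4}$ is sound (McKinsey--Tarski). Necessitation preserves validity since $\Int(X)=X$.

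\textbf{The new rule and axioms.} The monotonicity rule \eqref{a:Fmono} preserves validity by Lemma~\ref{lem:SoundnessHelp}(1): if $\semM{\f}\subseteq\semM{\ff}$, then $\For_a\semM{\f}\subseteq\For_a\semM{\ff}$. The axioms are handled one at a time.
\begin{itemize}
\item Axiom \eqref{a:Fmult} follows from \eqref{A(i)}: if $U\in\evid_a(x)$ then $U\in\evid_{ab}(x)\cap\evid_{ba}(x)$.
\item Axiom \eqref{a:Fab} is Lemma~\ref{lem:SoundnessHelp}(4).
\item Axiom \eqref{a:Top} follows from \eqref{A(v)} with $U=X$.
\item Axiom \eqref{a:FBox} is Lemma~\ref{lem:SoundnessHelp}(3).
\end{itemize}

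\textbf{The main obstacle: axiom \eqref{a:Fplus}.} Here the two witnesses may differ, so \eqref{A(iii)} cannot be applied directly. Suppose $U\in\evid_a(x)$ with $U\subseteq\semM{\f}$ and $V\in\evid_b(x)$ with $V\subseteq\semM{\ff}$. By \eqref{A(ii)}, both $a$ and $b$ lie in $\ann_x(U\cup V)$. Then \eqref{A(iii)} gives $a\oplus b\in\ann_x(U\cup V)$. Since $U$ and $V$ are open, $U\subseteq\Int\semM{\f}$ and $V\subseteq\Int\semM{\ff}$. Hence $U\cup V\subseteq\semM{\Box\f\lor\Box\ff}$, and so $x\models\F_{a\oplus b}(\Box\f\lor\Box\ff)$. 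This explains why the disjunction is taken of boxed formulas: a plain disjunction would also work semantically, but the boxed form is what the completeness proof needs.

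With every axiom valid and every rule validity-preserving, the induction closes and the lemma follows.
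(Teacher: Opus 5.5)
Your proposal is correct and matches the paper's proof essentially step for step. The $\mathbf{S4}$ part goes via properties of $\Int$. Axioms \eqref{a:Fmult} and \eqref{a:Top} follow from \eqref{A(i)} and \eqref{A(v)}, and \eqref{a:FBox} and the rule \eqref{a:Fmono} follow from Lemma~\ref{lem:SoundnessHelp}. Axiom \eqref{a:Fplus} is handled by the same $U \cup V$ argument using \eqref{A(ii)} and \eqref{A(iii)}.

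The only differences are minor. You cite Lemma~\ref{lem:SoundnessHelp}(4) for \eqref{a:Fab}, whereas the paper applies \eqref{A(iv)} directly, which is how that item is proved anyway. You also make explicit the standard reduction from derivability from premises $\Delta$ to validity of theorems, which the paper leaves implicit.
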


To prove (strong) \emph{completeness} of $\mathbf{S4}_K$ with respect to the class of all $K$-seats, we use the \emph{canonical model} technique.

\begin{definition}\label{def:CanModelS4K}
Let  
$
\bm{M}^{\mathbf{S4}_K} = \langle X, \topo, \ann_K, \val\rangle
$,
where
\begin{itemize}
\item $X$ is the set of all maximal $\mathbf{S4}_K$-consistent theories $\Gamma \subseteq \mathfrak{L}_K$ ($| \f |$ is the set of $\Gamma \in X$ such that $\f \in \Gamma$);
\item $\topo$ is the topology on $X$ generated by the basis comprising $| \Box \f |$ for $\f \in \mathfrak{L}_K$ (see \cite[Lemma 3.2]{AielloEtAl2003});
\item $\ann_K(U, \Gamma) = \{a \in K \mid \exists \f\colon \F_a \f \in \Gamma \And |\Box \f| \subseteq U\}$;
\item $\val (p) = |p|$.
\end{itemize}
Note that $U \in \evid_a(\Gamma)$ iff $\exists \f\colon \F_a \f \in \Gamma \!\And\! |\Box \f| \subseteq U$. 
\end{definition}

We show that $\bm{M}^{\mathbf{S4}_K}$ is indeed the canonical $\mathbf{S4}_K$-model. 

\begin{lemma}\label{lem:CanModelS4K}
$\bm{M}^{\mathbf{S4}_K}$ is a $K$-model.
\end{lemma}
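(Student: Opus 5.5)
The plan is to check the clauses of Definition~\ref{def:seat} for the canonical structure one at a time. Each closure condition on $\ann_K$ should follow from a single axiom of $\mathbf{S4}_K$ together with elementary facts about truth sets in maximal consistent theories. The topological part is already settled: by \cite[Lemma 3.2]{AielloEtAl2003}, the sets $|\Box\f|$ form a basis, since $|\Box\top| = X$ and $|\Box\f|\cap|\Box\ff| = |\Box(\f\land\ff)|$ in $\mathbf{S4}$. Hence $\langle X,\topo\rangle$ is a topological space, $\ann_K$ is a well-defined map $\topo\times X\to\powset(K)$, and $\val$ is a valuation. I will repeatedly use the following facts about a maximal consistent $\Gamma$: it is closed under $\proves_{\mathbf{S4}_K}$, $|\f\land\ff| = |\f|\cap|\ff|$, $|\f\lor\ff| = |\f|\cup|\ff|$, and, by the $\mathbf{T}$ axiom, $|\Box\chi|\subseteq|\chi|$.

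Now fix $\Gamma\in X$ and $U,V\in\topo$.
\begin{itemize}
\item[\eqref{A(i)}] Suppose $a\in\ann_K(U,\Gamma)$ is witnessed by $\F_a\f\in\Gamma$ with $|\Box\f|\subseteq U$. Axiom \eqref{a:Fmult} gives $\F_{ab}\f,\F_{ba}\f\in\Gamma$, and the same $\f$ then witnesses $ab,ba\in\ann_K(U,\Gamma)$.
\item[\eqref{A(ii)}] This is immediate: a witness $\f$ with $|\Box\f|\subseteq U\subseteq V$ also witnesses membership for $V$.
\item[\eqref{A(iv)}] Take witnesses $\F_a\f\in\Gamma$ with $|\Box\f|\subseteq U$ and $\F_b\ff\in\Gamma$ with $|\Box\ff|\subseteq V$. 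Axiom \eqref{a:Fab} yields $\F_{ab}(\f\land\ff)\in\Gamma$. Since $\Box$ distributes over $\land$ in $\mathbf{S4}$, we get $|\Box(\f\land\ff)| = |\Box\f|\cap|\Box\ff|\subseteq U\cap V$.
\item[\eqref{A(v)}] Axiom \eqref{a:Top} gives $\F_\0\top\in\Gamma$, and $|\Box\top| = X$ by necessitation.
\end{itemize}

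The step that needs a little care is \eqref{A(iii)}, because choice under $\oplus$ is axiomatised through a disjunction of boxed formulas rather than directly. Suppose $\F_a\f,\F_b\ff\in\Gamma$ with $|\Box\f|,|\Box\ff|\subseteq U$. Axiom \eqref{a:Fplus} gives $\F_{a\oplus b}(\Box\f\lor\Box\ff)\in\Gamma$, so I take $\chi\coloneqq\Box\f\lor\Box\ff$ as the witness. It then remains to show $|\Box\chi|\subseteq U$. By $\mathbf{T}$,
\[
|\Box\chi|\subseteq|\chi| = |\Box\f|\cup|\Box\ff|\subseteq U,
\]
which gives $a\oplus b\in\ann_K(U,\Gamma)$. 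This is exactly why the axiom boxes the disjuncts: an unboxed $\f\lor\ff$ would not yield an inclusion into the open set $U$.

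With all five conditions verified, $\bm{M}^{\mathbf{S4}_K}$ is a $K$-model. Axioms \eqref{a:FBox} and \eqref{a:Fmono} are not needed here; I expect them to be used only in the truth lemma.
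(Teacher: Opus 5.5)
Your proof is correct and follows essentially the same route as the paper's proof in Appendix~\ref{app:CanModelS4K}. You verify each seat condition with the same axiom: \eqref{a:Fmult} for \eqref{A(i)}, the definition for \eqref{A(ii)}, \eqref{a:Fplus} with witness $\Box\f\lor\Box\ff$ plus the $\mathbf{T}$ axiom for \eqref{A(iii)}, \eqref{a:Fab} for \eqref{A(iv)}, and \eqref{a:Top} for \eqref{A(v)}.
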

\begin{proof}[Proof sketch]
Straightforward application of the axioms. For instance, \eqref{A(iv)} is established as follows. If $a \in \ann_\Gamma(U)$ and $b \in \ann_\Gamma(V)$, then $\exists\f, \ff$ such that $\F_a \f \land \F_b \ff \in \Gamma$, $| \Box \f| \subseteq U$ and $| \Box \ff| \subseteq V$. Hence, $|\Box (\f \land \ff)| = |\Box \f| \cap |\Box \ff| \subseteq U \cap V$. From \eqref{a:Fab} we obtain $\F_{ab}(\f \land \ff) \in \Gamma$, which shows $ab \in \ann_\Gamma(U \cap V)$. For the rest, see Appendix~\ref{app:CanModelS4K}.  
\end{proof}

\begin{lemma}[Truth Lemma $\mathbf{S4}_K$]\label{lem:TruthLemmaS4K}
In the canonical $\mathbf{S4}_K$-model, $|\chi| = \llbracket \chi\rrbracket_{\bm{M}^{\mathbf{S4}_K}}$ for every $\chi \in \mathfrak{L}_K$.
\end{lemma}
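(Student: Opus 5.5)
We argue by induction on the complexity of $\chi$. The cases for propositional variables, negation and conjunction are immediate from the definition of $\val$ and standard properties of maximal consistent theories. The case $\chi = \Box\f$ is the usual topological $\mathbf{S4}$ argument (as in \cite{AielloEtAl2003}). If $\Box\f \in \Gamma$, then $\Gamma \in |\Box\f|$, which is a basic open set, and $|\Box\f| \subseteq |\f| = \llbracket \f \rrbracket$ by the $\mathbf{T}$ axiom and the induction hypothesis. Conversely, suppose $\Gamma \in U \subseteq |\f|$ for some open $U$. Then there is a basic open set with $\Gamma \in |\Box\ff| \subseteq |\f|$. By the standard Lindenbaum argument, $|\Box\ff| \subseteq |\f|$ gives $\vdash \Box\ff \to \f$, hence $\vdash \Box\Box\ff \to \Box\f$. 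Using axiom $\mathbf{4}$, $\vdash \Box\ff\to\Box\f$, and therefore $\Box\f \in \Gamma$.

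The new case is $\chi = \F_a\f$. Right-to-left: suppose $\Gamma \models \F_a\f$. Then there is $U \in \evid_a(\Gamma)$ with $U \subseteq \llbracket\f\rrbracket = |\f|$. By the definition of $\ann_K$ in Definition~\ref{def:CanModelS4K}, there is $\ff$ such that $\F_a\ff \in \Gamma$ and $|\Box\ff| \subseteq U \subseteq |\f|$. As in the $\Box$ case, this yields $\vdash \Box\ff \to \f$. Applying \eqref{a:Fmono} gives $\vdash \F_a\Box\ff \to \F_a\f$. By \eqref{a:FBox} we have $\F_a\Box\ff \in \Gamma$, and hence $\F_a\f \in \Gamma$.

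Left-to-right: suppose $\F_a\f\in\Gamma$. Take $U = |\Box\f|$, which is open (indeed basic). Then $a\in\ann_K(U,\Gamma)$, witnessed by $\f$ itself. By the $\mathbf{T}$ axiom and the induction hypothesis, $U \subseteq |\f| = \llbracket\f\rrbracket$, so $\Gamma\models\F_a\f$.

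The step needing most care is the passage from the set inclusion $|\Box\ff|\subseteq|\f|$ to the derivability $\vdash \Box\ff\to\f$, which is where \eqref{a:Fmono} can be applied. This relies on the fact that every consistent formula lies in some maximal consistent theory. Lemma~\ref{lem:CanModelS4K} guarantees that the structure is a $K$-model, so the semantic clauses apply. Axiom \eqref{a:FBox} is precisely what closes the gap between witnesses of the form $|\Box\ff|$ and the formula $\F_a\f$.
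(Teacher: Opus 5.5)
Your proof is correct and matches the paper's argument essentially step for step. You proceed by induction on $\chi$. The $\Box$ case uses a basic open $|\Box\ff|$ containing $\Gamma$ together with the $\mathbf{S4}$ axioms. In the $\F_a$ case, one direction uses the witness $|\Box\f|$, and the other turns $|\Box\ff|\subseteq|\f|$ into $\vdash \Box\ff\to\f$ and then applies \eqref{a:Fmono} and \eqref{a:FBox}.
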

\begin{proof}[Proof sketch]
Structural induction on $\chi$. We prove the case $\chi = \F_a \f$ here; the rest is deferred to Appendix~\ref{app:TruthLemmaS4K}.
If $\F_a \f \in \Gamma$, then $|\Box \f| \in \evid_a(\Gamma)$ by definition and $|\Box \f| \subseteq |\f|$ by $\mathbf{S4}$. The induction hypothesis yields $\bm{M}^{\mathbf{S4}_K}, \Gamma \models \F_a \f$. 
Conversely, suppose $\bm{M}^{\mathbf{S4}_K}, \Gamma \models \F_a \f$. By the induction hypothesis $\exists U \in \topo$ such that $U \in \evid_a(\Gamma)$ and $U \subseteq |\f|$. By definition of $\evid_a(\Gamma)$, $\exists \ff$ such that $\F_a \ff \in \Gamma$ and $|\Box \ff| \subseteq U$. It follows that $|\Box \ff| \subseteq | \f |$, which means that $\Box \ff \to \f$ is provable in $\mathbf{S4}_K$. Hence, $\F_a \Box \ff \to \F_a \f$ is provable by \eqref{a:Fmono}, which means that $\F_a \ff \to \F_a \f$ is provable by \eqref{a:FBox}. Hence, $\F_a \f \in \Gamma$. 
\end{proof}


\begin{theorem}[Completeness $\mathbf{S4}_K$]\label{thm:CompletS4K}
Let $\Delta \cup \{\f\} \subseteq \mathfrak{L}_K$. Then $\Delta \proves_{\mathbf{S4}_K} \f$ if and only if $\Delta \models \varphi$.
\end{theorem}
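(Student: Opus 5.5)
The left-to-right direction is exactly Lemma~\ref{lem:SoudnessS4K}, so only strong completeness needs an argument. I would prove it by contraposition through the canonical model $\bm{M}^{\mathbf{S4}_K}$ of Definition~\ref{def:CanModelS4K}. Assume $\Delta \not\proves_{\mathbf{S4}_K} \f$. Then $\Delta \cup \{\neg\f\}$ is $\mathbf{S4}_K$-consistent: otherwise some finite $\Delta_0 \subseteq \Delta$ gives $\proves \bigwedge\Delta_0 \to \f$, and hence $\Delta \proves \f$, using only the classical propositional base and the usual compactness of derivability.

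Next I extend $\Delta \cup \{\neg \f\}$ to a maximal $\mathbf{S4}_K$-consistent theory $\Gamma$ by Lindenbaum's lemma. This works because $\mathfrak{L}_K$ is countable, since $\Prop$ and $K$ are countable, so we can enumerate the formulas and decide them one by one. Here it matters that derivability from assumptions is the local one: the rule \eqref{a:Fmono} and necessitation are applied only to theorems, never to members of $\Delta$. That keeps the deduction theorem available, and with it the standard Lindenbaum argument. Then $\Gamma$ is a point of $X$ in $\bm{M}^{\mathbf{S4}_K}$.

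By Lemma~\ref{lem:CanModelS4K}, $\bm{M}^{\mathbf{S4}_K}$ is a $K$-model, so it is based on a $K$-seat. By the Truth Lemma~\ref{lem:TruthLemmaS4K}, $\bm{M}^{\mathbf{S4}_K}, \Gamma \models \ff$ iff $\ff \in \Gamma$. So $\Gamma$ satisfies every formula of $\Delta$ and falsifies $\f$. This witnesses $\Delta \not\models \f$, which completes the contraposition.

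The only point where something could go wrong is the interaction of the deduction theorem with the rule \eqref{a:Fmono}. That interaction is settled once $\proves_{\mathbf{S4}_K}$ is read as closing only theorems under the rules, which is how local consequence is standardly defined. Everything else is routine, and the real content sits in Lemmas~\ref{lem:CanModelS4K} and~\ref{lem:TruthLemmaS4K}.
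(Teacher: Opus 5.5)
Your proposal is correct and matches the paper's proof: soundness comes from Lemma~\ref{lem:SoudnessS4K}, and completeness extends $\Delta \cup \{\neg\f\}$ by Lindenbaum (using countability of $\mathfrak{L}_K$), then applies Lemma~\ref{lem:CanModelS4K} and the Truth Lemma~\ref{lem:TruthLemmaS4K}. Your remark that $\proves_{\mathbf{S4}_K}$ is local consequence, with the rules applied only to theorems, makes explicit what the paper leaves implicit in ``defined as usual''.
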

\begin{proof}
Soundness was established in Lemma \ref{lem:SoudnessS4K}. 
For completeness, assume $\Delta \not\proves_{\mathbf{S4}_K}\f$. Then $\Delta \cup \{\neg\varphi\}$ is consistent, and thus there is a maximal $\mathbf{S4}_K$-consistent theory $\Gamma$ such that $\Delta \cup \{\neg\varphi\} \subseteq \Gamma$ (by the standard Lindenbaum lemma, which we can use since the language is countable). By Lemma~\ref{lem:TruthLemmaS4K} we get $\bm{M}^{\mathbf{S4}_K}, \Gamma \models \Delta$ but $\bm{M}^{\mathbf{S4}_K}, \Gamma \not\models \f$. Since we showed that $\bm{M}^{\mathbf{S4}_K}$ is a $K$-model in Lemma~\ref{lem:CanModelS4K}, this witnesses $\Delta \not\models \varphi$.  
\end{proof}

Next, we present an extension of this logic which characterizes seats which are \emph{strong} and \emph{bounded} (Definition~\ref{def:SpecialSeats}).   

\begin{definition}
Let $\mathbf{S4sb}_{K}$ be the extension of $\mathbf{S4}_K$ with:
\begin{gather}
\F_{a\oplus b} \f \to \F_a \f \land \F_b \f \label{a:Fplus2}\\
\F_\1 \top \label{a:Top1}\\
\F_\0 \bot \label{a:Bot0}
\end{gather}
We write $\proves_{\mathbf{S4sb}_{K}}$ for the corresponding provability relation. 
\end{definition}

To prove strong completeness of $\mathbf{S4sb}_K$ with respect to the class of strong bounded seats (denoted by $\mathsf{sb}$), we use the following characterization result.

\begin{lemma}\label{lem:CharactS4sb}
    A $K$-seat validates axioms \eqref{a:Fplus2} -- \eqref{a:Bot0} if and only if it is strong and bounded.
\end{lemma}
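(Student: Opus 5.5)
Since the statement concerns validity on a seat (all valuations), I would prove each of the three axioms separately, matching \eqref{a:Fplus2} with strength, \eqref{a:Top1} with $\1$-boundedness and \eqref{a:Bot0} with $\0$-boundedness. For each, the right-to-left direction is a routine soundness check. The left-to-right direction uses a valuation that makes the relevant formula denote a chosen open set.

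For the right-to-left direction, assume the seat is strong and bounded. For \eqref{a:Top1}: $\1\in\ann_x(X)$ and $X\subseteq\sem{\top}$, so $x\models\F_\1\top$. For \eqref{a:Bot0}: $\0\in\ann_x(\varnothing)$ and $\varnothing\subseteq\sem{\bot}$, so $x\models\F_\0\bot$. For \eqref{a:Fplus2}: if $x\models\F_{a\oplus b}\f$, there is $U\in\topo$ with $U\subseteq\sem{\f}$ and $a\oplus b\in\ann_x(U)$. Strength gives $a,b\in\ann_x(U)$, hence $x\models\F_a\f\land\F_b\f$.

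For the left-to-right direction, assume the seat validates the axioms. From \eqref{a:Top1}, at any $x$ there is an open $U$ with $\1\in\ann_x(U)$, and \eqref{A(ii)} lifts this to $\1\in\ann_x(X)$. From \eqref{a:Bot0}, there is an open $U\subseteq\sem{\bot}=\varnothing$ with $\0\in\ann_x(U)$, so $U=\varnothing$. Hence the seat is bounded.

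The main step is strength. Suppose $a\oplus b\in\ann_x(U)$ for some $U\in\topo$.
\begin{itemize}
\item Take a valuation with $\val(p)=U$. Then $\sem{p}=U$, so $x\models\F_{a\oplus b}p$.
\item Validity of \eqref{a:Fplus2} gives $x\models\F_a p$. So there is an open $V\subseteq U$ with $a\in\ann_x(V)$.
\item By \eqref{A(ii)}, $a\in\ann_x(U)$. Symmetrically $b\in\ann_x(U)$.
\end{itemize}
The only point needing care is that the witness for $\F_a p$ is some open subset of $U$, not $U$ itself. Monotonicity \eqref{A(ii)} handles exactly this.
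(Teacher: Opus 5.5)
Your proposal is correct and follows essentially the same approach as the paper: a routine soundness check for strong bounded seats, and the valuation $\val(p)=U$ combined with monotonicity \eqref{A(ii)} for the converse. You argue directly where the paper argues contrapositively, and you make explicit the step the paper leaves implicit in the strength case, namely that a witness $V\subseteq U$ for $\F_a p$ still yields $a\in\ann_x(U)$ by \eqref{A(ii)}.
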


\begin{proof}[Proof sketch]
Showing that strong bounded seats validate the axioms is straightforward by definitions. For the converse direction, if $\langle X, \topo, \ann \rangle$ is not strong, there are $a,b \in K$, $U \in \topo$ and $x \in X$ such that $a\oplus b \in \ann(U,x)$ but $a \notin \ann(U,x)$. Let $\bm{M} = \langle X, \topo, \ann, \val \rangle$ be any model with $\val(p) = U$. Then $\bm{M}, x \models \F_{a\oplus b} p$ but $\bm{M},x \not\models \F_a p$, showing that \eqref{a:Fplus2} is not valid. The converse direction for boundedness is proved similarly using axioms \eqref{a:Top1} and \eqref{a:Bot0}; see Appendix~\ref{app:charactS4sb}. 
\end{proof}
\begin{theorem}[Completeness $\mathbf{S4sb}_K$]\label{thm:CompletS4sbK}
Let $\Delta \cup \{\f\} \subseteq \mathfrak{L}_K$. Then $\Delta \proves_{\mathbf{S4sb}_K} \f$ if and only if $\Delta \models_{\mathsf{sb}} \varphi$.
\end{theorem}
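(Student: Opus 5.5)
Soundness follows from Lemma~\ref{lem:SoudnessS4K} together with the left-to-right direction of Lemma~\ref{lem:CharactS4sb}. Every $\mathbf{S4sb}_K$ theorem is derived from axioms of $\mathbf{S4}_K$, which are valid on all seats, and from \eqref{a:Fplus2}--\eqref{a:Bot0}, which are valid on strong bounded seats. The rules of $\mathbf{S4}_K$ preserve validity on any fixed class of seats. Strong (local) soundness then follows exactly as in Theorem~\ref{thm:CompletS4K}.

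For completeness I would reuse the canonical model construction of Definition~\ref{def:CanModelS4K}, now built from maximal $\mathbf{S4sb}_K$-consistent theories. I call the result $\bm{M}^{\mathbf{S4sb}_K}$. The proofs of Lemma~\ref{lem:CanModelS4K} and Lemma~\ref{lem:TruthLemmaS4K} only use axioms and rules of $\mathbf{S4}_K$, and maximal $\mathbf{S4sb}_K$-consistent theories are closed under these. So $\bm{M}^{\mathbf{S4sb}_K}$ is a $K$-model and the truth lemma $|\chi| = \llbracket\chi\rrbracket$ holds verbatim. The remaining task is to show the underlying seat is strong and bounded. Given $\Delta \not\proves_{\mathbf{S4sb}_K}\f$, Lindenbaum then yields $\Gamma \supseteq \Delta\cup\{\neg\f\}$ refuting $\Delta \models_{\mathsf{sb}} \f$.

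The three properties are checked directly from the annotation $\ann_\Gamma(U) = \{a \mid \exists \ff\colon \F_a\ff\in\Gamma \And |\Box\ff|\subseteq U\}$.

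\emph{$\1$-bounded.} By \eqref{a:Top1}, $\F_\1\top\in\Gamma$, and $|\Box\top| = X$, so $\1\in\ann_\Gamma(X)$.

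\emph{$\0$-bounded.} By \eqref{a:Bot0}, $\F_\0\bot\in\Gamma$. Since $\Box\bot$ is inconsistent in $\mathbf{S4}$, $|\Box\bot| = \varnothing\subseteq\varnothing$, so $\0\in\ann_\Gamma(\varnothing)$.

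\emph{Strong.} Suppose $a\oplus b\in\ann_\Gamma(U)$, witnessed by $\F_{a\oplus b}\ff\in\Gamma$ with $|\Box\ff|\subseteq U$. Axiom \eqref{a:Fplus2} gives $\F_a\ff,\F_b\ff\in\Gamma$. The same witness $\ff$ then shows $a,b\in\ann_\Gamma(U)$.

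I expect no serious obstacle here. The only point needing care is that the annotation was defined with the same formula witness $\ff$ in the definition. This is what makes strongness transfer without any amalgamation argument, and it means the canonical model, rather than merely validity of the axioms, has the required frame properties.
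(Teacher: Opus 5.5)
Your proof is correct and follows the paper's route. You build the canonical model over maximal $\mathbf{S4sb}_K$-consistent theories, reuse Lemmas~\ref{lem:CanModelS4K} and~\ref{lem:TruthLemmaS4K}, and check directly via \eqref{a:Fplus2}--\eqref{a:Bot0} that the canonical annotation is strong and bounded. Checking this directly is the right way, rather than invoking the validity characterisation of Lemma~\ref{lem:CharactS4sb}, since the canonical model only makes those axioms true under the canonical valuation.

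One small slip: soundness needs the `if' (right-to-left) direction of Lemma~\ref{lem:CharactS4sb}, namely that strong bounded seats validate the axioms, not the left-to-right one. The content you actually state is the correct one.
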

\begin{proof}[Proof Sketch]
Soundness follows from (the `if' part of) Lemma~\ref{lem:CharactS4sb}. Completeness is shown using the canonical model $\bm{M}^{\mathbf{S4sb}}$, defined analogously to $\bm{M}^{\mathbf{S4}_K}$ (cf.\ Definition~\ref{def:CanModelS4K}), except that $X$ is the set of all maximal $\mathbf{S4sb}_K$-consistent theories. Using axioms \eqref{a:Fplus2} -- \eqref{a:Bot0} is is straightforward to see that $\bm{M}^{\mathbf{S4sb}}$ is strong and bounded.    
\end{proof}
%
%

We now present the corresponding completeness result. 

\begin{theorem}[Completeness $\mathbf{S4sb}_{K}$]\label{thm:CompletS4sbK}
Let $\Delta \cup \{\f\} \subseteq \mathfrak{L}_K$. Then $\Delta \proves_{\mathbf{S4sb}_K} \f$ if and only if $\Delta \models_{\mathsf{sb}} \varphi$.
\end{theorem}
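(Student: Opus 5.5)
Soundness is immediate. $\mathbf{S4}_K$ is sound for all $K$-seats (Lemma~\ref{lem:SoudnessS4K}), hence in particular for strong bounded seats. The additional axioms \eqref{a:Fplus2}--\eqref{a:Bot0} are valid on every strong bounded seat by the `if' direction of Lemma~\ref{lem:CharactS4sb}. The rule \eqref{a:Fmono} preserves validity on any class of seats by Lemma~\ref{lem:SoundnessHelp}(1). Since $\models_{\mathsf{sb}}$ is a local consequence relation, the usual induction on derivations from $\Delta$ gives $\Delta\proves_{\mathbf{S4sb}_K}\f \Rightarrow \Delta\models_{\mathsf{sb}}\f$.

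For completeness I will build the canonical model $\bm{M}^{\mathbf{S4sb}}$ exactly as in Definition~\ref{def:CanModelS4K}, with $X$ now the maximal $\mathbf{S4sb}_K$-consistent theories. The proofs of Lemma~\ref{lem:CanModelS4K} and Lemma~\ref{lem:TruthLemmaS4K} use only axioms and rules of $\mathbf{S4}_K$, all of which remain available. They therefore transfer verbatim: $\bm{M}^{\mathbf{S4sb}}$ is a $K$-model, and $|\chi|=\llbracket\chi\rrbracket$ for all $\chi$. Given $\Delta\not\proves_{\mathbf{S4sb}_K}\f$, Lindenbaum's lemma yields $\Gamma\supseteq\Delta\cup\{\neg\f\}$, which is a countermodel provided the underlying seat is strong and bounded.

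The remaining work is checking the three seat properties in the canonical model.

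\emph{$\1$-bounded.} By \eqref{a:Top1}, $\F_\1\top\in\Gamma$, and $|\Box\top|=X$, so $\1\in\ann_\Gamma(X)$.

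\emph{$\0$-bounded.} By \eqref{a:Bot0}, $\F_\0\bot\in\Gamma$, and $|\Box\bot|=\varnothing$, so $\0\in\ann_\Gamma(\varnothing)$.

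\emph{Strong.} Suppose $a\oplus b\in\ann_\Gamma(U)$. Then there is $\f$ with $\F_{a\oplus b}\f\in\Gamma$ and $|\Box\f|\subseteq U$. Axiom \eqref{a:Fplus2} gives $\F_a\f,\F_b\f\in\Gamma$, and the same witness $\f$ shows $a,b\in\ann_\Gamma(U)$.

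The main point to be careful about is that the annotation is defined via witnesses $\f$ rather than via $U$ directly. Strongness works precisely because the same witness can be reused. Boundedness works because $\top$ and $\bot$ supply the witnesses for $X$ and $\varnothing$. With these checks done, the theorem follows as in Theorem~\ref{thm:CompletS4K}.
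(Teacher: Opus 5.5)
Your proposal is correct and follows the paper's route: the canonical model over maximal $\mathbf{S4sb}_K$-consistent theories, with Lemmas~\ref{lem:CanModelS4K} and~\ref{lem:TruthLemmaS4K} carried over, and strongness and boundedness read off from axioms \eqref{a:Fplus2}--\eqref{a:Bot0} using the witnesses $\f$, $\top$ and $\bot$. Your direct witness check, and your reason for why the $\mathbf{S4}_K$ lemmas carry over (their proofs use only $\mathbf{S4}_K$ axioms and rules), are cleaner than the paper's full write-up, which appeals to the left-to-right direction of Lemma~\ref{lem:CharactS4sb} (a statement about validity under all valuations) and to `conservativity'.
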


\begin{proof}
The definition of canonical $\mathbf{S4sb}_K$-model $\bm{M}^{\mathbf{S4sb}_K}$ is the same as for the canonical $\mathbf{S4}_K$-model, except for $X$, which is the set of all $\mathbf{S4sb}_K$-consistent theories. It follows from from Lemma \ref{lem:CanModelS4K}  that $\bm{M}^{\mathbf{S4sb}_K}$ is a $K$-model  and from Lemma \ref{lem:CharactS4sb} (left to right) that it is strong and bounded. As $\mathbf{S4sb}_K$ is a conservative extension of $\mathbf{S4}_K$  the Lemma \ref{lem:TruthLemmaS4K} holds for the latter too. Soundness is the right to left direction of Lemma \ref{lem:CharactS4sb}. The rest of the proof proceeds analogously to that of Theorem \ref{thm:CompletS4K}.
\end{proof}

%

We now extend this axiomatization further in order to characterize \emph{uniform} strong bounded seats (Definition~\ref{def:SpecialSeats}).    

\begin{definition}
Let $\mathbf{S4sub}_K$ extend $\mathbf{S4sb}_K$ by the axioms: 
\begin{align}
\F_a \f \to \F_\1 \F_a \f \qquad & \neg \F_a \f \to \F_\1 \neg \F_a \f \label{au:0}\\
\F_a \f \to \Box \F_a \f \qquad & \neg \F_a \f \to \Box \neg \F_a \f \label{au:Box} 
\end{align}
We write $\proves_{\mathbf{S4sub}_{K}}$ for the corresponding derivability relation.
\end{definition}

We first show that $\mathbf{S4sub}_K$ is sound with respect to the class of strong uniform bounded $K$-seats (denoted by $\mathsf{sub})$.
\begin{lemma}[Soundness $\mathbf{S4sub}_K$]\label{lem:SoundnessS4uK}
Let $\Delta \cup \{ \f \} \subseteq \mathfrak{L}_K$. Then $\Delta\proves_{\mathbf{S4sub}_K}\f\Rightarrow\Delta \models_{\mathsf{sub}} \f$.
\end{lemma}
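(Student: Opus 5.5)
We argue by induction on the length of derivations, so it suffices to show that every axiom of $\mathbf{S4sub}_K$ is valid on all strong uniform bounded seats and that the rules preserve validity on this class. Since every seat in $\mathsf{sub}$ is strong and bounded, Lemma~\ref{lem:SoudnessS4K} and the `if' direction of Lemma~\ref{lem:CharactS4sb} already give validity of all $\mathbf{S4}_K$ axioms and of \eqref{a:Fplus2}--\eqref{a:Bot0}. The same results give preservation of validity under modus ponens, necessitation for $\Box$, and \eqref{a:Fmono}. Hence only the four uniformity axioms \eqref{au:0} and \eqref{au:Box} remain. Strong (local) consequence then follows as usual: if $\Delta\proves\f$, then $\bigwedge\Delta_0\to\f$ is a theorem for some finite $\Delta_0\subseteq\Delta$, hence valid, so $\Delta\models_{\mathsf{sub}}\f$.

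The key observation is that in a uniform seat the truth value of $\F_a\f$ does not depend on the state. The definition gives $x\models\F_a\f$ iff there is $U\in\topo$ with $a\in\ann(U)$ and $U\subseteq\semM{\f}$, and this condition makes no mention of $x$, because $\ann_x=\ann_y$ for all $x,y$. So $\semM{\F_a\f}$ is either $X$ or $\varnothing$, and likewise for $\neg\F_a\f$.

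It remains to check the axioms.

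\emph{Axioms \eqref{au:Box}.} If $x\models\F_a\f$, then $\semM{\F_a\f}=X$. The open set $X$ contains $x$ and is included in $\semM{\F_a\f}$, so $x\models\Box\F_a\f$. The negated version is identical, with $\semM{\neg\F_a\f}=X$.

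\emph{Axioms \eqref{au:0}.} If $x\models\F_a\f$, then again $\semM{\F_a\f}=X$. By $\1$-boundedness, $\1\in\ann_x(X)$, so $X\in\evid_\1(x)$ and $X\subseteq\semM{\F_a\f}$, which gives $x\models\F_\1\F_a\f$. The negated version is the same with $\neg\F_a\f$.

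Only this last step uses $\1$-boundedness, and no step uses strength or $\0$-boundedness beyond the axioms already covered by Lemma~\ref{lem:CharactS4sb}. There is no real obstacle; the one point needing care is the state-independence of $\semM{\F_a\f}$, which rests entirely on uniformity.
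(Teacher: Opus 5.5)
Your proof is correct and takes essentially the same approach as the paper. Uniformity forces $\semM{\F_a\f}$ to be either $X$ or $\varnothing$, after which $\Box$ of the whole space holds trivially and $\F_\1$ of it holds by $\1$-boundedness; you also spell out the reduction of the remaining axioms and rules to Lemmas~\ref{lem:SoudnessS4K} and~\ref{lem:CharactS4sb}, which the paper leaves implicit.
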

\begin{proof}
We show validity of the formulas $\F_a \f \to \F_\1 \F_a \f$ and $\F_a \f \to \Box \F_a \f$ of (\ref{au:0}) and (\ref{au:Box}), respectively (the proofs for the remaining ones are similar). Let $\bm{M} = \langle X, \topo, \ann_K, \val \rangle$ be a uniform $K$-model and suppose $\bm{M}, x \models \F_a\f$. Then $\exists U \in \evid_a(x)$ with $U \subseteq \sem{\f}_{\bm{M}}$. By uniformity, $U \in \evid_a(y)$ for all $y \in X$, and so $\sem{\F_a\f}_{\bm{M}} = X$. Thus, by definition, we get $\sem{\F_\1 \F_a \f}_{\bm{M}} = \For_\1(\sem{\F_a\f}_{\bm{M}}) = \For_\1(X) = X$ (since $\bm{M}$ is $\1$-bounded) and similarly $\sem{\Box \F_a \f}_{\bm{M}} = \Int(X) = X$. In particular, $\bm{M}, x \models \F_\1\F_a\f$ and $\bm{M}, x \models \Box \F_a\f$ as desired.
\end{proof}

Let $\bm{M}^{\mathbf{S4sub}_K} = \langle X, \topo, \ann_K, \val\rangle$ be the canonical model defined as before, except for using $\mathbf{S4sub}_K$-theories. It is shown exactly as in the proof of Lemmas~\ref{lem:CanModelS4K} \& ~\ref{lem:CharactS4sb} that this is a strong bounded $K$-model.

However, $\bm{M}^{\mathbf{S4sub}_K}$ need not be uniform. Therefore, to prove completeness of $\mathbf{S4sub}_K$ we further `modify' this model relative to a fixed maximal consistent theory as follows (recall that $| \f |$ denotes the set of $\Gamma \in X$ with $\f \in \Gamma$).       

\begin{definition}\label{def:CanMod-S4SUB}
Let $\Lambda \in X$ be a maximal $\mathbf{S4sub}_K$-consistent theory. Firstly, we define (with $\pm \F_a\f \in \{ \F_a\f, \neg \F_a\f \}$): 
\[X^{\Lambda} = \bigcap_{\pm\F_a \f \in \Lambda} | \pm\F_a \f | \]

The \emph{canonical $\mathbf{S4sub}_K$-model} $\bm{M}^{\Lambda} = \langle X^{\Lambda}, \topo^{\Lambda}, \ann_K^{\Lambda}, \val^{\Lambda}\rangle$ is defined as follows:
\begin{itemize}
\item $\topo^{\Lambda}$ is the subspace topology on $X^{\Lambda}$ induced by $\topo$;\footnote{That is, $V \in \topo^\Lambda$ iff $\exists U \in \topo \colon U \cap X^\Lambda = V$.}
\item $\ann_K^\Lambda(V, \Gamma) = \bigcup\{\ann_K(U, \Gamma) \mid U \cap X^\Lambda \subseteq V\}$;
\item $\val^{\Lambda}(p) = |p| \cap X^{\Lambda}$.
\end{itemize}
We write $|\f|^{\Lambda}$ instead of $|\f| \cap X^{\Lambda}$.
\end{definition}

Working with these models, the following will be useful.

\begin{lemma}\label{lem:useful}
Let $\M \in \{ \Box \} \cup \{ \F_a \mid a \in K \}$ and $\f, \ff \in \mathfrak{L}_K$. Then
$(|\f|^{\Lambda} \subseteq |\ff|) \,\Rightarrow\, (|\M \f|^{\Lambda} \subseteq |\M \ff|^{\Lambda})$.
\end{lemma}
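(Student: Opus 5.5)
The plan is to turn the semantic inclusion $|\f|^{\Lambda} \subseteq |\ff|$ into a syntactic implication with finitely many ``$\Lambda$-constant'' hypotheses. Then the monotonicity rule of $\M$ together with the uniformity axioms \eqref{au:0} and \eqref{au:Box} carries it under the modality.

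\emph{Step 1 (compactness).} Suppose $|\f|^{\Lambda} \subseteq |\ff|$. Then the set $\{\f, \neg\ff\} \cup \{\pm\F_a\chi \mid \pm\F_a\chi \in \Lambda\}$ has no maximal consistent extension, since any such extension would lie in $X^\Lambda \cap |\f| \setminus |\ff|$. By Lindenbaum it is therefore inconsistent. By compactness there are finitely many literals $\lambda_1,\dots,\lambda_n \in \Lambda$ of the form $\pm\F_a\chi$ such that $\vdash_{\mathbf{S4sub}_K} \lambda \to (\f \to \ff)$, where $\lambda = \lambda_1 \land \dots \land \lambda_n$.

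\emph{Step 2 (making $\lambda$ necessary).} By \eqref{au:Box} each $\lambda_i \to \Box\lambda_i$ is a theorem. Since $\Box$ commutes with finite conjunctions in $\mathbf{S4}$, we get $\vdash \lambda \to \Box\lambda$. By \eqref{au:0} we also have $\lambda_i \to \F_\1\lambda_i$. Combining with \eqref{a:Fab} and the fact $\1\1 = \1$, this gives $\vdash \lambda \to \F_\1 \lambda$. (For $n=0$ we use \eqref{a:Top1}.)

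\emph{Step 3 (pushing through $\M$).}

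For $\M = \Box$: from $\vdash \lambda \to (\f\to\ff)$, necessitation and K give $\vdash \Box\lambda \to (\Box\f \to \Box\ff)$, hence $\vdash \lambda \to (\Box\f\to\Box\ff)$.

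For $\M = \F_a$: use \eqref{a:Fab} on $\F_\1\lambda \land \F_a\f$ to obtain $\F_{\1 a}(\lambda\land\f)$, that is, $\F_a(\lambda \land \f)$. Since $\vdash \lambda\land\f \to \ff$, rule \eqref{a:Fmono} yields $\F_a\ff$. Thus $\vdash \lambda \to (\F_a\f \to \F_a\ff)$.

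Finally, every $\Gamma \in X^\Lambda$ contains all the $\lambda_i$, since they are among the literals defining $X^\Lambda$. So if $\M\f \in \Gamma$, then $\M\ff \in \Gamma$, which gives $|\M\f|^\Lambda \subseteq |\M\ff|^\Lambda$.

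The main obstacle is Step 1: $X^\Lambda$ is an infinite intersection, so the hypothesis is genuinely semantic. The key point is that compactness reduces it to finitely many $\F$-literals, and these are exactly the formulas the uniformity axioms make ``persist'' under both $\Box$ and $\F_\1$. The $\F_a$ case additionally needs the observation that $\F_\1$ can be combined with $\F_a$ without changing the resource index, which relies on $\1$ being the multiplicative unit.
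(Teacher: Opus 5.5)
Your proof is correct and follows essentially the same route as the paper. Compactness reduces $|\f|^{\Lambda}\subseteq|\ff|$ to $\vdash \lambda\to(\f\to\ff)$ for finitely many $\F$-literals from $\Lambda$; \eqref{a:Fab} with $\1$ as unit (or $\mathbf{S4}$ for $\Box$) carries them under $\M$; and \eqref{au:0}, \eqref{au:Box} remove the resulting $\F_\1$ or $\Box$ from the literals. The differences are cosmetic: you establish $\lambda\to\Box\lambda$ and $\lambda\to\F_\1\lambda$ before applying monotonicity, where the paper does it afterwards, and you make explicit the $n=0$ case via \eqref{a:Top1} and the identity $\1 a = a$, both of which the paper leaves implicit.
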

\begin{proof}
Straightforward derivation (Appendix~\ref{app:useful}).
\end{proof}
With this, we show that $\evid_a^\Lambda$ only depends on $\Lambda$, and thus that the model $\bm{M}^\Lambda$ is uniform.   

\begin{lemma}\label{lem:CompletS4uK-help}
For any $a \in K$, $\Gamma \in X^{\Lambda}$, it holds in $\bm{M}^\Lambda$ that 
$$
\evid_a^{\Lambda}(\Gamma) =
 \{ V \in \topo^\Lambda \mid \exists \f\colon \F_a \f \in \Lambda \And |\Box \f|^{\Lambda} \subseteq V \}.
$$ 
\end{lemma}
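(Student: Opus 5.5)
We prove the two inclusions separately, after first unfolding the definitions. By Definition~\ref{def:CanMod-S4SUB}, $V \in \evid_a^\Lambda(\Gamma)$ iff $a \in \ann_K^\Lambda(V,\Gamma)$. This holds iff there is $U \in \topo$ with $U \cap X^\Lambda \subseteq V$ and $a \in \ann_K(U,\Gamma)$. By Definition~\ref{def:CanModelS4K}, that in turn holds iff there are $U$ and $\f$ with $\F_a\f \in \Gamma$, $|\Box\f| \subseteq U$ and $U\cap X^\Lambda \subseteq V$. Since we may take $U = |\Box \f|$, which is a basic open set of $\topo$, this collapses to
\[ V \in \evid_a^\Lambda(\Gamma) \iff V \in \topo^\Lambda \text{ and } \exists \f\colon \F_a\f \in \Gamma \And |\Box\f|^\Lambda \subseteq V. \]
In the forward direction we use $|\Box\f|\cap X^\Lambda \subseteq U\cap X^\Lambda \subseteq V$; in the backward direction we take $U=|\Box\f|$. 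It therefore remains to replace ``$\F_a\f \in \Gamma$'' by ``$\F_a\f\in\Lambda$''.

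The key step is to show that for every $\Gamma \in X^\Lambda$ we have $\F_a\f \in \Gamma$ iff $\F_a\f\in\Lambda$. This follows directly from the definition of $X^\Lambda$ as the intersection of all $|\pm\F_a\f|$ with $\pm\F_a\f \in \Lambda$. If $\F_a\f \in \Lambda$, then $\Gamma \in |\F_a\f|$. If $\F_a\f\notin\Lambda$, then $\neg\F_a\f \in \Lambda$ by maximality, so $\neg \F_a\f \in \Gamma$, and consistency of $\Gamma$ gives $\F_a\f\notin\Gamma$. Substituting this equivalence into the characterization above yields the claimed equality. In particular, the right-hand side does not depend on $\Gamma$.

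The proof is essentially definitional, and the axioms \eqref{au:0}--\eqref{au:Box} are not needed here. Presumably they will be needed later, when showing that $X^\Lambda$ behaves well, e.g.\ that $\Lambda \in X^\Lambda$ and that the truth lemma survives relativization via Lemma~\ref{lem:useful}. The only point needing care is the bookkeeping between $\topo$ and the subspace topology. We must make sure that the witness $U$ in the definition of $\ann_K^\Lambda$ can always be replaced by the basic open set $|\Box\f|$. This works because enlarging $U$ to any open superset of $|\Box\f|$ is harmless, by \eqref{A(ii)} applied in the canonical model, which is a seat by Lemma~\ref{lem:CanModelS4K}. Shrinking $U$ to $|\Box \f|$ is also harmless, since it only strengthens the requirement $U\cap X^\Lambda\subseteq V$ in the favourable direction. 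Finally, the right-hand side is required to consist of sets in $\topo^\Lambda$, and this is automatic on the left because $\ann_K^\Lambda$ is defined on $\topo^\Lambda$.
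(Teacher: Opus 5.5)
Your proof is correct. For the $\subseteq$ inclusion, and for the observation that $\F_a\f\in\Gamma$ iff $\F_a\f\in\Lambda$ when $\Gamma\in X^\Lambda$, it coincides with the paper's. For the $\supseteq$ inclusion you take a shorter route.

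The paper fixes an open $U\in\topo$ whose trace on $X^\Lambda$ is $V$. It uses compactness to find a finite disjunction $\chi$ of boxed formulas with $|\Box\f|^\Lambda\subseteq|\chi|\subseteq U$. It then applies Lemma~\ref{lem:useful} and \eqref{a:FBox} to get $\F_a\chi\in\Gamma$, so that this particular $U$ already belongs to $\evid_a(\Gamma)$.

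You instead observe that the union defining $\ann_K^\Lambda(V,\Gamma)$ ranges over all $U\in\topo$ with $U\cap X^\Lambda\subseteq V$. The basic open $|\Box\f|$ is therefore itself an admissible witness, and $a\in\ann_K(|\Box\f|,\Gamma)$ holds directly by Definition~\ref{def:CanModelS4K}. Your version needs neither compactness nor Lemma~\ref{lem:useful}, and hence not the axioms \eqref{au:0}--\eqref{au:Box}.

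The paper's detour proves a stronger fact: every $U\in\topo$ whose trace on $X^\Lambda$ contains $|\Box\f|^\Lambda$ is $a$-accessible at $\Gamma$ in the unrelativised canonical model. The lemma as stated does not require this. Even under a stricter definition demanding $U\cap X^\Lambda=V$, enlarging your witness to $|\Box\f|\cup U$, with $U\cap X^\Lambda=V$, would suffice.

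One small correction to your closing speculation: $\Lambda\in X^\Lambda$ is immediate from the definition of $X^\Lambda$. The uniformity axioms are needed only through Lemma~\ref{lem:useful}, e.g.\ in the $\Box$ and $\F_a$ cases of the truth lemma.
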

\begin{proof}
The $\subseteq$-inclusion is clear by definition (note that $\F_a\f \in \Gamma$ iff $\F_a\f \in \Lambda$). The $\supseteq$-inclusion is established as follows. The assumption $V \in \topo^{\Lambda}$ entails that $\exists U \in \topo$ such that $|\Box \f|^{\Lambda} \subseteq U$, and so by compactness, that $| \Box \f|^{\Lambda} \subseteq |\chi| \subseteq U$ where $\chi$ is a disjunction of $\Box \chi_i$. By Lemma~\ref{lem:useful}, $|\F_a \Box \f|^{\Lambda} \subseteq |\F_a \chi|^{\Lambda}$, whence $|\F_a \f|^{\Lambda} \subseteq |\F_a \chi|^{\Lambda}$ by \eqref{a:FBox}. Since $\F_a \f \in \Gamma$, we obtain $\F_a \chi \in \Gamma$. Now $|\chi| \subseteq U$ and so $|\Box \chi| \subseteq U$. It follows that $U \in \evid_a(\Gamma)$ and so $V \in \evid_a^{\Lambda}(\Gamma)$.
\end{proof}

We show that $\bm{M}^{\Lambda}$ is indeed a canonical model for $\mathbf{S4sub}_{K}$. 

\begin{lemma}\label{lem:CanModelS4uK}
$\bm{M}^{\Lambda}$ is a strong uniform bounded $K$-model.
\end{lemma}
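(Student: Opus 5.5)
We have to check four things: that $\langle X^\Lambda,\topo^\Lambda\rangle$ is a topological space and $\ann^\Lambda_K$ satisfies \eqref{A(i)}--\eqref{A(v)}; that it is strong; that it is bounded; and that it is uniform. The first is immediate, since subspace topologies are topologies. The route we take is to derive uniformity first from Lemma~\ref{lem:CompletS4uK-help}, and then use the resulting description of $\evid^\Lambda_a$ to verify the remaining conditions.

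\textbf{Uniformity.} For $\Gamma,\Delta\in X^\Lambda$, Lemma~\ref{lem:CompletS4uK-help} gives $\evid^\Lambda_a(\Gamma)=\evid^\Lambda_a(\Delta)$ for every $a$. Since $a\in\ann^\Lambda_K(V,\Gamma)$ iff $V\in\evid^\Lambda_a(\Gamma)$, this yields $\ann^\Lambda_K(V,\Gamma)=\ann^\Lambda_K(V,\Delta)$.

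From now on we work with the characterisation
\[
a\in\ann^\Lambda(V)\iff \exists\f\,(\F_a\f\in\Lambda \And |\Box\f|^\Lambda\subseteq V).
\]
We also use that $\Lambda\in X^\Lambda$, so $X^\Lambda\neq\varnothing$ and all the $\F$-formulas are evaluated at $\Lambda$.

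\textbf{Seat conditions \eqref{A(i)}--\eqref{A(v)}.}
\begin{itemize}
\item \eqref{A(ii)} holds trivially.
\item \eqref{A(i)} follows from axiom \eqref{a:Fmult}.
\item \eqref{A(iv)}: from $\F_a\f,\F_b\ff\in\Lambda$, axiom \eqref{a:Fab} gives $\F_{ab}(\f\land\ff)\in\Lambda$, and $|\Box(\f\land\ff)|^\Lambda=|\Box\f|^\Lambda\cap|\Box\ff|^\Lambda$.
\item \eqref{A(iii)}: if $\F_a\f,\F_b\ff\in\Lambda$ with $|\Box\f|^\Lambda,|\Box\ff|^\Lambda\subseteq V$, axiom \eqref{a:Fplus} gives $\F_{a\oplus b}(\Box\f\lor\Box\ff)\in\Lambda$. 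By $\mathbf{S4}$, $|\Box(\Box\f\lor\Box\ff)|^\Lambda\subseteq|\Box\f\lor\Box\ff|^\Lambda\subseteq V$.
\item \eqref{A(v)} follows from axiom \eqref{a:Top} with $|\Box\top|^\Lambda=X^\Lambda$.
\end{itemize}
Alternatively, these can be inherited from Lemma~\ref{lem:CanModelS4K} via the union definition of $\ann^\Lambda$. \eqref{A(ii)} and \eqref{A(iv)} transfer directly, but \eqref{A(iii)} mixes two different $U$'s, so the direct axiom argument above is cleaner.

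\textbf{Boundedness and strength.}
\begin{itemize}
\item $\1$-boundedness follows from axiom \eqref{a:Top1}.
\item $\0$-boundedness follows from axiom \eqref{a:Bot0}: $\F_\0\bot\in\Lambda$ and $|\Box\bot|^\Lambda=\varnothing$.
\item Strength: suppose $a\oplus b\in\ann^\Lambda(V)$, witnessed by $\F_{a\oplus b}\f\in\Lambda$ with $|\Box\f|^\Lambda\subseteq V$. Axiom \eqref{a:Fplus2} gives $\F_a\f,\F_b\f\in\Lambda$ with the same witness $\f$, so $a,b\in\ann^\Lambda(V)$.
\end{itemize}

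The main obstacle is already absorbed into Lemma~\ref{lem:CompletS4uK-help}, namely the compactness step that replaces an arbitrary open $U$ by a single formula. The only remaining delicacy is to apply that lemma consistently, including at $\Gamma=\Lambda$, and to make sure the witnesses are formulas in $\Lambda$ rather than in $\Gamma$. That is harmless, because membership of every $\pm\F_a\f$ is constant on $X^\Lambda$ by the definition of $X^\Lambda$.
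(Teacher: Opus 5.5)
Your proposal is correct and follows essentially the paper's route: uniformity comes from Lemma~\ref{lem:CompletS4uK-help}, and the seat conditions, strength and boundedness are then checked using witnesses at $\Lambda$. The only difference is that the paper inherits these properties from $\bm{M}^{\mathbf{S4sub}_K}$, whereas you re-derive them from the axioms; both work, and the inheritance route has no real difficulty with \eqref{A(iii)}, since the paper simply passes to $U_a \cup U_b$ and applies \eqref{A(ii)} there.
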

\begin{proof}[Proof Sketch]
Uniformity follows from Lemma~\ref{lem:CompletS4uK-help}; and $\langle X^\Lambda, \topo^\Lambda, \ann^\Lambda_K \rangle$ is shown strong and bounded similarly to Lemmas~\ref{lem:CanModelS4K} \& \ref{lem:CharactS4sb}. For instance, \eqref{A(iv)} is shown as follows (see Appendix~\ref{app:CanModelS4uK} for more details). 
If $a \in \ann_K^{\Lambda}(U)$ and $b \in \ann_K^{\Lambda}(V)$ for $U, V \in \topo^{\Lambda}$, then $a \in \ann_K(U', \Lambda)$ and $b \in \ann_K(V', \Lambda)$ for some $U', V' \in \topo$ such that $U' \cap X^{\Lambda} \subseteq U$ and $V' \cap X^{\Lambda} \subseteq V$ (Lemma \ref{lem:CompletS4uK-help}). We know that $ab, ba \in \ann_K(U' \cap V', \Lambda)$ and $U' \cap V' \cap X^{\Lambda} \subseteq U \cap V$. Hence, $ab, ba \in \ann_K^{\Lambda}(U \cap V)$.
\end{proof}

\begin{lemma}[Truth Lemma $\bm{M}^\Lambda$]\label{lem:TruthLemmaS4uK}
Let $\chi \in \mathfrak{L}_K$ and $\Lambda$ be a maximal $\mathbf{S4sub}_K$-consistent set. Then $|\chi|^{\Lambda} = \llbracket \chi \rrbracket_{\bm{M}^{\Lambda}}$.
\end{lemma}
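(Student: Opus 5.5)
We argue by structural induction on $\chi$. The cases for propositional variables, negation and conjunction are immediate: $\val^\Lambda(p) = |p|^\Lambda$ by definition, and $|\cdot|^\Lambda$ commutes with complement relative to $X^\Lambda$ and with intersection. The two modal cases need real work. In both we use the induction hypothesis in the form $\llbracket \f \rrbracket_{\bm{M}^\Lambda} = |\f|^\Lambda$.

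\emph{Case $\chi = \Box\f$.} Suppose $\Box\f \in \Gamma \in X^\Lambda$. Then $|\Box\f|^\Lambda$ is open in $\topo^\Lambda$ and contains $\Gamma$. By the $\mathbf{S4}$ axiom T it is contained in $|\f|^\Lambda$, so $\Gamma \models \Box\f$.

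Conversely, suppose $\Gamma \in V \subseteq |\f|^\Lambda$ with $V = U \cap X^\Lambda$ and $U \in \topo$. Since $U$ is a union of basic sets, there is $\ff$ with $\Gamma \in |\Box\ff| \subseteq U$, and hence $|\Box\ff|^\Lambda \subseteq |\f|$. Lemma~\ref{lem:useful} with $\M = \Box$ gives $|\Box\Box\ff|^\Lambda \subseteq |\Box\f|^\Lambda$. Axiom 4 of $\mathbf{S4}$ then gives $\Box\ff \to \Box\Box\ff$, so $|\Box\ff|^\Lambda \subseteq |\Box\f|^\Lambda$. Since $\Gamma \in |\Box\ff|^\Lambda$, we get $\Box\f \in \Gamma$.

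\emph{Case $\chi = \F_a\f$.} Here we use Lemma~\ref{lem:CompletS4uK-help}, which describes $\evid^\Lambda_a(\Gamma)$ in terms of $\Lambda$ alone. Recall also that $\F_a\f \in \Gamma$ iff $\F_a\f \in \Lambda$ for every $\Gamma \in X^\Lambda$, by the definition of $X^\Lambda$.

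If $\F_a\f \in \Gamma$, then $\F_a\f \in \Lambda$. The set $|\Box\f|^\Lambda$ belongs to $\topo^\Lambda$ and, by the lemma, lies in $\evid_a^\Lambda(\Gamma)$. It is contained in $|\f|^\Lambda = \llbracket\f\rrbracket$, so $\Gamma \models \F_a\f$.

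Conversely, suppose some $V \in \evid^\Lambda_a(\Gamma)$ satisfies $V \subseteq |\f|^\Lambda$. By Lemma~\ref{lem:CompletS4uK-help} there is $\ff$ with $\F_a\ff \in \Lambda$ and $|\Box\ff|^\Lambda \subseteq V \subseteq |\f|$. Lemma~\ref{lem:useful} with $\M = \F_a$ yields $|\F_a\Box\ff|^\Lambda \subseteq |\F_a\f|^\Lambda$. By axiom \eqref{a:FBox}, $|\F_a\ff|^\Lambda \subseteq |\F_a\Box\ff|^\Lambda$. Since $\F_a\ff \in \Lambda$, we have $\F_a\ff \in \Gamma$, and therefore $\F_a\f \in \Gamma$.

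The main obstacle is that the inclusions hold only relative to $X^\Lambda$. We cannot conclude that $\Box\ff \to \f$ is a theorem, as in the proof of Lemma~\ref{lem:TruthLemmaS4K}. Lemma~\ref{lem:useful} is exactly what transfers relative inclusions through the modalities, and it is the step to verify carefully. It presumably relies on compactness together with the fact that the formulas $\pm\F_b\ff$ defining $X^\Lambda$ are boxed and $\F_\1$-stable by axioms \eqref{au:0} and \eqref{au:Box}, so they can be pushed under $\Box$ and $\F_a$ (for $\F_a$ via \eqref{a:Fab}, together with $\F_\1\top$ and $a\1 = a$). Granting that lemma, as the paper does, the induction closes.
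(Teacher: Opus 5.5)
Your proof is correct and follows the paper's argument essentially step for step. Both use induction, with the modal cases handled by Lemma~\ref{lem:useful} (plus axiom 4 for $\Box$ and axiom \eqref{a:FBox} for $\F_a$) and by Lemma~\ref{lem:CompletS4uK-help} to unpack $\evid_a^\Lambda(\Gamma)$; the only cosmetic difference is that you argue directly where the paper argues by contradiction.
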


\begin{proof}
Structural induction on $\chi$. The base case $\chi \in \Prop$ 
and inductive cases for $\chi \in \{ \neg \f, \f \land \ff\}$ are straightforward. 

\emph{Case} $\chi = \Box\f$: If $\Box \f \in \Gamma \in X^\Lambda$, then $\Gamma \in |\Box\f|^\Lambda \subseteq |\f|^\Lambda$. By the induction hypothesis, $|\f|^\Lambda = \llbracket \f \rrbracket_{\bm{M}^\Lambda}$, and by definition of $\topo^\Lambda$, $|\Box\f|^\Lambda \in \topo^\Lambda$, implying $\bm{M}^\Lambda, \Gamma \models \Box \f$.

For the converse, suppose $\Box\f \notin \Gamma$ but $\Gamma \in \llbracket \Box\f \rrbracket_{\bm{M}^\Lambda}$. That is, $\exists U \in \topo$ such that 
$\Gamma \in U \cap X^\Lambda \subseteq |\f|^\Lambda$. By definition of  $\topo$, 
$U = \bigcup_{i \in I} |\Box \psi_i|$ for some set of formulas $\{\psi_i\}_{i \in I}$. Without loss of generality, suppose $\Gamma \in |\Box \psi_0|$. Since 
$|\Box \psi_0|^\Lambda \subseteq |\f|^\Lambda$, we can use Lemma~\ref{lem:useful} and $\mathbf{S4}$ to obtain 
$|\Box \psi_0|^\Lambda \subseteq |\Box\f|^\Lambda$, which yields $\Gamma \in |\Box\f|^{\Lambda}$, a contradiction.

\emph{Case} $\chi = \F_a\f$: If $\F_a\f \in \Gamma \in X^\Lambda$, then $|\Box\f|^\Lambda \in \evid_a^\Lambda(\Gamma)$ (by the definition of $\evid_a^\Lambda$) and 
$|\Box\f|^\Lambda \subseteq |\f|^\Lambda = \llbracket \f \rrbracket_{\bm{M}^\Lambda}$ (by $\mathbf{S4}$ and the induction hypothesis), implying  $\bm{M}^\Lambda, \Gamma \models \F_a\f$.

For the converse, suppose $\F_a\f \notin \Gamma$ but $\Gamma \in \llbracket \F_a\f \rrbracket_{\bm{M}^\Lambda}$. Then $\exists U \in \topo$ such that 
$U \cap X^\Lambda \subseteq |\f|^\Lambda$ and $U \cap X^\Lambda \in \evid_a^\Lambda(\Gamma)$. By Lemma~\ref{lem:CompletS4uK-help}, the latter implies that 
$\exists\F_a\psi \in \Gamma$ with $|\Box\psi|^\Lambda \subseteq U \cap X^\Lambda$. Thus, we have 
$|\Box\psi|^\Lambda \subseteq |\f|^\Lambda$, so we can use Lemma~\ref{lem:useful} and \eqref{a:FBox} to obtain 
$|\F_a\psi|^\Lambda \subseteq |\F_a\f|^\Lambda$, which implies $\Gamma \in |\F_a \f|$, a contradiction.
\end{proof}

We are now ready to prove strong completeness of $\mathbf{S4sub}_K$. 

\begin{theorem}[Completeness $\mathbf{S4sub}_K$]\label{thm:CompletS4uK}
Let $\Delta \cup \{\f\} \subseteq \mathfrak{L}_K$. Then $\Delta \proves_{\mathbf{S4sub}_K} \f$ if and only if $\Delta \models_{\mathsf{sub}} \varphi$.
\end{theorem}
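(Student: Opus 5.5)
Soundness is exactly Lemma~\ref{lem:SoundnessS4uK}, so only completeness needs an argument, and I would copy the pattern of Theorem~\ref{thm:CompletS4K}. Assume $\Delta \not\proves_{\mathbf{S4sub}_K} \f$. Then $\Delta \cup \{\neg\f\}$ is $\mathbf{S4sub}_K$-consistent. Since $K$ is countable, $\mathfrak{L}_K$ is countable, and the Lindenbaum lemma gives a maximal $\mathbf{S4sub}_K$-consistent theory $\Lambda \supseteq \Delta \cup \{\neg \f\}$.

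I then build the model $\bm{M}^{\Lambda}$ of Definition~\ref{def:CanMod-S4SUB} from this $\Lambda$. The first thing to check is that $\Lambda \in X^{\Lambda}$. This holds by construction: $X^{\Lambda}$ is the intersection of the sets $|\pm\F_a\ff|$ over all $\pm\F_a\ff \in \Lambda$, and $\Lambda$ lies in each of them. In particular $X^\Lambda$ is nonempty, so $\bm{M}^\Lambda$ is a legitimate model.

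Next, Lemma~\ref{lem:CanModelS4uK} shows that $\bm{M}^{\Lambda}$ is based on a strong uniform bounded $K$-seat, i.e.\ a seat in $\mathsf{sub}$. The Truth Lemma (Lemma~\ref{lem:TruthLemmaS4uK}) gives $|\chi|^{\Lambda} = \sem{\chi}_{\bm{M}^{\Lambda}}$ for every $\chi$. Because $\Lambda \in X^\Lambda$, it follows that $\bm{M}^{\Lambda}, \Lambda \models \ff$ for all $\ff \in \Delta$ and $\bm{M}^{\Lambda}, \Lambda \not\models \f$. This witnesses $\Delta \not\models_{\mathsf{sub}} \f$.

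The only step that needs care is choosing the base theory $\Lambda$ to be the Lindenbaum extension itself. With that choice the countermodel is relativised to $\Lambda$'s own $\F_a$-profile, and the point of evaluation automatically lies in the restricted carrier. All the substantive work, namely uniformity via Lemma~\ref{lem:CompletS4uK-help} and the truth lemma on the subspace, has already been done in the preceding lemmas.
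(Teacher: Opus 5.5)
Your proposal is correct and follows the paper's proof: extend $\Delta\cup\{\neg\f\}$ to a maximal $\mathbf{S4sub}_K$-consistent $\Lambda$, build $\bm{M}^\Lambda$, and apply Lemma~\ref{lem:CanModelS4uK} and the Truth Lemma~\ref{lem:TruthLemmaS4uK} (with soundness from Lemma~\ref{lem:SoundnessS4uK}). Your explicit check that $\Lambda\in X^\Lambda$, which makes the carrier non-empty and puts the evaluation point in the model, is a step the paper leaves implicit, and it is right.
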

\begin{proof}
Soundness was established in Lemma~\ref{lem:SoundnessS4uK}. For completeness, assume that $\Delta \not\proves_{\mathbf{S4sub}_K}\f$. Then there exists a maximal $\mathbf{S4sub}_K$-consistent theory $\Lambda$ with $\Delta \cup \{\neg\f\} \subseteq \Lambda$. Now $\bm{M}^\Lambda$ is a strong uniform bounded $K$-model (Lemma~\ref{lem:CanModelS4uK}) with $\bm{M}^\Lambda, \Lambda \models \Delta$ but $\bm{M}^\Lambda, \Lambda \not\models \f$ (Lemma~\ref{lem:TruthLemmaS4uK}).   
\end{proof}

\section{Expansions with the global modality}\label{sec:Logics and completeness II}
In this section, we extend the signature of $\mathfrak{L}_K$ with the global modality denoted by $\A$; let $\mathfrak{L}_{K\forall}$ be the resulting expanded language. Expanding the language with $\A$ allows us to define modal operators expressing $\Belop^a_b$ and $\Knop^a_b$ of Definition~\ref{def:BelAndKn} (cf.~Lemma \ref{lem:belief knowledge}). Moreover, it allows us to characterize the class of uniform strong bounded seats 
(cf.~Lemma \ref{lem:uniform define}), which is not possible in $\mathfrak{L}_K$ alone (cf.~Corollary~\ref{cor:disjointUnionExample}).

 The global modality $\A$ is interpreted in the standard manner. For a model $\bm{M} = \langle X, \topo, \ann_K, \val \rangle$ , formula  $\f \in \mathfrak{L}_{K\forall}$, and  state $x \in X$,  
$$
\bm{M}, x \models \A\f \quad \text{iff} \quad \text{for all } y \in X,\; \bm{M}, y \models \f.
$$
Intuitively, the formula $\A\f$ states that $\f$ is true \emph{globally}, that is, at all states. As usual, we define the global existential modality $\E$ by $\E\f \coloneq \neg \A \neg \f$.

\begin{remark}\label{rem:GlobalDefinability}
Let $\bm{M} = \langle X, \topo, \ann_K, \val \rangle$ be a  $K$-model.
Suppose   $\evid_\1(x) = \{X\}$ for some $x \in X$. That is, $X$ is the only piece of  evidence with a cost of $\1$.
Then, for every formula $\f \in \mathfrak{L}_{K\forall}$,
\[
\bm{M}, x \models \A \f \quad \text{iff} \quad \bm{M}, x \models \Box_\1 \f.
\]
Consequently, in any model satisfying this condition for all $x\in X$, the global modality $\A$ is definable in the logic $\mathbf{S4}_K$.
A similar observation in the context of \emph{stratified evidence models} was made in \cite[Theorem~6.1]{BalbianiEtAl2019a}.
However, since we do not make this assumption, the global modality can not be defined in this way in our framework. In fact, later on, using the appropriate notion of bisimulation, we show that $\A$ is not definable by any formula in $\mathfrak{L}_{K}$ (Corollary~\ref{cor:bisimul example}). 
\end{remark}

Let $\mathbf{S4sb}_{K\forall}$ extend $\mathbf{S4sb}_K$ with the axioms
\begin{gather}
	\A\f \to \f \label{ax:Areflex}\\
	\A\f \to \A\A\f \label{ax:Atrans}\\
    \f \to \A\E\f \label{ax:Asymm}\\
    \A\f \wedge \A\psi \to \A(\f \wedge \psi) \label{ax:Anorm}\\
    \A\f \to \Box \f \wedge \F_\1 \f \label{ax:ABox}
\end{gather}
and the $\A$-necessitation rule $\f / \A \f$.
We write $\proves_{\mathbf{S4sb}_{K\forall}}$ for the corresponding derivability relation.

As before, we first show that $\mathbf{S4sb}_{K\forall}$ is sound with respect to the class of strong bounded $K$-seats.
\begin{lemma}[Soundness $\mathbf{S4sb}_{K\forall}$]\label{lem:SoundnessS4sbKforall}
Let $\Delta \cup \{\f\} \subseteq \mathfrak{L}_{K\forall}$. Then $\Delta\proves_{\mathbf{S4sb}_{K\forall}}\f \implies \Delta \models_{\mathsf{sb}} \f$.
\end{lemma}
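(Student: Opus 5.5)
The plan is the usual induction on the length of derivations from $\Delta$, extending the proof of Lemma~\ref{lem:SoudnessS4K} and the `if' part of Lemma~\ref{lem:CharactS4sb}. Since $\models_{\mathsf{sb}}$ is local, I first separate the genuine axioms and rules from the premises in $\Delta$. Concretely, I would prove that every theorem of $\mathbf{S4sb}_{K\forall}$ (derivable from no premises) is \emph{valid} on every strong bounded $K$-seat, i.e.\ true at all states of all models on such seats. Then $\Delta\proves\f$ means $\bigwedge\Delta_0\to\f$ is a theorem for some finite $\Delta_0\subseteq\Delta$, which yields local consequence. This reading is the one under which the necessitation rules are sound: $\A$-necessitation and $\Box$-, $\F_a$-monotonicity apply only to theorems, as in the paper's treatment of $\mathbf{S4}_K$.

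For validity of the old axioms, I reuse the earlier results. The $\mathbf{S4}$ part holds by the topological semantics. Axioms \eqref{a:Fmult}--\eqref{a:Fmono} hold by Lemma~\ref{lem:SoundnessHelp} together with conditions \eqref{A(i)}--\eqref{A(v)}. Axioms \eqref{a:Fplus2}--\eqref{a:Bot0} hold by Lemma~\ref{lem:CharactS4sb}. These semantic arguments do not depend on the absence of $\A$, because the clauses only use $\sem{\f}_{\bm{M}}$ as an arbitrary subset of $X$. So the induction over the extended language goes through with the same proofs.

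For the new axioms, \eqref{ax:Areflex}, \eqref{ax:Atrans}, \eqref{ax:Asymm} and \eqref{ax:Anorm} are the standard S5 properties of the universal relation $X\times X$, so they are immediate. The one axiom that needs attention is \eqref{ax:ABox}. If $\bm{M},x\models\A\f$, then $\sem{\f}_{\bm{M}}=X$, so $\Int(X)=X$ gives $\Box\f$. For $\F_\1\f$ we need some $U\in\evid_\1(x)$ with $U\subseteq X$, and $\1$-boundedness supplies $U=X$. This is exactly where the restriction to bounded seats is used: for arbitrary seats \eqref{ax:ABox} can fail, since $\ann_x(X)$ need not contain $\1$.

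For the rules, modus ponens preserves validity trivially. $\A$-necessitation does too: if $\f$ is valid on the class, it holds at every $y$ of every model, so $\A\f$ does. I expect no real obstacle here. The only delicate point is the bookkeeping between local consequence and rules, which the reduction to finite $\Delta_0$ and the deduction theorem handles. The deduction theorem is available because all rules other than modus ponens are restricted to theorems.
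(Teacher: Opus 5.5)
Your proposal is correct and follows the paper's route. The paper simply declares the lemma immediate from the soundness of $\mathbf{S4sb}_K$ (Lemmas~\ref{lem:SoudnessS4K} and~\ref{lem:CharactS4sb}) together with the semantics of $\A$, and your argument spells out exactly those details. That includes the one non-trivial check, that \eqref{ax:ABox} needs $\1$-boundedness for the $\F_\1\f$ conjunct, as well as the standard bookkeeping reducing local consequence to validity of theorems.
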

\begin{proof}
Immediate from the soundness of $\mathbf{S4sb}_K$ (Lemmas~\ref{lem:SoudnessS4K}, \ref{lem:CharactS4sb}) and the semantics of the global modality $\A$. 
\end{proof}

We now show that $\mathbf{S4sb}_{K\forall}$ is also complete for $\mathfrak{L}_{K\forall}$  interpreted over strong bounded seats via the canonical model construction. Let $X$ be the set of all maximal consistent $\mathbf{S4sb}_{K\forall}$-theories. For any $\Lambda \in X$, we define the canonical model $\bm{M}^{\Lambda}= \langle X^{\Lambda}, \topo^{\Lambda}, \ann_K^{\Lambda}, \val^{\Lambda}\rangle$ analogously to the canonical $\mathbf{S4sub}_{K}$-model (cf.~Definition~\ref{def:CanMod-S4SUB}), except that in this case,  the domain of the model is 
\[
X^{\Lambda} = \bigcap_{\A\f \in \Lambda} |\f|.
\]
It can be shown  exactly as in the proofs of Lemmas~\ref{lem:CanModelS4K} \& ~\ref{lem:CanModelS4uK} that this is a strong bounded $K$-model.

Similarly to Lemma~\ref{lem:useful}, the following will be useful when working with these models. 

\begin{lemma}\label{lem:useful global}
Let $\M \in \{ \Box, \A \} \cup \{ \F_a \mid a \in K \}$ and $\f, \ff \in \mathfrak{L}_{K\forall}$.
Then $(|\f|^{\Lambda} \subseteq |\ff|) \implies (|\M \f|^{\Lambda} \subseteq |\M \ff|^{\Lambda})$.
\end{lemma}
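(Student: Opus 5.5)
The proof will use the standard compactness argument, exploiting that $X^\Lambda = \bigcap_{\A\chi \in \Lambda}|\chi|$. Assume $|\f|^\Lambda \subseteq |\ff|$. The first step is to turn this semantic inclusion into a syntactic fact inside $\Lambda$: we aim to show $\A(\f \to \ff) \in \Lambda$.

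Suppose not. Then, since $\Lambda$ is maximal, $\neg\A(\f\to\ff) \in \Lambda$. Consider the set
\[\Sigma = \{\chi \mid \A\chi \in \Lambda\} \cup \{\f, \neg\ff\}.\]
This set is consistent. Otherwise, by compactness there would be finitely many $\chi_1,\dots,\chi_n$ with $\A\chi_i\in\Lambda$ and $\vdash \bigwedge_i \chi_i \to (\f\to\ff)$. Applying $\A$-necessitation, the K-style distribution of $\A$ (derivable from necessitation together with \eqref{ax:Anorm} and the monotonicity that follows from it), and \eqref{ax:Anorm} again to collect the conjuncts, we would get $\A(\f\to\ff) \in \Lambda$, a contradiction. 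By Lindenbaum, $\Sigma$ extends to some $\Gamma \in X$, and then $\Gamma \in |\f|^\Lambda\setminus|\ff|$, which contradicts the assumption. One subtlety is the monotonicity of $\A$: from $\vdash \alpha\to\beta$ we get $\A(\alpha\to\beta)$, and we then need $\A(\alpha\to\beta)\wedge\A\alpha\to\A\beta$. This follows from \eqref{ax:Anorm} provided $\A$ is monotone under provable implication. If only the listed axioms are available, I would treat this as part of the intended normal-modal-logic reading of $\A$ (an S5 modality), which the paper implicitly assumes, as it did for Lemma~\ref{lem:useful}.

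Next, from $\A(\f\to\ff)\in\Lambda$ we derive the required consequence for each $\M$. By \eqref{ax:Atrans}, $\A\A(\f\to\ff)\in\Lambda$, so every $\Gamma \in X^\Lambda$ contains $\A(\f\to\ff)$. It therefore suffices to prove, for each choice of $\M$, that
\[\vdash \A(\f\to\ff)\to(\M\f\to\M\ff).\]
For $\M=\A$ this is the K-axiom for $\A$. For $\M=\Box$, axiom \eqref{ax:ABox} gives $\Box(\f\to\ff)$, and then K for $\Box$ gives $\Box\f\to\Box\ff$. For $\M=\F_a$, axiom \eqref{ax:ABox} gives $\F_\1(\f\to\ff)$. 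Combining this with $\F_a\f$ via \eqref{a:Fab} yields $\F_{a\1}((\f\to\ff)\wedge\f)$, that is, $\F_a((\f\to\ff)\wedge \f)$ since $a\1=a$. By \eqref{a:Fmono} this gives $\F_a\ff$.

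The main obstacle is the first step. It requires care because the theories in $X$ range over all maximal consistent sets, while $X^\Lambda$ is cut out only by the $\A$-formulas of $\Lambda$, so the compactness and Lindenbaum argument has to be set up exactly as above. The modal cases in the second step are routine.
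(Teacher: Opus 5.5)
Your proof is correct and follows essentially the paper's route. The paper also uses compactness (your explicit Lindenbaum argument is what its \emph{by compactness} abbreviates), then \eqref{ax:ABox} with \eqref{a:Fab} at $b=\1$, $\mathbf{S4}$ and \eqref{ax:Anorm} to push the finitely many premises through $\M$, and \eqref{ax:Atrans} to place them in every $\Gamma\in X^\Lambda$; the only difference is that you first package those premises into the single formula $\A(\f\to\ff)\in\Lambda$ rather than carrying the $\chi_i$ along separately, and the monotonicity of $\A$ you flag is assumed just as silently in the paper's step \emph{by monotonicity of} $\M$.
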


\begin{proof}
Straightforward derivation (Appendix~\ref{app:useful global}).
\end{proof}

The following lemma shows that $\bm{M}^{\Lambda}$ is indeed a canonical model for $\mathbf{S4sb}_{K\forall}$. 
\begin{lemma}[Truth Lemma $\mathbf{S4sb}_{K\forall}$]\label{lem:truth S4sbKforall}
$\bm{M}^{\Lambda}$ is a strong and bounded $K$-model and 
$|\chi|^{\Lambda} = \llbracket \chi \rrbracket_{\bm{M}^{\Lambda}}$ for every $\chi \in \mathfrak{L}_{K\forall}$.
\end{lemma}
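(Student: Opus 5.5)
The proof splits into two parts: first show that $\bm{M}^{\Lambda}$ is a strong bounded $K$-model, then prove the truth lemma by structural induction on $\chi$.

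\textbf{$\bm{M}^{\Lambda}$ is a strong bounded $K$-model.} We follow the proofs of Lemmas~\ref{lem:CanModelS4K}, \ref{lem:CharactS4sb} and \ref{lem:CanModelS4uK}. The only new point is that $\ann^\Lambda_K(V,\Gamma)=\bigcup\{\ann_K(U,\Gamma)\mid U\cap X^\Lambda\subseteq V\}$ is now genuinely state-dependent.
\begin{itemize}
\item Conditions \eqref{A(i)}, \eqref{A(ii)}, \eqref{A(iv)} and \eqref{A(v)} transfer directly from the corresponding properties of $\ann_K$, since $\{U \mid U\cap X^\Lambda\subseteq V\}$ is closed under intersections and contains $X$.
\item For \eqref{A(iii)} and strength, take witnesses $\F_a\f,\F_b\ff\in\Gamma$ with $|\Box\f|^\Lambda,|\Box\ff|^\Lambda\subseteq V$. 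Axiom \eqref{a:Fplus} gives $\F_{a\oplus b}(\Box\f\lor\Box\ff)\in\Gamma$. Since $|\Box(\Box\f\lor\Box\ff)|\subseteq|\Box\f|\cup|\Box\ff|$, this yields $a\oplus b\in\ann^\Lambda_K(V,\Gamma)$.
\item Strength uses \eqref{a:Fplus2}.
\item Boundedness uses \eqref{a:Top1} and \eqref{a:Bot0}, which give $\F_\0\bot\in\Gamma$ and hence $\0\in\ann^\Lambda_K(\varnothing,\Gamma)$.
\end{itemize}

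We also need an analogue of Lemma~\ref{lem:CompletS4uK-help} for this model: for $\Gamma\in X^\Lambda$,
\[
V\in\evid^\Lambda_a(\Gamma) \iff \exists\f\colon \F_a\f\in\Gamma \And |\Box\f|^\Lambda\subseteq V.
\]
The proof is the compactness argument given there, using Lemma~\ref{lem:useful global} and \eqref{a:FBox}. Unlike the uniform case, the witness $\F_a\f$ now lies in $\Gamma$, not in $\Lambda$.

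\textbf{Truth lemma.} We proceed by induction on $\chi$.
\begin{itemize}
\item The cases $p$, $\neg$ and $\land$ are routine.
\item The cases $\Box\f$ and $\F_a\f$ go verbatim as in Lemma~\ref{lem:TruthLemmaS4uK}. We replace Lemma~\ref{lem:useful} by Lemma~\ref{lem:useful global}, and use the characterisation of $\evid^\Lambda_a(\Gamma)$ above.
\end{itemize}

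The new case is $\chi=\A\f$, which I expect to be the main obstacle. The key auxiliary fact is that for all $\Gamma\in X^\Lambda$,
\[
\A\ff\in\Gamma \iff \A\ff\in\Lambda.
\]
To prove it, note that if $\A\ff\in\Lambda$, then \eqref{ax:Atrans} gives $\A\A\ff\in\Lambda$, so $\A\ff\in\Gamma$ by definition of $X^\Lambda$. Conversely, if $\A\ff\notin\Lambda$, then $\neg\A\ff\in\Lambda$. Axiom \eqref{ax:Asymm} gives $\A\E\neg\A\ff\in\Lambda$, so $\E\neg\A\ff=\neg\A\A\ff\in\Gamma$. By \eqref{ax:Atrans} this forces $\neg\A\ff\in\Gamma$.

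With this fact in hand the two directions of the $\A$ case are as follows.
\begin{itemize}
\item Suppose $\A\f\in\Gamma$. Then $\A\f\in\Lambda$, so every $\Delta\in X^\Lambda$ contains $\f$. The induction hypothesis gives $\Gamma\models\A\f$.
\item Suppose $\A\f\notin\Gamma$. Then $\neg\A\f\in\Lambda$, and we must find $\Delta\in X^\Lambda$ with $\neg\f\in\Delta$. Consider $\{\ff\mid\A\ff\in\Lambda\}\cup\{\neg\f\}$. If it were inconsistent, some finite conjunction $\ff_1\land\dots\land\ff_n$ would prove $\f$. Then $\A$-necessitation, normality \eqref{ax:Anorm} and monotonicity of $\A$ would give $\A\f\in\Lambda$, a contradiction. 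Lindenbaum then supplies the required $\Delta$, and the induction hypothesis finishes the case.
\end{itemize}

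Finally, $\Lambda\in X^\Lambda$ by \eqref{ax:Areflex}, so the model is nonempty and contains the intended point.
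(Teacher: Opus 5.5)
Your proof is correct. For the model part and the $\Box$, $\F_a$ cases it matches the paper, which only asserts strength and boundedness by pointing to Lemmas~\ref{lem:CanModelS4K} and~\ref{lem:CanModelS4uK}. The genuine difference is the $\A$ case.

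\textbf{Forward direction.} The paper argues: $\A\f\in\Gamma$, hence $\A\A\f\in\Gamma$ by \eqref{ax:Atrans}, hence $X^\Lambda\subseteq|\A\f|$ ``by construction''. But $X^\Lambda$ is cut out by the $\A$-formulas of $\Lambda$, so that last step really needs $\A\A\f\in\Lambda$. This is exactly your auxiliary fact that every $\Gamma\in X^\Lambda$ agrees with $\Lambda$ on $\A$-formulas. Its nontrivial direction, from $\A\ff\in\Gamma$ to $\A\ff\in\Lambda$, needs the symmetry axiom \eqref{ax:Asymm}, which the paper's proof never uses. With only \eqref{ax:Areflex} and \eqref{ax:Atrans} the step can fail. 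So your route supplies a justification the paper leaves out.

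\textbf{Converse direction.} The paper avoids a Lindenbaum construction. From $X^\Lambda=|\top|^\Lambda\subseteq|\f|^\Lambda$ and $|\A\top|=X$, it applies Lemma~\ref{lem:useful global} with $\M=\A$ to get $X^\Lambda\subseteq|\A\f|^\Lambda$, contradicting $\A\f\notin\Gamma$. Your witness $\Delta\supseteq\{\ff\mid\A\ff\in\Lambda\}\cup\{\neg\f\}$ is the contrapositive of the same compactness-plus-\eqref{ax:Anorm} reasoning. Like the paper's version, it only needs the \eqref{ax:Atrans} half of the agreement fact.

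\textbf{Two minor points.}
\begin{itemize}
\item Your characterisation of $\evid^\Lambda_a(\Gamma)$ needs no compactness. If $\F_a\f\in\Gamma$ and $|\Box\f|^\Lambda\subseteq V$, the open set $|\Box\f|$ itself witnesses $a\in\ann^\Lambda_K(V,\Gamma)$.
\item In your first bullet, \emph{contains $X$} should read ``$X$ is a witness when $V=X^\Lambda$''. That is the instance needed for \eqref{A(v)}.
\end{itemize}
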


\begin{proof}
The proof proceeds analogously to Lemma~\ref{lem:TruthLemmaS4uK}. Full details are provided in Appendix~\ref{app:TruthLemmaAll}.
\end{proof}

We are now ready to prove strong completeness for $\mathbf{S4sb}_{K\forall}$ with respect to all strong bounded $K$-seats.

\begin{theorem}[Completeness $\mathbf{S4sb}_{K\forall}$]\label{thm:CompletS4sbKall}
Let $\Delta \cup \{\f\} \subseteq \mathfrak{L}_{K\forall}$. Then $\Delta \proves_{\mathbf{S4sb}_{K\forall}} \f$ if and only if $\Delta \models_{\mathsf{sb}} \varphi$.
\end{theorem}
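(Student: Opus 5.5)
Soundness is exactly Lemma~\ref{lem:SoundnessS4sbKforall}, so only completeness needs an argument. We follow the pattern of Theorem~\ref{thm:CompletS4uK}. Assume $\Delta \not\proves_{\mathbf{S4sb}_{K\forall}} \f$. Then $\Delta \cup \{\neg\f\}$ is $\mathbf{S4sb}_{K\forall}$-consistent. The language $\mathfrak{L}_{K\forall}$ is countable, since $K$ is countable, so the Lindenbaum lemma gives a maximal consistent theory $\Lambda$ with $\Delta \cup \{\neg\f\} \subseteq \Lambda$. We then form the canonical model $\bm{M}^{\Lambda}$ over the domain $X^{\Lambda} = \bigcap_{\A\ff \in \Lambda} |\ff|$.

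We first check that $\Lambda \in X^{\Lambda}$. If $\A\ff \in \Lambda$, then $\ff \in \Lambda$ by axiom \eqref{ax:Areflex}, so $\Lambda$ lies in every $|\ff|$ defining $X^{\Lambda}$. This is where reflexivity of $\A$ is used; it guarantees that the point we care about survives the restriction of the domain.

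Next, Lemma~\ref{lem:truth S4sbKforall} says that $\bm{M}^{\Lambda}$ is a strong bounded $K$-model and that $|\chi|^{\Lambda} = \llbracket \chi \rrbracket_{\bm{M}^{\Lambda}}$ for every $\chi \in \mathfrak{L}_{K\forall}$. Since $\Lambda \in X^{\Lambda}$, we get $\bm{M}^{\Lambda}, \Lambda \models \Delta$ and $\bm{M}^{\Lambda}, \Lambda \models \neg\f$. Hence $\Delta \not\models_{\mathsf{sb}} \f$, which proves completeness.

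All the substantive work sits in the Truth Lemma, which we take as given. In particular, its $\A$-case depends on the restriction to $X^{\Lambda}$ and on axioms \eqref{ax:Atrans}--\eqref{ax:Asymm}, which make the $\A$-formulas of every $\Gamma \in X^{\Lambda}$ coincide with those of $\Lambda$. Given that lemma, the only point needing care here is the membership $\Lambda \in X^{\Lambda}$, and it is immediate.
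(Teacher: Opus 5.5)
Your proof is correct and follows the paper's argument: soundness comes from Lemma~\ref{lem:SoundnessS4sbKforall}, and completeness goes via Lindenbaum, the canonical model $\bm{M}^\Lambda$ and Lemma~\ref{lem:truth S4sbKforall}. Your explicit check that $\Lambda \in X^\Lambda$ via \eqref{ax:Areflex} fills in a step the paper leaves implicit, and your remark that the $\A$-case of the Truth Lemma needs \eqref{ax:Asymm} (together with \eqref{ax:Atrans}) to make the $\A$-formulas of every $\Gamma \in X^\Lambda$ agree with those of $\Lambda$ is accurate.
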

\begin{proof}
Soundness was shown in Lemma~\ref{lem:SoundnessS4sbKforall}. For completeness, let $\Delta \not\proves_{\mathbf{S4sb}_{K\forall}}\f$. Then there exists a maximal consistent $\Lambda\in X$ with $\Delta \cup \{\neg\f\} \subseteq \Lambda$ and $\bm{M}^\Lambda$ is a strong bounded $K$-model in which $\Delta \models \f$ is not valid (Lemma~\ref{lem:truth S4sbKforall}).   
\end{proof}

Next, we present an extension of this logic that characterizes \emph{uniform} seats. 

\begin{lemma}\label{lem:uniform define}
Let $\bm{S}= \langle X, \topo, \ann_K\rangle$ be a strong  bounded $K$-seat. Then, the following are equivalent.
\begin{enumerate}
\item[(i)] $\bm{S}$ is uniform.
    \item[(ii)] $\bm{S}$ validates the following axiom for every $a \in K$: 
\begin{gather}
    \F_a\f \rightarrow \A\F_a\f \label{ax:uniformity global}
\end{gather}
  \item[(iii)]  $\bm{S}$ validates the following axiom for every $a \in K$: 
\begin{gather}
    \neg \F_a\f \rightarrow \A\neg \F_a\f \label{ax:uniformity global 2}
\end{gather}
\end{enumerate}
\end{lemma}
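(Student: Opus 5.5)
We prove (i)$\Rightarrow$(ii), (i)$\Rightarrow$(iii), and the converses (ii)$\Rightarrow$(i) and (iii)$\Rightarrow$(i) by contraposition, using valuations that make a propositional variable denote a chosen open set. Validity on a seat means truth at every state of every model based on it. Strength and boundedness are not needed for these implications; they only fix the ambient class of seats.

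For (i)$\Rightarrow$(ii) and (i)$\Rightarrow$(iii), fix a model $\bm{M}$ on a uniform seat. If $\bm{M},x\models\F_a\f$, there is $U\in\evid_a(x)$ with $U\subseteq\semM{\f}$. Uniformity gives $U\in\evid_a(y)$ for every $y$, so $\bm{M},y\models\F_a\f$ for all $y$, that is, $\bm{M},x\models\A\F_a\f$. For (iii), suppose $\bm{M},x\models\neg\F_a\f$ but $\bm{M},y\models\F_a\f$ for some $y$. Applying the argument just given at $y$ yields $\bm{M},x\models\F_a\f$, a contradiction. Hence $\bm{M},x\models\A\neg\F_a\f$.

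For the converses, suppose $\bm{S}$ is not uniform. Then there are $x,y\in X$, $U\in\topo$ and $a\in K$ with $a\in\ann_x(U)$ and $a\notin\ann_y(U)$. Swapping the roles of $x$ and $y$ is not needed, since the definition is symmetric. Take a model with $\val(p)=U$. Then $\bm{M},x\models\F_a p$, witnessed by $U$ itself.

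The one real step is to show $\bm{M},y\not\models\F_a p$. Any witness $V\in\evid_a(y)$ with $V\subseteq U$ would give $a\in\ann_y(U)$ by \eqref{A(ii)}, which contradicts the choice of $U$. So $\F_a p\to\A\F_a p$ fails at $x$, refuting (ii), and $\neg\F_a p\to\A\neg\F_a p$ fails at $y$, refuting (iii). This monotonicity step is the only place where care is needed, and it is immediate from \eqref{A(ii)}.
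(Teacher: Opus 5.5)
Your proof is correct and follows the same route as the paper: the forward directions are immediate from uniformity, and both converses are refuted with a valuation $\val(p)=U$ for an open set $U$ witnessing non-uniformity. You also spell out the step $\bm{M},y\not\models\F_a p$ via \eqref{A(ii)}, which the paper leaves implicit, and you rightly note that strength and boundedness play no role in the argument.
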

\begin{proof}
It is straightforward to check that if $\bm{S}$ is uniform, then it validates the above axioms. Conversely, suppose $\bm{S}$ is not uniform. Then, there exist $x_1, x_2 \in X$, $U \in \topo$, and  $a \in K$ such that $U \in \evid_a(x_1)$, but $U \not \in \evid_a(x_2)$. Let $\val$ be a valuation with $\val(p)=U$ for some proposition $p$. Then for $\bm{M}=\langle \bm{S}, \val \rangle$, we have $\bm{M},x_1 \models \F_a p$, but  $\bm{M},x_1 \not \models \A\F_a p $, showing that $\bm{S}$ does not validate axiom \eqref{ax:uniformity global}. Similarly, $\bm{M},x_2 \models \neg \F_a p  $ but  $\bm{M},x_2 \not \models \neg\A\F_a p$ showing $\bm{S}$ does not validate axiom \eqref{ax:uniformity global 2}.
\end{proof}

As a corollary, we obtain the following. Let $\mathbf{S4sub}_{K\forall}$ be the extension of $\mathbf{S4sb}_{K\forall}$ with axioms \eqref{ax:uniformity global}  and  \eqref{ax:uniformity global 2}.

\begin{theorem}[Completeness $\mathbf{S4sub}_{K\forall}$]\label{thm:characterize uniform}
If $\Delta \cup \{\f\} \subseteq \mathfrak{L}_{K\forall}$, then $\Delta \proves_{\mathbf{S4sub}_{K\forall}} \f$ if and only if $\Delta \models_{\mathsf{sub}} \varphi$. 
\end{theorem}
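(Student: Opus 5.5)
Soundness is immediate: by Lemma~\ref{lem:SoundnessS4sbKforall} every axiom and rule of $\mathbf{S4sb}_{K\forall}$ is valid on strong bounded seats, and by Lemma~\ref{lem:uniform define} ((i)$\Rightarrow$(ii),(iii)) uniform seats also validate \eqref{ax:uniformity global} and \eqref{ax:uniformity global 2}. So the work lies in completeness. The plan is to reuse the canonical construction from Theorem~\ref{thm:CompletS4sbKall}, now built from maximal $\mathbf{S4sub}_{K\forall}$-consistent theories. Given $\Delta \not\proves_{\mathbf{S4sub}_{K\forall}} \f$, we extend $\Delta\cup\{\neg\f\}$ to a maximal consistent $\Lambda$. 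We then form $\bm{M}^\Lambda$ with domain $X^\Lambda = \bigcap_{\A\ff\in\Lambda}|\ff|$, the subspace topology, and the restricted annotation $\ann^\Lambda_K$, exactly as in Definition~\ref{def:CanMod-S4SUB}.

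The Truth Lemma and the fact that $\bm{M}^\Lambda$ is a strong bounded $K$-model should transfer verbatim from Lemma~\ref{lem:truth S4sbKforall}. That proof only uses axioms of $\mathbf{S4sb}_{K\forall}$, which remain theorems of the extension, together with Lemma~\ref{lem:useful global}, which also continues to hold. I would check that nothing in those arguments relied on maximality with respect to the smaller logic. The Lindenbaum step and the countability of the language are unchanged.

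The only genuinely new step is uniformity of $\bm{M}^\Lambda$. The key observation is that, for $\Gamma\in X^\Lambda$, we have $\F_a\ff\in\Gamma$ iff $\F_a\ff\in\Lambda$.
\begin{itemize}
\item If $\F_a\ff\in\Lambda$, then $\A\F_a\ff\in\Lambda$ by \eqref{ax:uniformity global}, so every $\Gamma\in X^\Lambda$ contains $\F_a\ff$.
\item If $\neg\F_a\ff\in\Lambda$, then \eqref{ax:uniformity global 2} gives $\A\neg\F_a\ff\in\Lambda$, so $\neg\F_a\ff\in\Gamma$.
\end{itemize}
Hence $X^\Lambda\subseteq\bigcap_{\pm\F_a\ff\in\Lambda}|\pm\F_a\ff|$. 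With this in hand, I would repeat the proof of Lemma~\ref{lem:CompletS4uK-help}. That proof uses only this agreement on $\F_a$-formulas, the compactness argument for basic opens, \eqref{a:FBox}, and Lemma~\ref{lem:useful global}. It yields $\evid^\Lambda_a(\Gamma)=\{V\in\topo^\Lambda\mid \exists\ff\colon \F_a\ff\in\Lambda \And |\Box\ff|^\Lambda\subseteq V\}$, which is independent of $\Gamma$, so the model is uniform.

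The main obstacle I expect is this transfer of Lemma~\ref{lem:CompletS4uK-help}. The compactness step there needs care: from $|\Box\ff|^\Lambda\subseteq U$ we must get a finite disjunction $\chi$ of $\Box\chi_i$ with $|\Box\ff|^\Lambda\subseteq|\chi|$. Since $X^\Lambda$ is now cut out by $\A$-formulas, I would argue via consistency: the set $\{\Box\ff\}\cup\{\ff'\mid \A\ff'\in\Lambda\}\cup\{\neg\Box\chi_i\}_i$ would otherwise be finitely satisfiable, hence consistent, giving a theory in $X^\Lambda$ outside $U$. With uniformity established, $\bm{M}^\Lambda$ is a strong uniform bounded model in which $\Lambda$ satisfies $\Delta$ but not $\f$, which completes the argument.
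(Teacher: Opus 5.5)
Your proposal is correct and follows the paper's proof essentially step for step. You use the same canonical model over maximal $\mathbf{S4sub}_{K\forall}$-theories with $X^\Lambda$ cut out by $\A$-formulas, the same use of \eqref{ax:uniformity global} and \eqref{ax:uniformity global 2} to get $\F_a\ff\in\Gamma$ iff $\F_a\ff\in\Lambda$ on $X^\Lambda$, and the same appeal to the proof of Lemma~\ref{lem:CompletS4uK-help} for uniformity. Your explicit consistency argument for the compactness step is a correct filling-in of what the paper only cites as ``by compactness''.
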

\begin{proof}
See Appendix~\ref{app:characterize uniform}. 
\end{proof}

We define the logical formulas to characterize (weighted) justified belief and knowledge operators as follows. 
\[
\Bel^a_b \f \coloneq \A\Diamond_b\Box_a \f \quad \text{and}  \quad \Kn^a_b \f\coloneq  \Box_a\f \wedge \Bel^a_b \f. 
\]
The following lemma states that the modalities $\Bel^a_b $ and  $\Kn^a_b$ indeed capture the intuitive weighted generalizations of the belief and knowledge operators  discussed in Definition~\ref{def:BelAndKn}.
\begin{lemma}\label{lem:belief knowledge}
Let $\bm{M} = \langle X, \topo, \ann_K, \val \rangle$ be any $\mathbf{S4sub}_{K}$-model. For any formula $\f \in \mathfrak{L}_{K\forall}$, and state $x \in X$, 
\begin{align*}
\bm{M}, x \models \Bel^a_b \f \quad &\text{iff} \quad  x \in  \Belop^a_b (\semM{\f}), \\ 
\bm{M}, x \models \Kn^a_b \f \quad &\text{iff} \quad  x \in  \Knop^a_b (\semM{\f}). 
\end{align*}
\end{lemma}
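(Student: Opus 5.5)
The plan is to unfold the semantics of $\Bel^a_b \f = \A\Diamond_b\Box_a\f$ at the level of propositions. I will use uniformity to remove the dependence on the state, and boundedness to collapse the union of all $a$-affordable $P$-supporting evidence to the single set $\Int(P)$. Throughout, write $P = \semM{\f}$, and write $\evid_a$ for $\evid_a(x)$, which does not depend on $x$ by uniformity. (I read ``$\mathbf{S4sub}_K$-model'' as a model based on a seat in $\mathsf{sub}$, which is what Theorem~\ref{thm:CompletS4uK} justifies.)

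\emph{Step 1: computing $\sem{\Box_a\f}$.} By the remark after the definition of $K$-models, $\semM{\Box_a\f} = \Int_a(P)$. By uniformity,
\[
\Int_a(P) = W \coloneq \bigcup\{U \in \evid_a \mid U \subseteq P\}.
\]
Suppose $W \neq \varnothing$. Then some $U \in \evid_a$ satisfies $U \subseteq P$, hence $U \subseteq \Int(P)$, and \eqref{A(ii)} gives $\Int(P) \in \evid_a$. Consequently $W = \Int(P)$ and $W \in \evid_a$. This is the key observation: the union of the witnesses is itself a witness. The main obstacle is exactly this point, because $\evid_a$ is not closed under unions in general. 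It is resolved by upward closure under $\subseteq$ in $\topo$ (\eqref{A(ii)}), not by any closure under unions.

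\emph{Step 2: computing $\sem{\A\Diamond_b\Box_a\f}$.} Since $\Diamond_b = \neg\Box_b\neg$, we have $y \models \Diamond_b\ff$ iff there is no $V \in \evid_b$ with $y \in V \subseteq X{\setminus}\semM{\ff}$. So $\A\Diamond_b\Box_a\f$ holds at $x$ iff every nonempty $V \in \evid_b$ meets $W$. That is, it holds iff $W$ is $b$-dense (in $x$, equivalently in any state). Now compare with $\Belop^a_b(P)$.
\begin{itemize}
\item[($\Leftarrow$)] Take a witness $U \in \evid_a$ with $U \subseteq P$ and $U$ $b$-dense. Then $U \subseteq W$, so $W$ is $b$-dense.
\item[($\Rightarrow$)] Suppose $W$ is $b$-dense. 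If $X = \varnothing$ the claim is trivial. Otherwise $\1$-boundedness gives $X \in \evid_b$, and $X$ is nonempty, so $b$-density forces $W \neq \varnothing$. By Step 1, $W = \Int(P) \in \evid_a$, and $W$ is the required $b$-dense witness.
\end{itemize}

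\emph{Step 3: knowledge.} Unfolding the definition, $x \models \Kn^a_b\f$ iff $x \in \Int_a(P)$ and $x \in \Belop^a_b(P)$.
\begin{itemize}
\item[($\Rightarrow$)] From $x \in \Int_a(P)$ we get $W \neq \varnothing$, so $x \in \Int(P) \in \evid_a$. By Step 2, $\Int(P)$ is $b$-dense. So $\Int(P)$ witnesses $x \in \Knop^a_b(P)$.
\item[($\Leftarrow$)] A witness $U$ for $\Knop^a_b(P)$ satisfies $x \in U \in \evid_a$ and $U \subseteq P$, which gives $x \in \Int_a(P)$. The same $U$ also witnesses $\Belop^a_b(P)$, and Step 2 then gives $x \models \Bel^a_b\f$.
\end{itemize}
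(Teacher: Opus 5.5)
Your proof is correct and follows essentially the same route as the paper's: under uniformity you unfold $\A\Diamond_b\Box_a\f$ to ``every nonempty member of $\evid_b$ meets some $a$-affordable open subset of $\semM{\f}$'', and then merge the witnesses into a single one using upward closure \eqref{A(ii)}. The paper takes the union of the chosen witnesses, whereas you take $\Int(P)$ directly; you also make explicit the use of $\1$-boundedness that the union argument needs (otherwise the union could be empty and fail to lie in $\evid_a$), and this choice makes the $\Kn^a_b$ case, which the paper leaves to the reader, immediate.
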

\begin{proof}
See Appendix~\ref{app:belief knowledge}.
\end{proof}
\begin{remark}
Note that, in the case $a=b=\0$, the formulas $\Bel^a_b \f$ and $\Kn^a_b \f$ reduce to $\A\Diamond\Box\f$ and $\A\Diamond\Box\f \wedge \Box \f$, respectively, which define the standard belief ($\mathrm{B}$) and knowledge ($\mathrm{K}$) operators in TEL, as established in \cite[Proposition 6]{BaltagEtAl2022}. 
\end{remark}

\section{Modal undefinabiliy}\label{sec:Undefinability}

In the previous sections, we showed that the class of \emph{strong bounded} seats is modally definable over $\mathfrak{L}_K$ (Lemma~\ref{lem:CharactS4sb}) and that the class of \emph{uniform} seats is modally definable over the expanded language $\mathfrak{L}_{K\forall}$ (Lemma~\ref{lem:uniform define}). 

In this section, we prove \emph{undefinability} results for some classes of seats introduced in Section~\ref{sec:seats}. We show that the class of \emph{uniform seats} is not definable in $\mathfrak{L}_K$ and that the global modality is not definable in the language $\mathfrak{L}_{K}$.

As usual, we say that a class of $K$-seats $\mathsf{C}$ is \emph{modally definable over $\mathfrak{L} \in \{ \mathfrak{L}_K, \mathfrak{L}_{K\forall}\}$} if there exists a formula $\f \in \mathfrak{L}$ such that $\bm{S} \models \f \Leftrightarrow \bm{S} \in \mathsf{C}$ holds for all $K$-seats $\bm{S}$. 

To show that the class of uniform seats is not modally definable over $\mathfrak{L}_K$, we make use of the following construction. 

\begin{definition}\label{def:DisjointUnion}
Let $\{ \bm{M}_i \}_{i\in I}$ be an $I$-indexed collection of $K$-models $\bm{M}_i = \langle X_i, \topo_i, \ann_i, \val_i  \rangle$. We define the \emph{disjoint union} 
$\biguplus_{i\in I} \bm{M}_i = \langle X, \topo, \ann, \val  \rangle$
as follows:
\begin{itemize}
    \item $X$ is the disjoint union $\biguplus_{i\in I} X_i$;
    \item $\topo$ is the usual disjoint union topology;\footnote{That is, $U \in \topo$ iff $\forall i \in I\colon U \cap X_i \in \topo_i$.} 
    \item $\ann(U, x) = \ann_i(X_i \cap U, x)$, where $x \in X_i$;
    \item $x \in \val(p) \Leftrightarrow x \in \val_i(p)$, where $x \in X_i$. 
\end{itemize}
\end{definition}

Similarly to standard modal logic, we obtain the following preservation result with respect to formulas in $\mathfrak{L}_K$ (note that the fact below does not hold for formulas $\A\f$).  

\begin{proposition}\label{prop:DisjointUnion}
Let $\bm{M} = \biguplus_{i\in I} \bm{M}_i$ be a disjoint union as above. Then $\sem{\chi}_{\bm{M}_i} = \semM{\chi} \cap X_i$ for all $\chi \in \mathfrak{L}_K, i \in I$. 
\end{proposition}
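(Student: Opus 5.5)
We argue by structural induction on $\chi \in \mathfrak{L}_K$, proving simultaneously for all $i \in I$ that $\sem{\chi}_{\bm{M}_i} = \semM{\chi} \cap X_i$. Before the induction we record two facts about the disjoint union. First, $\biguplus_{i\in I} \bm{M}_i$ is a $K$-model: conditions \eqref{A(i)}--\eqref{A(v)} transfer directly from each $\ann_i$, since $\ann(U,x) = \ann_i(X_i \cap U, x)$ and intersecting with $X_i$ preserves inclusions and intersections, with $X \cap X_i = X_i$. Second, every $U_i \in \topo_i$ is open in $\topo$ (its traces on the other summands are empty), and conversely $U \cap X_i \in \topo_i$ for every $U \in \topo$.

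The base case holds by the definition of $\val$. The Boolean cases are immediate, because intersecting with $X_i$ commutes with complement relative to $X_i$ and with intersection.

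For $\chi = \Box \f$, take $x \in X_i$. Suppose $x \in \sem{\Box\f}_{\bm{M}_i}$, witnessed by $x \in U_i \subseteq \sem{\f}_{\bm{M}_i}$ with $U_i \in \topo_i$. Then $U_i$ is open in $\topo$, and by the induction hypothesis $U_i \subseteq \semM{\f}$. Conversely, suppose $x \in U \subseteq \semM{\f}$ with $U \in \topo$. Then $U \cap X_i \in \topo_i$ contains $x$, and by the induction hypothesis it is contained in $\semM{\f} \cap X_i = \sem{\f}_{\bm{M}_i}$.

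For $\chi = \F_a\f$, again take $x \in X_i$. Suppose first $U_i \in \evid^i_a(x)$ with $U_i \subseteq \sem{\f}_{\bm{M}_i}$. Viewing $U_i$ as an element of $\topo$, we have $\ann(U_i, x) = \ann_i(U_i, x) \ni a$, and $U_i \subseteq \semM{\f}$, so $x \in \semM{\F_a\f}$. Conversely, suppose $U \in \evid_a(x)$ with $U \subseteq \semM{\f}$. Then $a \in \ann(U,x) = \ann_i(U \cap X_i, x)$, so $U \cap X_i \in \evid^i_a(x)$, and $U \cap X_i \subseteq \sem{\f}_{\bm{M}_i}$ by the induction hypothesis.

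The only delicate point is this $\F_a$ case, where evidence must be transported between the summand and the union. It goes through precisely because the annotation at $x$ is defined to depend only on the trace $U \cap X_i$. The same locality fails for $\A$, which is why the proposition is restricted to $\mathfrak{L}_K$.
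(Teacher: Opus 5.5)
Your proof is correct and follows the paper's own argument: structural induction, with the $\Box$ case handled by passing between $U_i \in \topo_i$ and traces $U \cap X_i$, and the $\F_a$ case by transporting evidence in both directions using $\ann(U,x) = \ann_i(U \cap X_i, x)$. Your preliminary check that the disjoint union satisfies \eqref{A(i)}--\eqref{A(v)} is a small addition that the paper leaves implicit.
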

\begin{proof}
Simple induction on $\chi$, see Appendix~\ref{app:DisjointUnion}. 
\end{proof}

Therefore, a class $\mathsf{C}$ is definable over $\mathfrak{L}_K$ only if it is \emph{closed under disjoint unions}, i.e.\ $\{ \bm{S}_i \}_I \subseteq \mathsf{C} \Rightarrow \biguplus_{I} \bm{S}_i \in \mathsf{C}$. With this, it is straightforward to show the following undefinability results (for the context of 2. below, recall Remark~\ref{rem:GlobalDefinability}). 

\begin{corollary}[Undefinability I]\label{cor:disjointUnionExample}
\phantom{a}
\begin{enumerate}
    \item The class of uniform $K$-seats is not definable in $\mathfrak{L}_K$.
    \item The class of $K$-seats satisfying  $\forall x\in X \colon \evid_\1(X) = \{ x \}$ is not definable in $\mathfrak{L}_K$.  
\end{enumerate}
\end{corollary}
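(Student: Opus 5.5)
The plan is to exploit the closure property noted just before the statement: a class definable over $\mathfrak{L}_K$ must be closed under disjoint unions. First I make this precise at the level of seats. The disjoint union $\biguplus_i \bm{S}_i$ of seats is defined as in Definition~\ref{def:DisjointUnion}, dropping the valuation. Any model $\bm{M}$ on $\biguplus_i \bm{S}_i$ equals $\biguplus_i \bm{M}_i$, where $\bm{M}_i$ is $\bm{S}_i$ equipped with the restricted valuation $\val(p)\cap X_i$. If every $\bm{S}_i\models\f$, then Proposition~\ref{prop:DisjointUnion} gives $\semM{\f}\cap X_i = \sem{\f}_{\bm{M}_i} = X_i$ for every $i$, so $\bm{M}\models\f$. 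I also check routinely that the disjoint union of seats is again a seat: conditions \eqref{A(i)}--\eqref{A(v)} transfer componentwise, because $\ann(U,x)=\ann_i(U\cap X_i,x)$ and intersections and inclusions commute with $\cdot\cap X_i$. Hence it suffices, for each class, to exhibit two members whose disjoint union lies outside the class.

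For both items I use one-point seats. Take $X_i=\{x_i\}$ for $i=1,2$, with the discrete topology $\{\varnothing, X_i\}$. For item 1, set $\ann_1(X_1)=\ann_1(\varnothing)=K$, and $\ann_2(X_2)=K$, $\ann_2(\varnothing)=\varnothing$. Both are seats: $K$ and $\varnothing$ are ideals, monotonicity holds, \eqref{A(iv)} is trivial whenever an empty annotation is involved, and $\0\in\ann_i(X_i)$. Both are trivially uniform, since each has only one state. In the union, however, $\ann(\varnothing,x_1)=\ann_1(\varnothing)=K$ while $\ann(\varnothing,x_2)=\ann_2(\varnothing)=\varnothing$. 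So the union is not uniform. This works for every semiring $K$, including the degenerate case $\0=\1$.

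For item 2 I read the condition as the intended $\evid_\1(x)=\{X\}$ for all $x$ (cf.\ Remark~\ref{rem:GlobalDefinability}); the printed $\evid_\1(X)=\{x\}$ looks like a typo. Take both components to be the seat $\ann_i(X_i)=K$, $\ann_i(\varnothing)=\varnothing$. Then $\evid_\1(x_i)=\{X_i\}$, so each component lies in the class. In the union, $X_1$ is open and $\ann(X_1,x_1)=\ann_1(X_1)=K\ni\1$. Hence $X_1\in\evid_\1(x_1)$ but $X_1\neq X$, so the union is not in the class.

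The only delicate step is the first one: verifying that disjoint unions of seats are seats, and that validity transfers via Proposition~\ref{prop:DisjointUnion}. The counterexamples themselves are immediate once one notices that the annotation of an open set at a state of the $j$-th component depends only on its trace on $X_j$. That is what lets $\ann(X_1,\cdot)$, or $\ann(\varnothing,\cdot)$, differ across components.
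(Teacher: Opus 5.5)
Your proposal is correct and follows the paper's route: the paper derives this corollary from the fact that $\mathfrak{L}_K$-definable classes are closed under disjoint unions (via Proposition~\ref{prop:DisjointUnion}) and calls the witnesses straightforward; your one-point seats supply them correctly, together with the check that a disjoint union of seats is again a seat. Reading item~2 as $\evid_\1(x)=\{X\}$ for all $x$, in line with Remark~\ref{rem:GlobalDefinability}, is the intended reading.
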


%

Next, we define the notion of \emph{bisimulation} for $K$-models. 

\begin{definition}\label{def:bisimulation}
    Let $\bm{M}_1 = \langle X_1, \topo_1, \ann_1, \val_1  \rangle$ and $\bm{M}_2 = \langle X_2, \topo_2,\allowbreak \ann_2, \val_2  \rangle$ be $K$-models. A {\em bisimulation} between these is a relation $Z \subseteq X_1 \times X_2$, such that $x_1 Z x_2$ implies:

\begin{enumerate}
    \item[(i)]   $x_1 \in \val_1(p)$ iff  $x_2 \in \val_2(p)$ for all $p \in \Prop$.
    \item[(ii)]  If $x_1 \in U_1 \in  \topo_1$, then $\exists U_2 \in \topo_2$  s.t.~$x_2 \in U_2$ and for any $y_2 \in U_2$ there exists $y_1 \in U_1$ s.t.~$y_1 Z y_2$.
      \item[(iii)]  If $x_2 \in U_2 \in  \topo_2$, then $\exists U_1 \in \topo_1$  s.t.~$x_1 \in U_1$ and for any $y_1 \in U_1$ there exists $y_2 \in U_2$ s.t.~$y_1 Z y_2$.
      \item[(iv)]  For any $a \in K$,  if $U_1 \in  {\evid_1}_a(x_1)$, then $\exists U_2 \in {\evid_2}_a(x_2)$  s.t.~for any $y_2 \in U_2$ there exists $y_1 \in U_1$ s.t.~$y_1 Z y_2$.
       \item[(v)] For any $a \in K$,  if $U_2 \in  {\evid_2}_a(x_2)$, then $\exists U_1 \in {\evid_1}_a(x_1)$  s.t.~for any $y_1 \in U_1$ there exists $y_2 \in U_2$ s.t.~$y_1 Z y_2$.
\end{enumerate}  
A bisimulation is {\em global} if it further satisfies the following:
\begin{enumerate}
    \item[(vi)] For every $x_1 \in X_1$, there exists $x_2 \in X_2$ s.t.~$x_1 Z x_2$,
     \item[(vii)] For every $x_2 \in X_2$, there exists $x_1 \in X_1$ s.t.~$x_1 Z x_2$.
\end{enumerate}
\end{definition}

\begin{theorem}\label{thm:bisimulation}
Let $\bm{M}_1 = \langle X_1, \topo_1, \ann_1, \val_1  \rangle$ and $\bm{M}_2 = \langle X_2, \topo_2,\allowbreak \ann_2, \val_2  \rangle$  be $K$-models, and $Z \subseteq X_1 \times X_2$ be a bisimulation between them. For any $x_1 \in X_1$, $x_2 \in X_2$ such that $x_1 Z x_2$, and formula $\f \in \mathfrak{L}_{K}$, 
\[
\bm{M}_1, x_1 \models \f \quad \text{iff} \quad \bm{M}_2, x_2 \models \f.
\]
Moreover, if $Z$ is global, the equivalence holds for all $\f \in \mathfrak{L}_{K\forall}$. 
\end{theorem}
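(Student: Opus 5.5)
We argue by structural induction on $\f$, proving the statement simultaneously for all pairs $x_1 Z x_2$ (the induction hypothesis is quantified over all $Z$-related pairs). The base case $\f = p$ is clause (i). The Boolean cases $\neg\f$ and $\f\land\ff$ follow directly from the induction hypothesis. The substantive cases are $\Box\f$ and $\F_a\f$, and, for global $Z$, $\A\f$.

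\emph{Case $\Box\f$.} Suppose $\bm{M}_1, x_1 \models \Box\f$. Then there is $U_1 \in \topo_1$ with $x_1 \in U_1 \subseteq \sem{\f}_{\bm{M}_1}$. By clause (ii) there is $U_2 \in \topo_2$ with $x_2 \in U_2$ such that every $y_2 \in U_2$ has some $y_1 \in U_1$ with $y_1 Z y_2$. Since $\bm{M}_1, y_1 \models \f$, the induction hypothesis applied to the pair $(y_1,y_2)$ gives $\bm{M}_2, y_2 \models \f$. Hence $U_2 \subseteq \sem{\f}_{\bm{M}_2}$, and so $\bm{M}_2, x_2 \models \Box\f$. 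The converse direction is symmetric, using clause (iii).

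\emph{Case $\F_a\f$.} This case is the same argument with clauses (iv) and (v) in place of (ii) and (iii). Suppose $U_1 \in {\evid_1}_a(x_1)$ with $U_1 \subseteq \sem{\f}_{\bm{M}_1}$. Clause (iv) yields $U_2 \in {\evid_2}_a(x_2)$ each of whose points is $Z$-related to some point of $U_1$. The induction hypothesis then gives $U_2 \subseteq \sem{\f}_{\bm{M}_2}$, so $\bm{M}_2, x_2 \models \F_a\f$. The converse uses clause (v). No membership of $x_2$ in $U_2$ is needed here, matching the semantics of $\F_a$.

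\emph{Case $\A\f$ (for global $Z$).} Assume $\bm{M}_1, x_1 \models \A\f$ and let $y_2 \in X_2$ be arbitrary. By clause (vii) there is $y_1$ with $y_1 Z y_2$. Since $\bm{M}_1, y_1 \models \f$, the induction hypothesis gives $\bm{M}_2, y_2 \models \f$. The converse uses clause (vi). For this part we must run the induction over $\mathfrak{L}_{K\forall}$ from the start, so that the $\Box$ and $\F_a$ cases also cover subformulas containing $\A$. This is harmless, since those cases never used the absence of $\A$.

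The only point requiring care is exactly this organisation of the induction: the hypothesis must be stated for all related pairs rather than just $(x_1,x_2)$, because the $\Box$, $\F_a$ and $\A$ cases invoke it at other states $y_1 Z y_2$. Beyond that, each case is a direct matching of the semantic clause with the corresponding bisimulation clause, and no seat axioms \eqref{A(i)}--\eqref{A(v)} are needed.
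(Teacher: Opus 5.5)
Your proposal is correct and follows essentially the same route as the paper: a structural induction in which the $\Box$, $\F_a$ and $\A$ cases are handled by clauses (ii)/(iii), (iv)/(v) and (vi)/(vii) of the bisimulation definition, respectively. You also say explicitly that the induction hypothesis must range over all $Z$-related pairs, and that the global case needs the induction run over $\mathfrak{L}_{K\forall}$ from the start; the paper leaves both points implicit, but the argument is the same.
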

\begin{proof}
See Appendix \ref{app:Proof of Theorem bisimulation}. 
\end{proof}
As an application of this theorem, we can now show the following undefinability results. 

\begin{corollary}[Undefinability II]\label{cor:bisimul example}
\phantom{a}
\begin{enumerate}
    \item The global modality $\A$ is not definable in $\mathfrak{L}_{K}$. 
    \item The cost seat property $\forall U \in \topo\colon \bigsqcup \ann_x(U) \in \ann_x(U)$ at $x$ is not (locally) definable in  $\mathfrak{L}_{K\forall}$.
\end{enumerate}
\end{corollary}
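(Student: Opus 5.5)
For both parts I will use Theorem~\ref{thm:bisimulation}: exhibit two models with related states that disagree on the relevant formula or property. If the theorem says they must agree on every formula of the language in question, the formula or property is not definable there.

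\emph{Part 1.} If $\A$ were definable, some $\ff \in \mathfrak{L}_K$ would satisfy $\sem{\A p}_{\bm{M}} = \sem{\ff}_{\bm{M}}$ in every $K$-model. I take $\bm{M}_1$ to be a one-point model $X_1=\{x\}$ with $\val_1(p)=X_1$, the only possible topology, and $\ann_1(U,x)=K$ for both open sets. This is a bounded, strong, uniform seat, and $\bm{M}_1, x \models \A p$. For $\bm{M}_2$ I take the disjoint union (Definition~\ref{def:DisjointUnion}) of $\bm{M}_1$ with a copy $\{y\}$ in which $p$ is false. Then $\bm{M}_2, x \not\models \A p$.

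Let $Z=\{(x,x)\}$. Condition (i) is immediate. For (ii)--(v), the open sets and the annotation of the $X_1$-component carry over: $\ann(U,x)=\ann_1(U\cap\{x\},x)$ and $\{x\}$ is open in the disjoint union topology. Hence whenever one side offers $U$, the other side can take $U\cap\{x\}$ or $\{x\}$, or $\varnothing$ when $U$ misses $x$; the same sets lie in $\evid_a$ on both sides, and every point of the chosen set is $x$, which is $Z$-related. So $Z$ is a bisimulation. By Theorem~\ref{thm:bisimulation}, $\ff$ has the same truth value at $x$ in both models, while $\A p$ does not, so no such $\ff$ exists. Proposition~\ref{prop:DisjointUnion} gives the same conclusion directly.

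\emph{Part 2.} Here I need a global bisimulation, so that $\mathfrak{L}_{K\forall}$ is covered. The two models should be based on seats over the same complete idempotent $K$, one a cost seat at $x_1$ and the other not a cost seat at $x_2$.

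I take $K=\langle [0,\infty],\inf,+,\infty,0\rangle$ and one-point spaces $X_1=\{x_1\}$, $X_2=\{x_2\}$ with the same valuation. In $\bm{M}_1$ set $\ann_1(\varnothing,x_1)=[1,\infty]$ and $\ann_1(X_1,x_1)=K$. In $\bm{M}_2$ set $\ann_2(\varnothing,x_2)=(1,\infty]$ and $\ann_2(X_2,x_2)=K$. Both satisfy \eqref{A(i)}--\eqref{A(v)}: the half-lines are closed under adding arbitrary elements and under $\inf$, $\infty$ belongs to every annotation, and \eqref{A(iv)} holds because the annotation of $\varnothing$ is absorbing under $+$. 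The supremum in $\sqsubseteq$ corresponds to the real infimum, and $\inf(1,\infty]=1\notin\ann_2(\varnothing,x_2)$, so $\bm{M}_2$ is not a cost seat at $x_2$, whereas $\bm{M}_1$ is one at $x_1$.

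The main obstacle is conditions (iv)/(v) of Definition~\ref{def:bisimulation}. For $a=1$, $\varnothing\in{\evid_1}_1(x_1)$, and the matching $U_2\in{\evid_2}_1(x_2)$ must be $X_2$. Then (iv) would require every point of $X_2$ to be related to some point of $\varnothing$, which is impossible. So the naive pair fails.

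The fix is to make the difference invisible to the logic: $\For_a(P)$ only tests whether some $U\subseteq P$ lies in $\evid_a$, and $\evid_a$ is upward closed by \eqref{A(ii)}. I therefore place the defect on a set that is never the unique witness for $a=1$. Concretely, I use a two-point space and make the cost fail for a non-minimal open set $V$ whose smaller sets already carry resource $1$. For example, give the annotation of $\varnothing$ exactly $[1,\infty]$ in both models. Then $\ann(V)\supseteq[1,\infty]$ in both, and the supremum still exists as long as the annotation is a closed half-line. So I will vary $\ann(V)$ between $[c,\infty]$ and $(c,\infty]$ for some $c<1$ and check that $\evid_a$ differs only at $a=c$. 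I then repair (iv)/(v) at $a=c$ by giving $\bm{M}_2$ an extra state, related by $Z$, whose own open set carries resource $c$. I expect this adjustment to be the delicate part. Once $Z$ is a global bisimulation, Theorem~\ref{thm:bisimulation} gives agreement on all of $\mathfrak{L}_{K\forall}$, so the property is not locally definable.
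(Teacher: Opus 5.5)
Part~1 is correct and is essentially the paper's argument. The paper uses $X_2=\{x_2,y_2\}$ with $\topo_2=\{\varnothing,\{x_2\},X_2\}$, full annotations, $p$ true only at $x_2$, and $Z=\{(x_1,x_2)\}$. Your disjoint-union variant, and the shortcut via Proposition~\ref{prop:DisjointUnion}, work equally well.

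Part~2, however, is a plan rather than a proof. Your diagnosis of the one-point pair is right; in fact $\F_1\bot$ already separates the two points, so no bisimulation can exist there. But you stop exactly where the work lies. You give no models and no $Z$, verify none of the conditions of Definition~\ref{def:bisimulation}, and leave the repair as `the delicate part'. The missing idea is to \emph{hide} the defect rather than keep the seats identical outside $V$ with a near-diagonal $Z$:
\begin{itemize}
\item in the non-cost model, the defective open set $V$ should lie inside a larger open set $W$ whose annotation does contain the infimum;
\item $Z$ should relate the points of $W\setminus V$ to points that are also related to points of $V$;
\item then $V$ and $W$ both correspond to the same open set of the other model, so the missing resource cannot be detected.
\end{itemize}

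The paper does exactly this over $\mathbb{Q}^{\infty}_{\geq 0}$. The non-cost model has $X_1=\{x_1,y_1,z_1\}$, $\topo_1=\{\varnothing,\{x_1,y_1\},X_1\}$, $\ann_1(\{x_1,y_1\},\cdot)=(0,\infty]$ and $\ann_1(X_1,\cdot)=[0,\infty]$. The cost model has $X_2=\{x_2,y_2\}$ with the trivial topology and $\ann_2(X_2,\cdot)=[0,\infty]$. All atoms are true everywhere, and $Z=\{(x_1,x_2),(y_1,y_2),(z_1,y_2)\}$. Since $Z$ maps both $\{x_1,y_1\}$ and $X_1$ onto $X_2$, the missing resource $0$ on $\{x_1,y_1\}$ is invisible. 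The annotation of $\varnothing$ must agree at related points, e.g.\ $\{\infty\}$, for precisely the reason you found. The $\{1\}$ printed for $\ann_1(\varnothing,x)$ in the paper cannot be right: it is not even an ideal, and it would break condition (iv) at $a=1$.

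Your own two-point plan closes the same way, with no extra state. Take $\{u,v\}$ with $\topo=\{\varnothing,\{u\},X\}$ and $c<1$. In $\bm{M}_1$, annotate $\varnothing,\{u\},X$ uniformly by $[1,\infty]$, $[c,\infty]$, $K$; in $\bm{M}_2$, by $[1,\infty]$, $(c,\infty]$, $K$. Use a constant valuation and $Z=\{(u_1,u_2),(u_1,v_2),(v_1,v_2)\}$. At $a=c$, the set $\{u_1\}$ is matched by $X_2$, which is exactly the $Z$-image of $\{u_1\}$, and the remaining conditions are routine to check. Finally, $\mathfrak{L}_K$ is only defined for countable $K$, so use rational endpoints, as the paper does, rather than $[0,\infty]$.
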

\begin{proof}
See Appendix~\ref{app:corr example}. 
\end{proof}

\section{Related Work}\label{sec:RelatedWork}

%
Several works have explored epistemic logics that explicitly represent the cost of information and the resources needed to obtain it; see  \cite{NaumovTao2015,DolgorukovEtAl2024,GalmicheEtAl2019,BelardinelliRendsvig2021,Solaki2022,CostantiniEtAl2021,BalbianiEtAl2019a,BaurStuder2021,HeifetzMongin2001} for example. 
However, our approach differs from the existing work in two fundamental ways:

(i) In existing work, costs, budgets and resources are often represented using a specific structure, often a numerical one such as natural or rational numbers, or an unspecified cardinal number \cite{NaumovTao2015,DolgorukovEtAl2024,BelardinelliRendsvig2021,Solaki2022,BalbianiEtAl2019a,HeifetzMongin2001}. Moreover, these works usually assume that costs are uniform, i.e.~they do not depend on the state.  In contrast, our framework is \emph{more general}, allowing for non-uniform costs and arbitrary semirings. This enables the modelling of non-linear resource structures, such as security clearance levels or multidimensional resource vectors (e.g.\ time \emph{and} money), the values of which cannot be adequately captured by a linear structure.

(ii) Existing work typically uses discrete semantic structures coming from modal logic, knowledge representation, and multi-agent systems, such as Kripke frames, neighbourhood frames, and combinations of the two. While these can be seen as endowed with a discrete topology, such an approach misses the opportunity to connect resource-bounded epistemic logic with the work on the topology of observable properties, which has proved to be fruitful in other areas of computer science. In our work, \emph{seats provide a bridge} between topological models of evidence and  resource-conscious models of evidence access. 

An existing approach that is perhaps the closest to our contribution is the \emph{Stratified Evidence Logic} of \cite{BalbianiEtAl2019a}. It builds on Evidence Logic \cite{vanBenthemEtAl2014}, based on neighborhood models, and extends it with a representation of the cost of evidence -- the number (taken from an arbitrary cardinal) of pieces of evidence that need to be combined in order to obtain a given piece of evidence. Our framework subsumes this: a stratified evidence frame can be viewed as a strong, uniform, bounded $K$-seat where $K$ is the tropical semiring over an arbitrary cardinal.

Another closely related field is that of \emph{Markovian logics}; see \cite{HeifetzMongin2001,KozenEtAl2013,Zhou2007,Zhou2014} for example. These are modal logics for reasoning about probabilistic belief in Harsanyi's type spaces and transitions in Markov processes. The language of these logics contains modal formulas $\mathrm{L}_a \varphi$ indicating that the subjective probability of $\varphi$ for an agent is at least $a$, or that the probability of transitioning to a state where $\varphi$ holds is at least $a$. Markovian logics aim at modelling probability of propositions instead of accessibility (cost) of evidence needed to support them. Accordingly, they use specific numerical semirings. However, Example \ref{exam:BorelCost} suggests a connection to our framework. Markovian models give rise to particular seats and our formulas $\F_a \f$ can be approximated by $\mathrm{L}_{1 - a} \f$ in the Markovian setting. This shows that Markovian logics can be integrated into the broader family of logics discussed here, with seats serving as a unifying class of structures.


\section{Conclusion}\label{sec:Conclusion}

We have taken the first steps in studying Resource-Aware Topological Evidence Logic. We introduced the concept of a semiring-annotated topological space (seat), a mathematical model combining two ideas: first, evidence linked to observable properties of an environment is modelled using open sets of a topology on the possible states of the environment; and second, the resources sufficient to obtain this evidence are modelled using an annotation function from the collection of open sets to subsets of a semiring. We have also defined seat-based epistemic logics and established strong completeness for logics of specific classes of seats. Moreover, we define the notions of disjoint unions and bisimulations to delineate the expressive power of the discussed logics.

There are a number of interesting problems that we will pursue in the future. Firstly, we aim to determine the decidability and computational complexity of our seat-based logics, to see whether an explicit representation of resources incurs a computational cost. 
Secondly, we will develop dynamic extensions of our logics to model the change of costs and information dynamics. We will explore two approaches to dynamics on topological spaces. The first one is based on continuous transition functions on the state space (dynamical systems) and the second one on discrete information updates using subspace topologies (dynamic epistemic logic). 
Thirdly, drawing upon work in multi-agent TEL \cite{BaltagEtAl2022a, BaltagEtAl2025a, FernandezGonzalez2018, RamirezAbarca2015}, we will develop multi-agent variants of seat-based TEL to model situations where agents need to cooperate on their respective budgets.  
Fourthly, building on the applications of epistemic logic in learning theory \cite{BaltagEtAl2016a}, we will explore resource-aware counterparts to the notions of learnability in the limit and the solvability of inductive problems.
Finally, to better understand the mathematical properties of our framework, we will extend existing work on topological modal logics that proves completeness with respect to specific topological spaces such as $\mathbb{R}^n$ \cite{BaltagEtAl2019} and explores modal characterisation and definability \cite{tenCateEtAl2009}, to our resource-aware framework. 

\paragraph{Acknowledgement} This work was supported by Operational Programme Jan \'{A}mos Komenský project `\emph{Biography of Fake News with a Touch of AI: Dangerous Phenomenon through the Prism of Modern Human Sciences}' (reg.\ CZ.02.01.01/00/23\_025/0008724).


\begin{thebibliography}{40}
\providecommand{\natexlab}[1]{#1}
\providecommand{\url}[1]{\texttt{#1}}
\expandafter\ifx\csname urlstyle\endcsname\relax
  \providecommand{\doi}[1]{doi: #1}\else
  \providecommand{\doi}{doi: \begingroup \urlstyle{rm}\Url}\fi

\bibitem[Abramsky(1991)]{Abramsky1991}
S.~Abramsky.
\newblock Domain theory in logical form.
\newblock \emph{Annals of Pure and Applied Logic}, 51\penalty0
  (1–2):\penalty0 1--77, Mar. 1991.
\newblock ISSN 0168-0072.
\newblock \doi{10.1016/0168-0072(91)90065-t}.

\bibitem[Aiello et~al.(2003)Aiello, {van Benthem}, and
  Bezhanishvili]{AielloEtAl2003}
M.~Aiello, J.~{van Benthem}, and G.~Bezhanishvili.
\newblock Reasoning about space: {T}he modal way.
\newblock \emph{Journal of Logic and Computation}, 13\penalty0 (6):\penalty0
  889--920, Dec. 2003.
\newblock ISSN 1465-363X.
\newblock \doi{10.1093/logcom/13.6.889}.

\bibitem[Balbiani et~al.(2019)Balbiani, Fernández-Duque, Herzig, and
  Lorini]{BalbianiEtAl2019a}
P.~Balbiani, D.~Fernández-Duque, A.~Herzig, and E.~Lorini.
\newblock Stratified evidence logics.
\newblock In \emph{Proceedings of the Twenty-Eighth International Joint
  Conference on Artificial Intelligence}, IJCAI-2019, pages 1523--1529.
  International Joint Conferences on Artificial Intelligence Organization, Aug.
  2019.
\newblock \doi{10.24963/ijcai.2019/211}.

\bibitem[Baltag et~al.(2016{\natexlab{a}})Baltag, Bezhanishvili, Özgün, and
  Smets]{BaltagEtAl2016}
A.~Baltag, N.~Bezhanishvili, A.~Özgün, and S.~Smets.
\newblock Justified belief and the topology of evidence.
\newblock In \emph{Logic, Language, Information, and Computation}, pages
  83--103. Springer Berlin Heidelberg, 2016{\natexlab{a}}.
\newblock ISBN 9783662529218.
\newblock \doi{10.1007/978-3-662-52921-8_6}.

\bibitem[Baltag et~al.(2016{\natexlab{b}})Baltag, Gierasimczuk, and
  Smets]{BaltagEtAl2016a}
A.~Baltag, N.~Gierasimczuk, and S.~Smets.
\newblock On the solvability of inductive problems: {A} study in epistemic
  topology.
\newblock In \emph{{TARK 2015}}, volume 215 of \emph{Electronic Proceedings in
  Theoretical Computer Science}, pages 81--98. Open Publishing Association,
  2016{\natexlab{b}}.
\newblock \doi{10.4204/eptcs.215.7}.

\bibitem[Baltag et~al.(2019)Baltag, Bezhanishvili, and
  Fernández~González]{BaltagEtAl2019}
A.~Baltag, N.~Bezhanishvili, and S.~Fernández~González.
\newblock The {McKinsey}-{Tarski} theorem for topological evidence logics.
\newblock In R.~Iemhoff, M.~Moortgat, and R.~{de Queiroz}, editors,
  \emph{Logic, Language, Information, and Computation}, pages 177--194.
  Springer Berlin Heidelberg, 2019.
\newblock ISBN 9783662595336.
\newblock \doi{10.1007/978-3-662-59533-6_11}.

\bibitem[Baltag et~al.(2022{\natexlab{a}})Baltag, Bezhanishvili, and
  Fernández~González]{BaltagEtAl2022a}
A.~Baltag, N.~Bezhanishvili, and S.~Fernández~González.
\newblock Topological evidence logics: Multi-agent setting.
\newblock In \emph{Language, Logic, and Computation (TbiLLC 2019)}, pages
  237--257. Springer International Publishing, 2022{\natexlab{a}}.
\newblock ISBN 9783030984793.
\newblock \doi{10.1007/978-3-030-98479-3_12}.

\bibitem[Baltag et~al.(2022{\natexlab{b}})Baltag, Bezhanishvili, \"{O}zg\"{u}n,
  and Smets]{BaltagEtAl2022}
A.~Baltag, N.~Bezhanishvili, A.~\"{O}zg\"{u}n, and S.~Smets.
\newblock Justified belief, knowledge, and the topology of evidence.
\newblock \emph{Synthese}, 200\penalty0 (6), Dec. 2022{\natexlab{b}}.
\newblock ISSN 1573-0964.
\newblock \doi{10.1007/s11229-022-03967-6}.

\bibitem[Baltag et~al.(2025{\natexlab{a}})Baltag, Bezhanishvili, and
  Fernández-Duque]{BaltagEtAl2025}
A.~Baltag, N.~Bezhanishvili, and D.~Fernández-Duque.
\newblock The topology of surprise.
\newblock \emph{Artificial Intelligence}, 349:\penalty0 104423, Dec.
  2025{\natexlab{a}}.
\newblock ISSN 0004-3702.
\newblock \doi{10.1016/j.artint.2025.104423}.

\bibitem[Baltag et~al.(2025{\natexlab{b}})Baltag, Gattinger, and
  Gomes]{BaltagEtAl2025a}
A.~Baltag, M.~Gattinger, and D.~Gomes.
\newblock Virtual group knowledge and group belief in topological evidence
  models (extended version).
\newblock Technical report, 2025{\natexlab{b}}.

\bibitem[Baur and Studer(2021)]{BaurStuder2021}
M.~Baur and T.~Studer.
\newblock Semirings of evidence.
\newblock \emph{Journal of Logic and Computation}, 31\penalty0 (8):\penalty0
  2084--2106, Feb. 2021.
\newblock ISSN 1465-363X.
\newblock \doi{10.1093/logcom/exab007}.

\bibitem[Belardinelli and Rendsvig(2021)]{BelardinelliRendsvig2021}
G.~Belardinelli and R.~K. Rendsvig.
\newblock Epistemic planning with attention as a bounded resource.
\newblock In \emph{Logic, Rationality, and Interaction (LORI 2021)}, pages
  14--30. Springer International Publishing, 2021.
\newblock ISBN 9783030887087.
\newblock \doi{10.1007/978-3-030-88708-7_2}.

\bibitem[Bistarelli et~al.(1997)Bistarelli, Montanari, and
  Rossi]{BistarelliEtAl1997}
S.~Bistarelli, U.~Montanari, and F.~Rossi.
\newblock Semiring-based constraint satisfaction and optimization.
\newblock \emph{Journal of the ACM}, 44\penalty0 (2):\penalty0 201--236, Mar.
  1997.
\newblock ISSN 1557-735X.
\newblock \doi{10.1145/256303.256306}.

\bibitem[Bjorndahl and Özgün(2019)]{BjorndahlOzgun2019}
A.~Bjorndahl and A.~Özgün.
\newblock Logic and topology for knowledge, knowability, and belief.
\newblock \emph{The Review of Symbolic Logic}, 13\penalty0 (4):\penalty0
  748--775, Oct. 2019.
\newblock ISSN 1755-0211.
\newblock \doi{10.1017/s1755020319000509}.

\bibitem[Blackburn et~al.(2001)Blackburn, {de Rijke}, and
  Venema]{BlackburnEtAl2001}
P.~Blackburn, M.~{de Rijke}, and Y.~Venema.
\newblock \emph{{Modal Logic}}.
\newblock Cambridge University Press, Cambridge, 2001.
\newblock \doi{10.1017/cbo9781107050884}.

\bibitem[Bonatti et~al.(2002)Bonatti, De~Capitani~di Vimercati, and
  Samarati]{BonattiEtAl2002}
P.~Bonatti, S.~De~Capitani~di Vimercati, and P.~Samarati.
\newblock An algebra for composing access control policies.
\newblock \emph{ACM Transactions on Information and System Security},
  5\penalty0 (1):\penalty0 1--35, Feb. 2002.
\newblock ISSN 1557-7406.
\newblock \doi{10.1145/504909.504910}.

\bibitem[Costantini et~al.(2021)Costantini, Formisano, and
  Pitoni]{CostantiniEtAl2021}
S.~Costantini, A.~Formisano, and V.~Pitoni.
\newblock An epistemic logic for multi-agent systems with budget and costs.
\newblock In \emph{Logics in Artificial Intelligence (JELIA` 2021)}, pages
  101--115. Springer International Publishing, 2021.
\newblock ISBN 9783030757755.
\newblock \doi{10.1007/978-3-030-75775-5_8}.

\bibitem[Dolgorukov et~al.(2024)Dolgorukov, Galimullin, and
  Gladyshev]{DolgorukovEtAl2024}
V.~Dolgorukov, R.~Galimullin, and M.~Gladyshev.
\newblock Dynamic epistemic logic of resource bounded information mining
  agents.
\newblock In \emph{Proceedings of the 23rd International Conference on
  Autonomous Agents and Multiagent Systems}, AAMAS '24, page 481–489,
  Richland, SC, 2024. International Foundation for Autonomous Agents and
  Multiagent Systems.
\newblock ISBN 9798400704864.

\bibitem[Dudek et~al.(1991)Dudek, Jenkin, Milios, and Wilkes]{DudekEtAl1991}
G.~Dudek, M.~Jenkin, E.~Milios, and D.~Wilkes.
\newblock Robotic exploration as graph construction.
\newblock \emph{IEEE Transactions on Robotics and Automation}, 7\penalty0
  (6):\penalty0 859--865, 1991.
\newblock \doi{10.1109/70.105395}.

\bibitem[Fagin et~al.(1995)Fagin, Halpern, Moses, and Vardi]{FaginEtAl1995}
R.~Fagin, J.~Y. Halpern, Y.~Moses, and M.~Y. Vardi.
\newblock \emph{{Reasoning About Knowledge}}.
\newblock MIT Press, 1995.
\newblock \doi{10.7551/mitpress/5803.001.0001}.

\bibitem[Fernández~González(2018)]{FernandezGonzalez2018}
S.~Fernández~González.
\newblock Generic models for topological evidence logics.
\newblock {MS}c.~{T}hesis, ILLC, University of Amsterdam, Amsterdam, 2018.

\bibitem[Galmiche et~al.(2019)Galmiche, Kimmel, and Pym]{GalmicheEtAl2019}
D.~Galmiche, P.~Kimmel, and D.~Pym.
\newblock A substructural epistemic resource logic: theory and modelling
  applications.
\newblock \emph{Journal of Logic and Computation}, 29\penalty0 (8):\penalty0
  1251--1287, Dec. 2019.
\newblock ISSN 1465-363X.
\newblock \doi{10.1093/logcom/exz024}.

\bibitem[Green et~al.(2007)Green, Karvounarakis, and Tannen]{GreenEtAl2007}
T.~J. Green, G.~Karvounarakis, and V.~Tannen.
\newblock Provenance semirings.
\newblock In \emph{PODS 2007}, pages 31--40. ACM, 2007.
\newblock \doi{10.1145/1265530.1265535}.

\bibitem[Heifetz and Mongin(2001)]{HeifetzMongin2001}
A.~Heifetz and P.~Mongin.
\newblock Probability logic for type spaces.
\newblock \emph{Games and Economic Behavior}, 35\penalty0 (1–2):\penalty0
  31--53, Apr. 2001.
\newblock ISSN 0899-8256.
\newblock \doi{10.1006/game.1999.0788}.

\bibitem[Kozen et~al.(2013)Kozen, Mardare, and Panangaden]{KozenEtAl2013}
D.~Kozen, R.~Mardare, and P.~Panangaden.
\newblock Strong completeness for markovian logics.
\newblock In K.~Chatterjee and J.~Sgall, editors, \emph{Mathematical
  Foundations of Computer Science 2013. MFCS 2013.}, number 8087 in Lecture
  Notes in Computer Science, pages 655--666, Berlin, Heidelberg, 2013.
  Springer.
\newblock ISBN 9783642403132.
\newblock \doi{10.1007/978-3-642-40313-2_58}.

\bibitem[Kremer and Mints(2005)]{KremerMints2005}
P.~Kremer and G.~Mints.
\newblock Dynamic topological logic.
\newblock \emph{Annals of Pure and Applied Logic}, 131\penalty0
  (1–3):\penalty0 133--158, Jan. 2005.
\newblock ISSN 0168-0072.
\newblock \doi{10.1016/j.apal.2004.06.004}.

\bibitem[Kuich and Salomaa(1986)]{KuichSalomaa1986}
W.~Kuich and A.~Salomaa.
\newblock \emph{{S}emirings, {A}utomata, {L}anguages}.
\newblock EATCS Monographs on Theoretical Computer Science 5. Springer, 1986.
\newblock ISBN 978-3-642-69961-0,978-3-642-69959-7.
\newblock \doi{10.1007/978-3-642-69959-7}.

\bibitem[McKinsey and Tarski(1944)]{McKinseyTarski1944}
J.~C.~C. McKinsey and A.~Tarski.
\newblock The algebra of topology.
\newblock \emph{The Annals of Mathematics}, 45\penalty0 (1):\penalty0 141, Jan.
  1944.
\newblock ISSN 0003-486X.
\newblock \doi{10.2307/1969080}.

\bibitem[Munkres(2000)]{Munkres2000}
J.~R. Munkres.
\newblock \emph{Topology}.
\newblock Prentice Hall, 2nd edition edition, 2000.

\bibitem[Naumov and Tao(2015)]{NaumovTao2015}
P.~Naumov and J.~Tao.
\newblock Budget-constrained knowledge in multiagent systems.
\newblock In \emph{Proceedings of the 2015 International Conference on
  Autonomous Agents and Multiagent Systems}, AAMAS '15, page 219–226,
  Richland, SC, 2015. International Foundation for Autonomous Agents and
  Multiagent Systems.
\newblock ISBN 9781450334136.

\bibitem[Ramírez~Abarca(2015)]{RamirezAbarca2015}
A.~I. Ramírez~Abarca.
\newblock Topological models for group knowledge and belief.
\newblock Msc thesis, ILLC, University of Amsterdam, 2015.
\newblock URL \url{https://eprints.illc.uva.nl/id/eprint/2250}.

\bibitem[Sandhu et~al.(1996)Sandhu, Coyne, Feinstein, and
  Youman]{SandhuEtAl1996}
R.~Sandhu, E.~Coyne, H.~Feinstein, and C.~Youman.
\newblock Role-based access control models.
\newblock \emph{Computer}, 29\penalty0 (2):\penalty0 38--47, 1996.
\newblock ISSN 0018-9162.
\newblock \doi{10.1109/2.485845}.

\bibitem[Smyth(1992)]{Smyth1992}
M.~B. Smyth.
\newblock Topology.
\newblock In S.~Abramsky, D.~M. Gabbay, and T.~S.~E. Maibaum, editors,
  \emph{Handbook of Logic in Computer Science}, volume~1, pages 641--761.
  Oxford University Press, Oxford, 1992.
\newblock ISBN 9781383026023.
\newblock \doi{10.1093/oso/9780198537359.003.0005}.

\bibitem[Solaki(2022)]{Solaki2022}
A.~Solaki.
\newblock Actualizing distributed knowledge in bounded groups.
\newblock \emph{Journal of Logic and Computation}, 33\penalty0 (6):\penalty0
  1497--1525, Mar. 2022.
\newblock ISSN 1465-363X.
\newblock \doi{10.1093/logcom/exac007}.

\bibitem[{ten Cate} et~al.(2009){ten Cate}, Gabelaia, and
  Sustretov]{tenCateEtAl2009}
B.~{ten Cate}, D.~Gabelaia, and D.~Sustretov.
\newblock Modal languages for topology: {E}xpressivity and definability.
\newblock \emph{Annals of Pure and Applied Logic}, 159\penalty0
  (1–2):\penalty0 146--170, May 2009.
\newblock ISSN 0168-0072.
\newblock \doi{10.1016/j.apal.2008.11.001}.

\bibitem[{van Benthem} et~al.(2014){van Benthem}, {Fernández-Duque}, and
  Pacuit]{vanBenthemEtAl2014}
J.~{van Benthem}, D.~{Fernández-Duque}, and E.~Pacuit.
\newblock {Evidence and plausibility in neighborhood structures}.
\newblock \emph{Annals of Pure and Applied Logic}, 165\penalty0 (1):\penalty0
  106--133, Jan. 2014.
\newblock ISSN 0168-0072.
\newblock \doi{10.1016/j.apal.2013.07.007}.

\bibitem[Vickers(1989)]{Vickers1989}
S.~Vickers.
\newblock \emph{Topology via Logic}.
\newblock Cambridge University Press, 1989.

\bibitem[Zhou(2007)]{Zhou2007}
C.~Zhou.
\newblock \emph{{Complete Deductive Systems for Probability Logic with
  Application to Harsanyi Type Spaces}}.
\newblock Phd thesis, Indiana University, 2007.

\bibitem[Zhou(2014)]{Zhou2014}
C.~Zhou.
\newblock Probability logic for harsanyi type spaces.
\newblock \emph{Logical Methods in Computer Science}, 10\penalty0 (2), 2014.
\newblock ISSN 1860-5974.
\newblock \doi{10.2168/lmcs-10(2:13)2014}.

\bibitem[Özgün et~al.(2025)Özgün, Smets, and Zotescu]{OzgunEtAl2025}
A.~Özgün, S.~Smets, and T.-S. Zotescu.
\newblock Evidence diffusion in social networks: a topological perspective.
\newblock In V.~Goranko, C.~Shi, and W.~Wang, editors, \emph{Logic,
  Rationality, and Interaction}, pages 110--123. Springer Nature Singapore,
  Oct. 2025.
\newblock ISBN 9789819524815.
\newblock \doi{10.1007/978-981-95-2481-5_8}.

\end{thebibliography}

\appendix
\section{Appendix: Full proofs and details}\label{app}

This appendix contains the full proofs of the statements made in the main text.

\subsection{Failure of interior properties for $\Int_a$}\label{app:Inta}
 
We show that the properties (i) $\Int_a(X) = X$, (ii) $Int_a(P) \subseteq \Int_a\Int_a(P)$ and (iii) $\Int_a(P) \cap \Int_a(Q) \subseteq \Int_a(P \cap Q)$ do not hold in all seats $\langle X, \topo, \ann_K\rangle$, providing a counter-example with $K = \mathbb{N}^\infty = \langle \mathbb{N} \cup \{ \infty \}, \min, +, \0, \1 \rangle$.  

Let $X = \{ x,y,z \}$ be a three-element set with topology $\topo = \{ \varnothing, \{ x \}, \{ x,y \}, \{ x,z \}, X  \}$. 
Let $\ann(X,x) = \ann(\{x,y\}, x) = \ann(\{ x,z \}, x) = {\uparrow}42$ and let $\ann$ return ${\uparrow}43$ in all remaining cases. Then $\Int_{42}(X) = \{ x \}$ contra (i), $\Int_{42}(\{x\}) = \varnothing$ contra (ii) and $\Int_{42}(\{ x,y \}) \cap \Int_{42}(\{ x,z \}) = \{ x \}$ contra (iii).

\subsection{Proof of Lemma \ref{lem:SoudnessS4K}}\label{app:SoudnessS4K}
We prove that all axioms are valid and that all inference rules preserve validity in all $K$-seats. Since $\llbracket \Box\f\rrbracket = \Int \llbracket \f\rrbracket$, validity of the $\mathbf{S4}$ axioms
\[ \Box (\f \land \ff) \tot (\Box \f \land \Box \ff) \qquad \Box \f \to \f \qquad \Box \f \to \Box\Box \f\] 
directly follows from the corresponding properties of the interior operator: $\Int(P \cap Q) = \Int(P) \cap \Int(Q)$, $\Int(P) \subseteq P$ and $\Int(P) \subseteq \Int(\Int(P))$, respectively.  
The rule $\f / \Box \f$ preserves validity in $K$-seats because $\Int(X) = X$ in each topological space. Validity of the additional $\mathbf{S4}_K$ axioms is established as follows.

\eqref{a:Fmult}: If $\bm{M}, x \models \F_a\f$, then there is an open $U \subseteq \semM{\f}$ with $a \in \ann_x(U)$. Since $ab, ba \in \ann_x(U)$ by \eqref{A(i)}, we immediately obtain $\bm{M}, x \models \F_{ab}\f \wedge \F_{ba}\f$.

\eqref{a:Fplus}: Suppose $\bm{M}, x \models \F_a\f \wedge \F_b\ff$. Then there are opens $U \subseteq \sem{\f}_{\bm{M}}$ and $V \subseteq \sem{\ff}_{\bm{M}}$ with $a \in \ann_x(U)$ and $b \in \ann_x(V)$. By \eqref{A(ii)} we get $a,b \in \ann_x(U\cup V)$ and by \eqref{A(iii)} this yields $a\oplus b \in \ann_x(U \cup V)$. Since $U \cup V \subseteq \Int(\sem{\f}_{\bm{M}}) \cup \Int(\sem{\ff}_{\bm{M}}) = \sem{\Box\f \vee \Box \ff }_{\bm{M}}$, we have $\bm{M}, x \models \F_{a\oplus b}(\Box\f \vee \Box\ff)$.

\eqref{a:Fab}: Suppose $\bm{M}, x \models \F_a\f \wedge \F_b\ff$. Then there are opens $U \subseteq \sem{\f}_{\bm{M}}$ and $V \subseteq \sem{\ff}_{\bm{M}}$ with $a \in \ann_x(U)$ and $b \in \ann_x(V)$. By \eqref{A(iv)} we have $ab \in \ann_x(U \cap V)$. Since $U \cap V \subseteq \semM{\f} \cap \semM{\ff} = \semM{\f \wedge \ff}$, we obtain $\bm{M}, x \models \F_{ab}(\f \wedge \ff)$. 

\eqref{a:Top}: Thanks to \eqref{A(v)} we have $\0\in \ann_x(X)$ and together with $\semM{\top} = X$ this gives us $\bm{M}, x \models \F_\0\top$.   

Lastly, \eqref{a:FBox} and \eqref{a:Fmono} follow directly from Lemma~\ref{lem:SoundnessHelp} (items 3.\ and 1., respectively). \qed

\subsection{Proof of Lemma \ref{lem:CanModelS4K}}\label{app:CanModelS4K}
It is sufficient to prove that the canonical $\langle X, \topo, \ann_K\rangle$ is a seat, that is, to show \eqref{A(i)}--\eqref{A(v)} of Definition~\ref{def:seat}.           

\eqref{A(i)}:
Let $a \in \ann_\Gamma(U)$ and $b \in K$. By definition of the former there is $\F_a\f \in \Gamma$ with $|\Box \f| \subseteq U$ and using \eqref{a:Fmult} we obtain $\F_{ab}\f, \F_{ba}\f\in \Gamma$ witnessing $ab, ba \in \ann_\Gamma(U)$ as desired.

\eqref{A(ii)}:
It is straightforward to see by definition that $a \in \ann_\Gamma(U)$ and $U \subseteq V$ imply $a \in \ann_\Gamma(V)$.

\eqref{A(iii)}:
If $a,b \in \ann_\Gamma(U)$, then there are $\F_a\f, \F_b\ff \in \Gamma$ with $|\Box\f| \subseteq U$ and $|\Box\ff|\subseteq  U$. Therefore, $|\Box \f \vee \Box\ff| \subseteq U$ and furthermore $\F_{a\oplus b}(\Box\f \vee \Box \psi) \in \Gamma$ due to \eqref{a:Fplus}, which witnesses $a\oplus b \in \ann_\Gamma(U)$ together with the axiom $\Box \chi \to \chi$ of $\mathbf{S4}$.

\eqref{A(iv)}:
If $a \in \ann_\Gamma(U)$ and $b \in \ann_\Gamma(V)$, then there are formulas $\f, \ff$ such that $\F_a \f \land \F_b \ff \in \Gamma$, $| \Box \f| \subseteq U$ and $| \Box \ff| \subseteq V$. Hence, $|\Box (\f \land \ff)| = |\Box \f| \cap |\Box \ff| \subseteq U \cap V$. By \eqref{a:Fab}, $\F_{ab}(\f \land \ff) \in \Gamma$. 

\eqref{A(v)}:
Due to \eqref{a:Top} we have $\F_\0\top \in \Gamma$, and together with $|\Box \top| = |\top| = X$ this shows $\0 \in \ann_\Gamma(X)$ as desired. 
\qed

\subsection{Proof of Lemma \ref{lem:TruthLemmaS4K}}\label{app:TruthLemmaS4K}
Structural induction on $\chi$. The base case $\chi \in \Prop$ holds by definition and the inductive  cases for $\chi = \neg \f$ and $\chi = \f \land \f'$ are established using the induction hypothesis $|\ff| = \llbracket \ff\rrbracket_{\bm{M}^{\mathbf{S4}_K}}$ for $\ff \in \{ \f, \f' \}$ and the properties of maximal consistent theories in the usual way.

The case $\chi = \Box \f$ is established as follows. If $\Box \f \in \Gamma$, then $\Gamma \in | \Box \f | \in \topo$ and $| \Box \f | \subseteq |\f|$ using the $\mathbf{S4}$-axiom $\Box \f \to \f$. 
Conversely, if $U \in \topo$ and $\Gamma \in U$, then there is $\Box \ff$ such that $\Box \ff \in \Gamma$ (since $U = \bigcup_{i \in I} | \Box \ff_i |$). If also $U \subseteq |\f|$, then $|\Box \ff | \subseteq |\f|$, which means that $\Box \ff \to \f$ is provable, whence $\Box \ff \to \Box \f$ is provable by $\mathbf{S4}$ and so $\Box \f \in \Gamma$. 

The case $\chi = \F_a \f$ is established as follows. If $\F_a \f \in \Gamma$, then $|\Box \f| \in \evid_a(\Gamma)$ and $|\Box \f| \subseteq |\f|$ by $\mathbf{S4}$. Conversely, assume that there is $U \in \topo$ such that $\evid_a(\Gamma)$ and $U \subseteq |\f|$; $\evid_a(\Gamma)$ means that $\exists \ff$ such that $\F_a \ff \in \Gamma$ and $|\Box \ff| \subseteq U$. It follows that $|\Box \ff| \subseteq | \f |$, which means that $\Box \ff \to \f$ is provable in $\mathbf{S4}_K$. Hence, $\F_a \Box \ff \to \F_a \f$ is provable by \eqref{a:Fmono}, which means that $\F_a \ff \to \F_a \f$ is provable by \eqref{a:FBox}. Hence, $\F_a \f \in \Gamma$. \qed

\subsection{Proof of Lemma~\ref{lem:CharactS4sb}}\label{app:charactS4sb}
Suppose $\langle X, \topo, \ann \rangle$ is a strong bounded seat. We show that it validates \eqref{a:Fplus2} -- \eqref{a:Bot0}. Let $\bm{M} = \langle X, \topo, \ann_K, \val \rangle$ be a $K$-model and $x \in X$. If $\bm{M}, x \models \F_{a\oplus b}\f$, then $\exists U \in \topo$ with $a\oplus b \in \ann(U,x)$ and $U \subseteq \semM{\f}$. By definition of strong seats, this implies $a,b \in \ann(U,x)$ which yields $\bm{M}, x \models \F_{a}\f \wedge \F_b\f$, establishing \eqref{a:Fplus2}. 
For \eqref{a:Top1} -- \eqref{a:Bot0}, boundedness entails $\1 \in \ann(X,x)$ and $\0 \in \ann(\varnothing, x)$, which together with $X \subseteq \semM{\top}$ and $\varnothing \subseteq \semM{\bot}$ yields $\bm{M},x \models \F_\1\top$ and $\bm{M},x \models \F_\0\bot$.

Conversely, assume that $\langle X, \topo, \ann \rangle$ is not strong bounded. If it is not strong, then there are $a,b \in K$, $U \in \topo$ and $x \in x$ such that $a\oplus b \in \ann(U,x)$ but $a \notin \ann(U,x)$. Let $\bm{M} = \langle X, \topo, \ann, \val \rangle$ be any model with $\val(p) = U$. Then $\bm{M}, x \models \F_{a\oplus b} p$ but $\bm{M},x \not\models \F_a p$, showing that \eqref{a:Fplus2} is not valid. 
If it is not $\1$-bounded, there is $x\in X$ with $\1 \notin \ann(X,x)$. Then $\bm{M}, x \not \models \F_\1\top$ in any model, since otherwise there is $U\in \topo$ with $\1 \in \ann(U, x)$ and by \eqref{A(ii)} $\1 \in \ann(X,x)$, a contradiction. If it is not $\0$-bounded, there is $x \in X$ with $\0 \notin \ann(\varnothing, x)$. Then clearly $\bm{M}, x \not \models \F_\0\bot$. \qed

\subsection{Proof of Lemma~\ref{lem:useful}}\label{app:useful}
Assume $|\f|^{\Lambda} \subseteq |\ff|$. By compactness of $\proves_{\mathbf{S4sub}_K}$ we infer
\[\proves_{\mathbf{S4sub}_K} \f \land \bigwedge_{i = 1}^{n} \pm \F_{a_i} \chi_i \to \ff\] for some finite $\{ \pm \F_{a_i} \chi_i\}_{i =1}^n \subseteq \Lambda$. In the following, let $\F^\ast_a = \F_\1$ and $\Box^\ast = \Box$. We continue as follows:
\begin{align*}
& \proves_{\mathbf{S4sub}_K} \M \left (\f \land \bigwedge_{i = 1}^{n} \pm \F_{a_i} \chi_i \right ) \to \M\ff && \text{by \eqref{a:Fmono}, $\mathbf{S4}$}\\[2mm]
& \proves_{\mathbf{S4sub}_K} \M \f \land \bigwedge_{i = 1}^{n} \M^\ast \mathord{\pm} \F_{a_i} \chi_i \to \M \ff && \text{by \eqref{a:Fab}, $\mathbf{S4}$}\\[2mm]
& \proves_{\mathbf{S4sub}_K} \M \f \land \bigwedge_{i = 1}^{n} \mathord{\pm} \F_{a_i} \chi_i \to \M \ff && \text{by \eqref{au:0}, \eqref{au:Box}}
\end{align*}
Hence, for any $\Gamma \in X^\Lambda$, if $\Gamma \in |\M\f| \cap X^{\Lambda}$, then $\Gamma \in |\M\ff| \cap X^\Lambda$ as desired.  \qed

\subsection{Proof of Lemma \ref{lem:CanModelS4uK}}\label{app:CanModelS4uK}

The fact that $\bm{M}^{\Lambda}$ satisfies conditions \eqref{A(i)} through \eqref{A(v)} follows from the fact that $\bm{M}^{\mathbf{S4sub}_K}$ satisfies them. Take \eqref{A(iii)}, for instance. The assumption $a, b \in \ann^{\Lambda}(U)$ for $U \in \topo^{\Lambda}$ means that $a \in \ann (U_a, \Lambda)$ and $b \in \ann (U_b, \Lambda)$ for some $U_a, U_b \in \topo$ such that $(U_a \cup U_b) \cap X^{\Lambda} \subseteq U$ (Lemma \ref{lem:CompletS4uK-help}). We know that $a \oplus b \in \ann (U_a \cup U_b, \Lambda)$, and so $a \oplus b \in \ann^{\Lambda}(U)$ as desired. 

In a similar vein, the fact that $\bm{M}^{\Lambda}$ is strong and bounded follows from the fact that $\bm{M}^{\mathbf{S4sub}_K}$ is. To show that it is strong, assume that $a \oplus b \in \ann^{\Lambda}(U)$ for some $U \in \topo^{\Lambda}$. Hence, $a \oplus b \in \ann(V, \Lambda)$ for some $V \in \topo$ such that $V \cap X^{\Lambda} \subseteq U$. But we know that $a, b \in \ann (V, \Lambda)$, whence $a, b \in \ann^{\Lambda}(U)$. 
Boundedness is established in a similar fashion. \qed

\subsection{Proof of Lemma~\ref{lem:useful global}}\label{app:useful global}
Assume $|\f|^{\Lambda} \subseteq |\ff|$. By compactness, we infer
\[\proves_{\mathbf{S4sb}_{K\forall}} \f \land \bigwedge_{i = 1}^{n}   \chi_i \to \ff\] for some finite $\{ \A \chi_i\}_{i=1}^n \subseteq \Lambda$. 

In the following, let $\F_a^\ast = \F_\1$, $\Box^\ast = \Box$ and $\A^\ast = \A$. We continue as follows:
\begin{align*}
& \proves_{\mathbf{S4sb}_{K\forall}} \M \left (\f \land \bigwedge_{i = 1}^{n} \chi_r \right ) \to \M \ff && \text{by montonicity of $\M$}\\[2mm]
& \proves_{\mathbf{S4sb}_{K\forall}} \M \f \land \bigwedge_{i = 1}^{n} \M^\ast  \chi_r \to \M \ff && \text{by $\mathbf{S4}$, \eqref{a:Fab}, \eqref{ax:Anorm}}\\
& \proves_{\mathbf{S4sb}_{K\forall}} \M \f \land \bigwedge_{i = 1}^{n} \mathord \A \chi_r \to \M \ff && \text{by \eqref{ax:ABox}}
\end{align*}
Since, $\A \chi_i \in \Lambda$ for all $i \leq n$, by \eqref{ax:Atrans}, $\A\A \chi_i \in \Lambda$. Hence, by definition, for all $\Gamma \in X^\Lambda$, $\A\chi_i \in \Gamma$ for all $i \leq n$.  Therefore, if  $\Gamma \in |\M\f| \cap X^{\Lambda}$, then $\Gamma \in |\M\ff|\cap X^\Lambda$, as desired. \qed

\subsection{Proof of Lemma~\ref{lem:truth S4sbKforall}}\label{app:TruthLemmaAll}
The proof is by induction on the structure of $\chi$, where $\chi = p \in \Prop$ and the inductive cases of propositional connectives are straightforward. The inductive cases of $\Box$ and $\F_a$ are analogous to the corresponding proofs in Lemma \ref{lem:TruthLemmaS4uK}, except for using Lemma \ref{lem:useful global} in place of Lemma \ref{lem:useful}.

\emph{Case} $\chi =\A \f$: Let $\Gamma \in X^\Lambda$.  If $\A\f \in \Gamma$, then by \eqref{ax:Atrans}, $\A\A\f \in \Gamma$. By construction $X^\Lambda \subseteq |\A\f|$. Therefore, $\Gamma \in \llbracket \A\f \rrbracket_{\bm{M}^\Lambda}$. 

For the converse, suppose $\A\f \notin \Gamma$ but $\Gamma \in \llbracket \A\f \rrbracket_{\bm{M}^\Lambda}$. Then, $|\top|^\Lambda = X^\Lambda \subseteq |\f|^\Lambda$.  Note that by global necessitation, $|\A\top| =X$ which implies $|\A\top|^\Lambda =X^\Lambda$. Thus, by Lemma \ref{lem:useful global}, $|\A\top|^\Lambda =X^\Lambda \subseteq |\A\f|^\Lambda$. Therefore, $\Gamma \in |\A\f|^\Lambda$, a contradiction. \qed

\subsection{Proof of Theorem\ref{thm:characterize uniform}}\label{app:characterize uniform}
The soundness follows from Lemma \ref{lem:uniform define}. We prove completeness by the canonical model construction, analogous to the canonical model construction for $\mathbf{S4sb}_{K\forall}$, except for defining $X$ to be the set of all maximal $\mathbf{S4sub}_{K\forall}$ consistent theories. It is enough to show that the canonical model $\bm{M}^\Lambda$
constructed in this manner is a strong uniform bounded $K$-model. It is shown exactly as in the proof of Lemmas~\ref{lem:CanModelS4K} \& ~\ref{lem:CanModelS4uK} that this is a strong bounded $K$-model. By the  proof of Lemma \ref{lem:CompletS4uK-help}, for uniformity it suffices to show that $\F_a \f \in \Gamma$ iff $\F_a \f \in \Lambda$ for all $\Gamma \in X^\Lambda$, $\f \in  \mathfrak{L}_{K\forall}$, and $a \in K$. By axiom \eqref{ax:uniformity global}, if $\F_a \f \in \Lambda$, then $\A\F_a \f \in \Lambda$. Therefore, by construction $X^\Lambda \subseteq |\F_a \f|$, which implies $\F_a\f \in \Gamma$. Conversely, if $\F_a \f\not \in \Lambda$, then $\neg \F_a \f \in \Lambda$. By axiom \eqref{ax:uniformity global 2}, $\A\neg \F_a \f \in \Lambda$.  Therefore, by construction $X^\Lambda \subseteq |\neg \F_a \f|$, which implies $\F_a\f \not \in \Gamma$.\qed

\subsection{Proof of Lemma \ref{lem:belief knowledge} \label{app:belief knowledge}}
We only prove the result for $\Belop^a_b $. The result for $\Knop^a_b$ follows from it in a straightforward manner.  Since $\bm{M}$ is uniform, we use $\evid_a$ to denote the set $\evid_a(x)$ for any $x \in X$. 

\begin{tabular}{lll}
     & $\bm{M}, x \models \Bel^a_b \f$ & \\
     iff &  $\bm{M}, x \models \A\Diamond_b\Box_a \f $ & (Def. of~$\Bel^a_b $) \\
     iff & for all $x' \in X$, $\bm{M}, x' \models \Diamond_b\Box_a \f $ & (Def. of~$\A$)\\
     iff & for all $x' \in X$, and $U \in \evid_b$ 
     & (Def.~of $\Diamond_b$)\\ &  s.t.~if $x' \in U \in \evid_b$, then there  \\  
      &  exists 
      $x'' \in U$ s.t.~$\bm{M}, x'' \models \Box_a \f$& \\

      iff & for all  $U$ s.t. $\varnothing \neq U \in \evid_b$ \\ &
      there    exists 
      $x'' \in U$  \\&
      s.t.~$\bm{M}, x'' \models \Box_a \f$ \\
       iff & for all $U$ s.t. $\varnothing \neq U   \in \evid_b$, & (Def.~of $\Box_a$) \\ &  there  exist 
      $x'' \in U$   
      and  \\ & $U'\in \evid_a$
      s.t.~$x'' \in U' \subseteq |\f|$ \\
    iff & for all $U$ s.t. $\varnothing \neq U   \in \evid_b$, \\
      &  there exists $U' \in \evid_a$ s.t. \\
      &  $U \cap U' \neq \varnothing$ and   $U' \subseteq |\f|$.\\
\end{tabular}

Let $V = \bigcup_{\varnothing \neq U   \in \evid_a}U'$, where  for every $U$, $U'$ is as in the last statement of the above iff chain.  Then, we have $V \subseteq |\f|$, $V \in \evid_a$ and $V \cap U \neq \varnothing$ for all
non-empty sets  $U  \in \evid_b$. Therefore, $V$ is $b$-dense. Thus, by definition, $x \in  \Belop^a_b (\semM{\f})$.

Conversely, if  $x \in  \Belop^a_b (\semM{\f})$, then there exists $D \in \evid_a$ s.t.~$D \subseteq \semM{\f}$ and for all $U \in \evid_b$, $U \neq \varnothing$ implies  $D \cap U \neq \varnothing$.  Then, choosing $U'=D$ for any $U$ makes the last statement of the above iff chain true. Consequently, $\bm{M}, x \models \Bel^a_b \f$. \qed

\subsection{Proof of Proposition~\ref{prop:DisjointUnion}}
\label{app:DisjointUnion}
The proof is by induction on $\chi$, where the base  case $p \in\Prop$ and the inductive cases for propositional connectives are immediate. The case $\chi = \Box\f$ follows from $\Int^\topo(\semM{\f}) \cap X_i = \Int^{\topo_i}(\semM{\f} \cap X_i)$ and the induction hypothesis.

For the remaining case $\chi = \F_a\f$, take $x \in X_i$. 
If $x \in \sem{\F_a\f}_{\bm{M}_i}$, then there is $U_i \in \topo_i$ with $a \in \ann_i(U_i, x)$ and $U_i \subseteq \sem{\f}_{\bm{M}_i}$. Considering (up to inclusion) $U_i \in \topo$, we thus get $a \in \ann_i(U_i, x) = \ann(U_i, x)$, and by induction hypothesis, $U_i \subseteq \sem{\f}_{\bm{M}_i} = \sem{\f}_{\bm{M}} \cap X_i \subseteq \sem{\f}_{\bm{M}}$ shows $x \in \sem{\F_a\f}_{\bm{M}}$. 
Conversely, if $x \in \sem{\F_a\f}_{\bm{M}}$, then there exists some $U \in \topo$ with $a \in \ann_i(U, x)$ and $U \subseteq \sem{\f}_{\bm{M}}$. For $U \cap X_i \in \topo_i$ we have $a \in \ann(U, x) = \ann_i(U\cap X_i, x)$, and by the induction hypothesis, we have $U \cap X_i \subseteq \semM{\f} \cap X_i = \sem{\f}_{\bm{M}_i}$. Thus,  $x \in \sem{\F_a\f}_{\bm{M}_i}$. \qed

\subsection{Proof of Theorem \ref{thm:bisimulation}}
\label{app:Proof of Theorem bisimulation}

The proof is by induction on the complexity of the formula $\f$. The proof for the base case (propositional variables) and the inductive step for propositional connectives is trivial. We give the proof for modal operators $\Box$, $\F_a$, and $\A$ (in the case of global bisimulations).

(1) Suppose $\phi=\Box \psi$ for some formula $\psi$.
Then, $\bm{M}_1,  x_1 \models \Box \psi$ iff   $\exists U_1 \in \topo_1\colon x_1 \in U_1 \subseteq \llbracket \f\rrbracket_{\bm{M}_1}$. That is, $\exists U_1 \in \topo_1$ s.t.~$x_1 \in U_1$, and for all $y_1 \in U_1$,
 $\bm{M}_1,  y_1 \models  \psi$. By Definition~\ref{def:bisimulation}(ii), $\exists U_2 \in \topo_2$ s.t.~$x_2 \in U_2$ and for all $y_2 \in U_2$, there exists $y_1 \in U_1$ s.t.~$y_1 Z y_2$. By induction, this implies $\exists U_2 \in \topo_2$ s.t.~$x_2 \in U_2$ and for all $y_2 \in U_2$,  $\bm{M}_2,  y_2 \models  \psi$. Consequently, $\bm{M}_1,  y_1 \models \Box \psi$. The preservation of satisfaction in the opposite direction is shown analogously using Definition~\ref{def:bisimulation} (iii).

 (2) Suppose $\phi=\F_a \psi$ for some formula $\psi$.
Then, $\bm{M}_1,  x_1 \models \F_a \psi$ iff $\exists U_1 \in \topo_1\colon U_1 \subseteq \llbracket \f\rrbracket_{\bm{M}_1}$. That is, $\exists U_1 \in \topo_1$ s.t.~for all $y_1 \in U_1$
 $\bm{M}_1,  y_1 \models  \psi$. By Definition ~\ref{def:bisimulation} (iv), $\exists U_2 \in {\epsilon_a}_2(x_2)$ s.t.~for all $y_2 \in U_2$ there exists $y_1 \in U_1$ s.t.~$y_1 Z y_2$. By induction, this implies $\exists U_2 \in {\epsilon_a}_2(x_2)$ s.t.~for all $y_2 \in U_2$,  $\bm{M}_2,  y_2 \models  \psi$. Consequently, $\bm{M}_1,  y_1 \models \F_a \psi$. The preservation of satisfaction in the opposite direction is shown analogously using Definition~\ref{def:bisimulation} (v).

(3) Suppose $\phi=\A \psi$ for some formula $\psi$.
Then, $\bm{M}_1,  x_1 \models A \psi$ iff $\forall y_1 \in X_1\colon\bm{M}_1,  y_1 \models \psi$. By Definition ~\ref{def:bisimulation} (vii), for any point $y_2 \in X_2$, there exists $y_1 \in X_1$ s.t.~$y_1 Z y_2$. Since  $\bm{M}_1,  y_1 \models \psi$,  by induction 
$\bm{M}_2,  y_2 \models \psi$ for any $y_2 \in X_2$. Therefore,  $\bm{M}_2,  x_2 \models \A \psi$.  The preservation of satisfaction in the opposite direction is shown analogously using Definition~\ref{def:bisimulation} (vi).\qed

\subsection{Proof of Corollary \ref{cor:bisimul example}}\label{app:corr example}
 Consider the semiring  $K = \langle \mathbb{Q}_{\geq 0}^{\infty}, \min, +, \infty, 0 \rangle$. 

1. Let  $\bm{M}_1=\langle X_1,\topo_1,\ann_1,\allowbreak \val_1\rangle$ and $\bm{M}_2=\langle X_2,\topo_2,\ann_2,\val_2\rangle$ be  strong uniform bounded $K$-models defined as follows. Let $X_1=\{x_1\}$, $\topo_1=\{\varnothing, X_1\}$,
${\ann_1}(\varnothing,x_1)={\ann_1}(X_1,x_1)= \mathbb{Q}_{\geq 0}^{\infty}$, and for all $p \in \Prop$,  $\val_1(p)=X_1$ .
 Let $X_2=\{x_2,y_2\}$, $\topo_2=\{\varnothing, \{x_2\},  X_2\}$,
${\ann_2}(\varnothing,x)= {\ann_2}_K(\{x_2\},x)={\ann_2}(X_2,x)= \mathbb{Q}_{\geq 0}^{\infty}$ for all $x \in X_2$ and $\val_2(p)=\{x_2\}$ for all $p \in \Prop$.

Define $Z=\{(x_1,x_2)\}$.
It is straightforward to check that $Z$ is a bisimulation.
However, $\bm{M}_1,x_1 \models \A p$ but $\bm{M}_2,x_2 \not\models \A p$. Therefore, $\A$ is not definable by any formula in $\mathfrak{L}_{K}$.

2. Let  $\bm{M}_1=\langle X_1,\topo_1,\ann_1,\allowbreak \val_1\rangle$ and $\bm{M}_2=\langle X_2,\topo_2,\ann_2,\val_2\rangle$ be  strong uniform bounded $K$-models  defined as follows. Let $X_1=\{x_1,y_1,z_1\}$, $\topo_1=\{\varnothing, \allowbreak \{x_1,y_1\},\allowbreak X_1\}$, and for all $p \in \Prop$, $\val_1(p)=X_1$ . For all $x\in X_1$, ${\ann_1}(X_1,x)=[0,\infty]$, ${\ann_1}(\{x_1, y_1\},x)=(0,\infty]$,  and 
${\ann_1}(\varnothing,x)=\{1\}$.  Let $X_2=\{x_2,y_2\}$, $\topo_2=\{\varnothing, X_2\}$,  and  for all $p \in \Prop$, $\val_2(p)=X_2$. For all $x \in X_2$,
${\ann_2}(X_2,x)=[0,\infty]$, and ${\ann_2}(\varnothing,x)=\{\infty\}$. 

Set $Z=\{(x_1,x_2),(y_1,y_2),(z_1,y_2)\}$.
It is straightforward to check that $Z$ is a global bisimulation. However,  $\bigsqcup  {\ann_1}(U, x_1) \not \in  {\ann_1}(U,x_1)$ for  $U=\{x_1,y_1\}$,  but $\bigsqcup  {\ann_2}(U,x_2) \in {\ann_2}(U,x_2)$  for all $U \in \topo_2$.  Therefore, by Theorem~\ref{thm:bisimulation},  $\forall U \in \topo\colon\bigsqcup \ann_x(U) \in \ann_x(U)$ is not (locally) definable in $\mathbf{S4sub}_{K\forall}$. \qed

\end{document}